\theoremstyle{plain} 
\newtheorem{theorem}{Theorem}[section]
\newtheorem{proposition}{Proposition}[section]
\theoremstyle{definition}
\newtheorem{definition}{Definition}[section]
\theoremstyle{remark}
\newtheorem{remark}{Remark}[section]
\title[Current algebra representations of integrable  factorized Schr\"{o}dinger type models]%
{The current algebra representations of quantum many-particle Schr\"{o}dinger type Hamiltonian models, their factorized structure and integrability%
\thanks{To  memory of  Nikolai N. Bogolubov, a mathematical physics giant of the XX-th century --- on his 110-th Birthday Jubilee.}}
\author[D. Prorok, A.K. Prykarpatski]{D. Prorok\refaddr{label1},
        A.K. Prykarpatski\refaddr{label2}}
\date{Received July 16, 2019, in final form August 6, 2019}
\DeclareMathOperator{\Diff}{Diff}
\DeclareMathOperator{\Dom}{Dom}
\DeclareMathOperator{\per}{per}
\DeclareMathOperator{\End}{End}
\DeclareMathOperator{\tr}{tr}
\DeclareMathOperator{\reg}{reg}
\DeclareMathOperator{\supp}{supp} 
\begin{document}

\maketitle

\begin{abstract}
There is developed a current algebra representation scheme for
reconstructing algebraically factorized quantum Hamiltonian and symmetry
operators in the Fock type space and its application to quantum Hamiltonian
and symmetry operators in case of quantum integrable spatially many- and
one-dimensional dynamical systems. As examples, we have studied in detail the
factorized structure of Hamiltonian operators, describing such quantum
integrable spatially many- and one-dimensional models as generalized
oscillatory, Calogero-Sutherland, Coulomb type and nonlinear Schr\"{o}dinger
dynamical systems of spinless bose-particles.

\keywords Fock space, current algebra representations, Hamiltonian reconstruction, Bogolubov generating functional, Calogero-Moser-Sutherlan model, quantum integrability, quantum symmetries
\pacs  11.10.Ef, 11.15.Kc, 11.10.-z, 11.15.-q, 11.10.Wx, 05.30.-d
\end{abstract}

\section{Introduction}

Given any classical Hamiltonian many-particle non-relativistic system with
the standard cotangent phase space $T^{\ast }(\mathbb{R}^{3})^{\otimes N},$
where the quantity of particles  $N\in \mathbb{Z}_{+}$ is fixed, there
is a standard recipe for producing a quantum system by a method known as
\textquotedblleft\textit{canonical quantization}\textquotedblright,
assigning to the system a suitably constructed \cite{Takh} self-adjoint
Hamiltonian operator, acting in the related Hilbert space $\mathcal{H}=L_{2}(%
\mathbb{R}^{3\otimes N};\mathbb{C}).$ In case when all the particles are
equivalent to each other and the particle number $N\in \mathbb{Z}_{+}$ can
vary within the system, there is another recipe applied to this system,
called the ``\textit{second quantization}'', producing the corresponding \cite%
{Bere,BoBo} quantum self-adjoint Hamiltonian operator, acting already in a
specially constructed Fock space $\Phi _\text{F},$ whose basis vectors are
generated by means of actions of additional so-called ``\textit{creation}''
and ``\textit{annihilation}'' operators on a uniquely defined ``\textit{vacuum}'' 
zero-particle vector state $|0)\in \Phi _\text{F},$ whose structure in most
practical cases is hidden. Even though this method appeared to be very
effective for the study of numerous quantum many-particle Hamiltonian systems, some
important problems related to the \textit{a priori} non-self-adjointness
of the ``\textit{creation}'' and ``\textit{annihilation}'' operators in the
Fock space $\Phi _\text{F},$ urged researchers to suggest a dual
quantization scheme, based strictly on physically ``\textit{observable''
operators in a suitably constructed cyclic Hilbert space} $\Phi ,$ generated
by means of the so-called ``groundstate'' vector $|\Omega )\in \Phi ,$ and
being completely different from the Fock space $\Phi _\text{F}.$

Several authors have been developing this idea of quantizing nonrelativistic
models, making use of the \textit{local current algebra} operators \cite%
{Aref,Gold,Gold-1,GoGrPoSh,GoMeSh-1,GoMeSh-2,PaScWr} as the basic dynamical
variables, that is the density $\rho (x):\Phi \rightarrow \Phi $ and current
$J(x):\Phi \rightarrow \Phi $ operators at spatial point $x\in \mathbb{R}%
^{3},$ representing, as is well know, generators of the fundamental physical
symmetry group $\Diff(\mathbb{R}^{3})\ltimes \mathcal{S}(\mathbb{R}^{3};%
\mathbb{R})$, the semidirect product of the diffeomorphism group $\Diff(\mathbb{R}^{3})$ of the space $\mathbb{R}^{3}$ and the Schwarz space of
smooth real valued functions on $\mathbb{R}^{3}.$ Moreover, the
corresponding quantum Hamiltonian operators of the Schr\"{o}dinger type in
the Hilbert space $\Phi ,$ as it appeared to be very surprising,  possess
 a very nice factorized structure, completely determined by this
groundstate vector $|\Omega )\in \Phi .$ This fact posed a very interesting
and important problem of studying the related mathematical structure of
these factorized operators and the correspondence to the  
classical Hamiltonian many-particle non-relativistic systems generating them, specified by
some kinetic and inter-particle potential energy.

Analytical studies in modern mathematical physics are strongly based on the
exactly solvable physical models which are of great help in 
understanding  their mathematical and frequently hidden physical nature.
Especially, the solvable models are of great importance in quantum many-particle physics, amongst which one can single out  the oscillatory
systems and Coulomb systems, modelling phenomena in plasma physics, the
well known Calogero-Moser and Calogero-Moser-Sutherland models, describing a
system of many particles on an axis, interacting pair-wise through long
range potentials, modelling both some quantum-gravity\ and fractional
statistics effects. In this work we developed investigations of local
quantum current algebra symmetry representations in suitably renormalized
representation Hilbert spaces, suggested and devised before by G.A. Goldin
with his collaborators, having further applied their results to
constructing the related factorized operator representations for
secondly-quantized many-particle integrable Hamiltonian systems. The main
technical ingredient of the current algebra symmetry\ representation
approach consists\ in the weak equivalence of the initial many-particle
quantum Hamiltonian operator to a suitably constructed quantum Hamiltonian
operator in the factorized form, strictly depending only on its ground state
vector. The latter makes it possible to reconstruct the initial quantum
Hamiltonian operator in the case of its strong equivalence to the related
factorized Hamiltonian operator form, thereby constructing,\ as a
by-product, the corresponding\ $N$-particle groundstate vector for arbitrary
$N\in \mathbb{Z}_{+}.$ Being uniquely defined by means of the Bethe
groundstate vector representation in the Hilbert space, the analyzed
factorized operator structure\ of quantum completely integrable
many-particle Hamiltonian systems on the axis proves to be closely related
to their quantum integrability by means of the quantum inverse scattering
transform. As examples we have studied in detail the factorized structure
of Hamiltonian operators, describing such quantum integrable spatially
many- and one-dimensional models as generalized oscillatory,
Calogero-Moser-Sutherland and nonlinear Schr\"{o}dinger dynamical systems of
spinless bose-particles.

\section{\label{Sec_FQ1}Fock type Hilbert space, nonrelativistic quantum
current algebra and its representations}

\subsection{Preliminaries}

Let $\Phi $ be a separable Hilbert space, $F$ be a topological real linear
space and $\mathcal{A}:=\left\{ A(f):f\in F\right\} $ a family of commuting
self-adjoint operators in $\Phi $ (i.e., these operators commute in the sense
of their resolutions of the identity) with dense in $\Phi $ domain $\Dom$
$A(f):=D_{A(f)},f\in F.$ Consider the Gelfand rigging \cite%
{Bere-1,BeSh,GeVi} of the Hilbert space ${\Phi ,}$ i.e., a chain
\begin{equation}
\mathcal{D}\subset {\Phi }_{+}\subset {\Phi }\subset {\Phi }_{-}\subset
\mathcal{D}^{\prime },  \label{FQeq2.1}
\end{equation}%
in which ${\Phi }_{+}$ is a Hilbert space, topologically (densely and
continuously) and quasi-nucleus (the inclusion operator $i:{\Phi }%
_{+}\rightarrow {\Phi }$ is of the Hilbert-Schmidt type) embedded into
${\Phi },$ the space $\ {\Phi }_{-}$ is the dual to ${\Phi }_{+}$ as the
completion of functionals on ${\Phi }_{+}$ with respect to the norm $%
||f||_{-}:=\sup_{||u||_{+}=1} |(f|u)_{\Phi }|,$ $f\in {\Phi ,}$
a linear dense in $\mathcal{\Phi }_{+}$ topological space $\mathcal{%
D\subseteq }$ $\mathcal{\Phi }_{+}$ is such that $\mathcal{D\subset }$ $%
D_{A(f)}\subset {\Phi }\ $\ and the mapping $A(f):$ $\mathcal{D}$ $\mathcal{%
\rightarrow }$ ${\Phi }_{+}$ is continuous for any $f\in F.$ Then,
the following structural theorem \cite%
{Bere-1,BeKo,BeSh,BoPr,Gold,GoGrPoSh,PrBoGoTa,ReSi-2} holds.

\begin{theorem}
\label{FQth2.1} Assume that the family of operators $\mathcal{A}$ 
satisfies the following conditions:

a) for $ A(f),\;f\in F,$ the closure of the operator $\overline{A(f)}
$ in $ {\Phi }$ coincides with $A(f)$ for any $f\in F,$ that is $\overline{A(f)}=A(f)$ on domain $D_{A(f)}$ in ${\Phi };$

b) the range $A(f) \subset {\Phi }$ for any $f\in F$;

c) for every $|\psi )\in \mathcal{D}$ the mapping $F\ni f\rightarrow
A(f)|\psi )\in {\Phi }_{+}$ is linear and continuous;

d) there exists a strong cyclic vector $|\Omega )\in \bigcap_{f\in
F}D_{A(f)},$ such that the set of all vectors $|\Omega )$ and $%
\prod_{j=1}^{n}A(f_{j})|\Omega ),$ $n\in \mathbb{Z}_{+},$ is total in ${\Phi
}_{+}$ (i.e., their linear hull is dense in ${\Phi }_{+}$).

Then, there exists a probability measure $\mu $ on $(F^{\prime },C_{\sigma
}(F^{\prime }))$, where $F^{\prime }$ is the dual of $F$ and $C_{\sigma
}(F^{\prime })$ is the $\sigma $-algebra generated by cylinder sets in $%
F^{\prime }$ such that, for $\mu$-almost every $\eta \in F^{\prime }$ there
is a generalized joint eigenvector $\omega (\eta )\in {\Phi }_{-}$ of the
family $\mathcal{A},$ corresponding to the joint eigenvalue $\eta \in
F^{\prime },$ that is for any $\varphi \in \mathcal{D}\subset {\Phi }_{+}$
\begin{equation}
(\omega (\eta )|A(f)\varphi )_{\Phi _{-}\times \Phi _{+}}=\eta (f)(\omega
(\eta )|\varphi )_{\Phi _{-}\times \Phi _{+}}  \label{FQeq2.1a}
\end{equation}%
with $\eta (f)\in \mathbb{R}$ denoting here the pairing between $F$ and $%
F^{\prime }.$

The mapping
\begin{equation}
\mathcal{D}\ni |\varphi )\rightarrow (\omega (\eta )|\varphi )_{\Phi
_{-}\times \Phi _{+}}:=\varphi (\eta )\in \mathbb{C}  \label{FQeq2.1b}
\end{equation}%
for any $\eta \in F^{\prime }$ can be continuously extended to a unitary
surjective operator $\mathcal{F}_{\eta }:{\Phi }_{+}\rightarrow
L_{2}^{(\mu )}(F^{\prime };\mathbb{C}),$ where
\begin{equation}
\mathcal{F}_{\eta }\text{ }|\varphi ):=\varphi (\eta )  \label{FQeq2.1c}
\end{equation}%
for any $\eta \in F^{\prime }$ is a generalized Fourier transform,
corresponding to the family $\mathcal{A}.$ Moreover, the image of the
operator $A(f),$ $f\in F^{\prime },$ under the $\mathcal{F}_{\eta }$-mapping is the operator of multiplication by the function $F^{\prime }\ni
\eta \rightarrow \eta (f)\in \mathbb{C}.$
\end{theorem}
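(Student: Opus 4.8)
The plan is to follow the classical strategy behind the projection (nuclear) spectral theorem for families of commuting self-adjoint operators in a rigged Hilbert space, in the spirit of the cited works of Berezansky and Gelfand--Kostyuchenko. First I would invoke the joint spectral theorem for the commuting family $\mathcal{A}=\{A(f):f\in F\}$: condition a) guarantees that each $A(f)$ is genuinely self-adjoint (equal to its closure on $D_{A(f)}$), and commutativity of their resolutions of the identity yields a single joint resolution of the identity $E(\cdot)$ on some measurable space. The linearity and continuity of $F\ni f\mapsto A(f)|\psi)$ from condition c) force the joint ``eigenvalue'' assignment $f\mapsto\eta(f)$ to depend linearly and continuously on $f$, so that the joint spectrum can be realized inside the dual $F'$ endowed with the cylinder $\sigma$-algebra $C_{\sigma}(F')$; transporting $E(\cdot)$ along this realization produces a projection-valued measure on $(F',C_{\sigma}(F'))$.

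Next, using the strong cyclic vector $|\Omega)$ from condition d), which we normalize so that $\|\Omega\|_{\Phi}=1$, I would define the scalar spectral measure $\mu(\Delta):=(E(\Delta)\Omega\,|\,\Omega)_{\Phi}$ on $(F',C_{\sigma}(F'))$; it is manifestly a probability measure. Cyclicity then yields, in the standard way, a unitary isomorphism $U:\Phi\to L_{2}^{(\mu)}(F';\mathbb{C})$ under which $|\Omega)$ is sent to the constant function $1$ and each $A(f)$ is sent to the operator of multiplication by $\eta\mapsto\eta(f)$; the restriction of $U$ to $\mathcal{D}$ will be the candidate for $\mathcal{F}_{\eta}$.

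The substantive step, and the one I expect to be the main obstacle, is the production of the generalized joint eigenvectors $\omega(\eta)\in\Phi_{-}$ for $\mu$-almost every $\eta$. Here the quasi-nuclearity of the rigging \eqref{FQeq2.1} is essential: since the embedding $\Phi_{+}\hookrightarrow\Phi$ is Hilbert--Schmidt, the operator-valued measure $\Delta\mapsto E(\Delta)$, read as taking values in Hilbert--Schmidt (hence nuclear) operators from $\Phi_{+}$ into $\Phi_{-}$, is absolutely continuous with respect to $\mu$ and admits a Radon--Nikodym density $P(\eta):\Phi_{+}\to\Phi_{-}$, nonnegative for $\mu$-a.e. $\eta$, with $E(\Delta)=\int_{\Delta}P(\eta)\,\mathrm{d}\mu(\eta)$ in the appropriate weak sense. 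Cyclicity makes $P(\eta)$ of rank one, $P(\eta)=|\omega(\eta))(\omega(\eta)|$ with $\omega(\eta)\in\Phi_{-}$, and one verifies that for $\varphi\in\mathcal{D}$ the pairing $(\omega(\eta)|\varphi)_{\Phi_{-}\times\Phi_{+}}$ agrees $\mu$-a.e. with $(U\varphi)(\eta)$, i.e.\ with $\varphi(\eta)$ in the notation \eqref{FQeq2.1b}. Establishing the $\mu$-measurability of $\eta\mapsto P(\eta)$, the $\mu$-a.e.\ existence of the density, and the estimate that makes its values land in $\Phi_{-}$ are the delicate points, all resting on the Hilbert--Schmidt bound for the rigging together with conditions b)--c), which keep $A(f)\mathcal{D}$ inside $\Phi_{+}$.

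Finally I would derive the eigenvector identity \eqref{FQeq2.1a}: applying the functional-calculus relation $E(\Delta)A(f)=\int_{\Delta}\eta(f)\,\mathrm{d}E(\eta)$ and differentiating with respect to $\mu$ gives $P(\eta)A(f)=\eta(f)P(\eta)$ for $\mu$-a.e. $\eta$, hence $(\omega(\eta)|A(f)\varphi)_{\Phi_{-}\times\Phi_{+}}=\eta(f)(\omega(\eta)|\varphi)_{\Phi_{-}\times\Phi_{+}}$ for every $\varphi\in\mathcal{D}$. The unitarity and surjectivity of $\mathcal{F}_{\eta}:\Phi_{+}\to L_{2}^{(\mu)}(F';\mathbb{C})$ and the fact that it conjugates $A(f)$ into multiplication by $\eta(f)$ then follow from the corresponding properties of $U$ restricted to $\Phi_{+}$, the totality assumption in condition d) ensuring that the image is dense and therefore all of $L_{2}^{(\mu)}(F';\mathbb{C})$.
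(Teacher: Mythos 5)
The paper does not actually prove Theorem~\ref{FQth2.1}: it is quoted as a known structural result with references to the Berezansky/Gelfand--Vilenkin projection (nuclear) spectral theorem, and your sketch follows essentially that same classical route --- joint resolution of the identity for the commuting family, the scalar measure $\mu(\Delta)=(E(\Delta)\Omega\,|\,\Omega)_{\Phi}$, differentiation of the quasi-nuclear operator-valued measure to obtain $\mu$-a.e.\ rank-one densities $P(\eta)=|\omega(\eta))(\omega(\eta)|$ with values in $\Phi_{-}$, and transfer of $A(f)$ to multiplication by $\eta(f)$. The delicate points you flag (absolute continuity of $E$ with respect to $\mu$, which follows from strong cyclicity of $|\Omega)$ since $E(\Delta)\Omega=0$ forces $E(\Delta)=0$ on the total set of vectors $\prod_{j}A(f_{j})|\Omega)$, together with measurability and $\mu$-a.e.\ existence of the Radon--Nikodym density granted by the Hilbert--Schmidt embedding) are exactly the technical core carried out in the cited sources, so your outline is correct and coincides with the standard proof the paper relies on.
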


Now let us assume that a Hilbert space $\Phi :=\Phi _\text{F}$ possesses the
standard canonical Fock space structure \cite%
{Bere,BeSh,BlPrSa,BoBo,FaYa,GoGrPoSh,PrMy,Takh}, that is
\begin{equation}
 {\Phi }_\text{F}=\oplus _{n\in \mathbb{Z}_{+}} {\Phi }_{(s)}^{\otimes n},
\label{FQeq2.1d}
\end{equation}%
where subspaces ${\Phi }_{(s)}^{\otimes n},$ $n\in \mathbb{Z}_{+}$,
are the symmetrized tensor products of a Hilbert space $\mathcal{H}\simeq
L_{2}^{(s)}(\mathbb{R}^{m};\mathbb{C}).$ If a vector $\varphi :=(\varphi
_{0},\varphi _{1},\ldots,\varphi _{n},\ldots)\in \Phi _\text{F},$ its norm
\begin{equation}
\Vert \varphi \Vert _{\Phi }:=\left( \sum_{n\in \mathbb{Z}_{+}}\Vert \varphi
_{n}\Vert _{n}^{2}\right) ^{1/2},  \label{FQeq2.2}
\end{equation}%
where $\varphi _{n}\in \Phi _{(s)}^{\otimes n}\simeq L_{2}^{(s)}((\mathbb{R}%
^{m})^{\otimes n};\mathbb{C})$ and $\parallel \ldots\parallel _{n}$ is the
corresponding norm in {$\Phi $}$_{(s)}^{\otimes n}$ for all $n\in \mathbb{Z}%
_{+}.$ Note here that concerning the rigging structure (\ref{FQeq2.1}),
there holds the corresponding rigging for the Hilbert spaces {$\Phi $}$%
_{(s)}^{\otimes n},$ $n\in \mathbb{Z}_{+}$, that is
\begin{equation}
\mathcal{D}_{(s)}^{n}\subset \Phi _{(s),+}^{\otimes n}\subset \Phi
_{(s)}^{\otimes n}\subset \Phi _{(s),-}^{\otimes n}  \label{FQeq2.3}
\end{equation}%
with some suitably chosen dense and separable topological spaces of
symmetric functions $\mathcal{D}_{(s)}^{n},$ $n\in \mathbb{Z}_{+}.$
Concerning expansion (\ref{FQeq2.1d}) we obtain by means of projective and
inductive limits \cite{Bere-1,Bere,BeKo,BeSh} the quasi-nucleus rigging of
the Fock space $\Phi $ in the form (\ref{FQeq2.1}).

Consider now any basis vector $|(\alpha )_{n})\in {\Phi }_{(s)}^{\otimes
n},$ $n\in \mathbb{Z}_{+},$ which can be written \cite%
{BeSh,BoBo,Dira,Fock,PrTaBo} in the following canonical Dirac ket-form:
\begin{equation}
|(\alpha )_{n}):=|\alpha _{1},\alpha _{2},\ldots,\alpha _{n}),  \label{FQeq2.4}
\end{equation}%
where, by definition, \
\begin{equation}
|\alpha _{1},\alpha _{2},\ldots,\alpha _{n}):=\frac{1}{\sqrt{n!}}\sum_{\sigma
\in S_{n}}|\alpha _{\sigma (1)})\otimes |\alpha _{\sigma (2)})\ldots|\alpha
_{\sigma (n)})  \label{FQeq2.5}
\end{equation}%
and vectors $ |\alpha _{j})\in $ $\mathcal{H}_{+}\subset $ {$\Phi $}$%
_{(s)}^{\otimes 1}(\mathbb{R}^{m};\mathbb{C})\simeq \mathcal{H},$ $j\in
\mathbb{Z}_{+},$ are bi-orthogonal to each other, that is $(\alpha
_{k}|\alpha _{j})_{\mathcal{H}}=\delta _{k,j}$ for any $k,j\in \mathbb{Z}%
_{+}.$ The corresponding scalar product of base vectors as (\ref{FQeq2.5})
is given as follows:
\begin{align}
((\beta )_{n}|(\alpha )_{n}): &=(\beta _{n},\beta _{n-1},\ldots,\beta _{2},\beta
_{1}|\alpha _{1},\alpha _{2},\ldots,\alpha _{n-1},\alpha _{n}) \nonumber\\
&=\sum_{\sigma \in S_{n}}(\beta _{1}|\alpha _{\sigma (1)})_{\mathcal{H}%
}\ldots(\beta _{n}|\alpha _{\sigma (n)})_{\mathcal{H}}:=\per\{(\beta _{i}|\alpha
_{j})_{\mathcal{H}}\}_{i,j=\overline{1,n}}\,,%
\label{FQeq2.6}
\end{align}%
where ``$\per$'' denotes the permanent of
matrix and $(\cdot |\cdot )_{\mathcal{H}}$ is the corresponding scalar
product in the Hilbert space $\mathcal{H}.$ Based now on the representation (%
\ref{FQeq2.4}), one can define an operator $a^{+}(\alpha ):{\Phi }_{(s)}^{\otimes n}\rightarrow {\Phi }_{(s)}^{\otimes (n+1)}$ for any
$|\alpha \rangle \in \mathcal{H}_{-}$ as follows:
\begin{equation}
a^{+}(\alpha )|\alpha _{1},\alpha _{2},\ldots,\alpha _{n}):=|\alpha ,\alpha
_{1},\alpha _{2},\ldots,\alpha _{n}),  \label{FQeq2.7}
\end{equation}%
which is called the ``\textit{creation}'' operator in the Fock space $\Phi
_\text{F}.$ The adjoint operator $a(\beta ):=(a^{+}(\beta ))^{\ast }:\Phi _{(s)}^{\otimes (n+1)}\rightarrow {\Phi }_{(s)}^{\otimes n}$ with
respect to the Fock space $\Phi _\text{F}$ (\ref{FQeq2.1}) for any $|\beta
\rangle \in H_{-},$ called the ``\textit{annihilation}'' operator, acts as
follows:
\begin{equation}
a(\beta )|\alpha _{1},\alpha _{2},\ldots,\alpha _{n+1}):=\sum_{\sigma \in
S_{n}}(\beta |\alpha _{j})_{\mathcal{H}}|\alpha _{1},\alpha _{2},\ldots,\alpha
_{j-1},\hat{\alpha}_{j},\alpha _{j+1},\ldots,\alpha _{n+1}),  \label{FQeq2.8}
\end{equation}%
where the ``\textit{hat}'' over a vector denotes that it should be omitted from the
sequence.

It is easy to check that the commutator relationship
\begin{equation}
\lbrack a(\alpha ),a^{+}(\beta )]=(\alpha |\beta )_{\mathcal{H}}
\label{FQeq2.9}
\end{equation}%
holds for any vectors $|\alpha )\in \mathcal{H}$ and $|\beta )\in \mathcal{H}%
.$ Expression (\ref{FQeq2.9}), owing to the rigged structure (\ref{FQeq2.1}%
), can be naturally extended to the general case, when vectors $ |\alpha )$
and $|\beta )\in \mathcal{H}_{-},$ conserving its form. In particular,
taking $|\alpha ):=|\alpha (x))=\frac{1}{\sqrt{2\piup }}\re^{\ri\langle \lambda
|x\rangle }\in H_{-}:=L_{2,-}(\mathbb{R}^{m};\mathbb{C})$ for any $x\in
\mathbb{R}^{m},$ one easily gets from (\ref{FQeq2.9}) that
\begin{equation}
\lbrack a(x),a^{+}(y)]=\delta (x-y),  \label{FQeq2.10}
\end{equation}%
where we put, by definition, $\langle\cdot |\cdot \rangle$ the usual scalar product in
the $m$-dimensional Euclidean space $(\mathbb{R}^{m};\langle\cdot |\cdot \rangle),$ $%
a^{+}(x):=a^{+}(\alpha (x))$ and $a(y):=a(\alpha (y))$ for all $x,y\in
\mathbb{R}^{m}$ and denoted by $\delta (\cdot )$ the classical Dirac
delta-function.

The above construction makes it possible to observe easily that there exists
the unique vacuum vector $|0)\in \mathcal{\Phi }_{(s)}^{\otimes n}|_{n=0},$
 such that for any $x\in \mathbb{R}^{m}$
\begin{equation}
a(x)|0)=0,  \label{FQeq2.11}
\end{equation}%
and the set of vectors
\begin{equation}
\left( \prod_{j=1}^{n}a^{+}(x_{j})\right) |0)\in {\Phi }_{(s)}^{\otimes n}
\label{FQeq2.12}
\end{equation}%
is total in ${\Phi }_{(s)}^{\otimes n},$ that is their linear integral hull
over the functional spaces ${\Phi }_{(s)}^{\otimes n}$ is dense in the
Hilbert space ${\Phi }_{(s)}^{\otimes n}$ for every $n\in \mathbb{Z}_{+}.$
This means that for any vector $\varphi \in \Phi _\text{F}$, the following
canonical representation
\begin{equation}
\varphi =\sum\limits_{n\in \mathbb{Z}_{+}}^{\oplus }\frac{1}{\sqrt{n!}}%
\int_{(\mathbb{R}^{m})^{n}}\varphi
_{n}(x_{1},\ldots,x_{n})a^{+}(x_{1})a^{+}(x_{2})\ldots a^{+}(x_{n})|0)
\label{FQeq2.13}
\end{equation}%
holds with the Fourier type coefficients $\varphi _{n}\in \Phi
_{(s)}^{\otimes n}$ for all $n\in \mathbb{Z}_{+}$ with $\Phi _{(s)}^{\otimes
1}\simeq \mathcal{H}={L}_{2}(\mathbb{R}^{m};\mathbb{C}).$ The latter is
naturally endowed with the Gelfand type quasi-nucleus rigging, dual to
\begin{equation}
\mathcal{H}_{+}\subset \mathcal{H}\subset \mathcal{H}_{-}\,,  \label{FQeq2.14}
\end{equation}%
making it possible to construct a quasi-nucleus rigging of the dual Fock
space $ {\Phi }_\text{F}:=\oplus _{n\in \mathbb{Z}_{+}} {\Phi }_{(s)}^{\otimes
n}.$ Thereby, the chain (\ref{FQeq2.14}) generates the dual Fock space
quasi-nucleolus rigging
\begin{equation}
\mathcal{D}\subset {\Phi }_{F,+}\subset \Phi _\text{F}\subset \Phi
_{F,-}\subset \mathcal{D}^{\prime }  \label{FQeq2.15}
\end{equation}%
with respect to the Fock space $ \Phi _\text{F},$ easily following from (\ref%
{FQeq2.1}) and (\ref{FQeq2.14}).

Construct now the following self-adjoint operator $ \rho (x):\Phi
_\text{F}\rightarrow \Phi _\text{F}$  as
\begin{equation}
\rho (x):=a^{+}(x)a(x),  \label{FQeq2.16}
\end{equation}%
called the density operator at a point $x\in \mathbb{R}^{m},$ satisfying the
commutation properties:
\begin{equation}
\begin{array}{c}
\lbrack \rho (x),\rho (y)]=0, \\[5pt]
\lbrack \rho (x),a(y)]=-a(y)\delta (x-y), \\[5pt]
\lbrack \rho (x),a^{+}(y)]=a^{+}(y)\delta (x-y)%
\end{array}
\label{FQeq2.17}
\end{equation}%
for any $x,y\in \mathbb{R}^{m}.$

Now, if we construct the following self-adjoint family $\mathcal{R}:=\big\{
\int_{\mathbb{R}^{m}}\rho (x)f(x)\rd x:f\in F\big\} $ of linear operators in
the Hilbert space $\Phi ,$ where $F$ $:=\mathcal{S}(\mathbb{R}^{m};\mathbb{R%
})$ is the Schwartz functional space dense in $H,$ one can derive, making
use of theorem \ref{FQth2.1}, that there exists the generalized Fourier
transform (\ref{FQeq2.1c}), such that
\begin{equation}
{\Phi }=L_{2}^{(\mu )}(F^{\prime };\mathbb{C})\simeq \int_{F^{\prime
}}^{\oplus }\Phi _{\eta }\rd\mu (\eta )  \label{FQeq2.17a}
\end{equation}%
for some Hilbert space sets $\Phi _{\eta },$ $\eta \in F^{\prime },$ and a
suitable measure $\mu $ on $F^{\prime },$ with respect to which the
corresponding joint eigenvector $\omega (\eta )\in \Phi _{-}$ for any $\eta
\in F^{\prime }$ generates the Fourier transformed family $ \left\{ \eta
(f)\in \mathbb{R}:f\in F\right\}.$ Moreover, if $\dim \Phi _{\eta }=1$
for all $ \eta \in F^{\prime },$ the Fourier transformed eigenvector $%
\omega (\eta ):=\Omega (\eta )=1$ for all $\eta \in F^{\prime }.$

Now, we  consider the family of self-adjoint operators $\rho (f):\Phi
_{\eta }\rightarrow \Phi _{\eta },f\in F,$ as generating a unitary family $%
\mathcal{U}:=\left\{ \mathrm{U}(f):f\in F\right\} ,$ where the operator
\begin{equation}
\mathrm{U}(f):=\exp [\ri\rho (f)]  \label{FQeq2.18}
\end{equation}%
is unitary, satisfying the abelian commutation condition
\begin{equation}
\mathrm{U}(f_{1})\mathrm{U}(f_{2})=\mathrm{U}(f_{1}+f_{2})  \label{FQeq2.19}
\end{equation}%
for any $f_{1},f_{2}\in F.$ Since, in general, the unitary family $%
\mathcal{U}$ is defined on the Hilbert space $\Phi _{\eta },$ the important
problem of describing its cyclic representation spaces arises, within which
the factorization
\begin{equation}
\rho (f)=\int_{\mathbb{R}^{m}}a^{+}(x)a(x)f(x)\rd x  \label{FQeq2.20}
\end{equation}%
jointly with relationships (\ref{FQeq2.17}) hold for any $f\in F.$ This
problem can be treated using mathematical tools devised both within the
representation theory of $\mathbb{C}^{\ast}$-algebras \cite%
{BeSh,Dira,Gold,GoGrPoSh} and the Gelfand-Vilenkin \cite{GeVi} approach.
Below we will describe the main features of the Gelfand-Vilenkin formalism,
being much more suitable for the task, providing a reasonably unified
framework of constructing the corresponding representations. The next
definitions will be used in our construction.

\begin{definition}
\label{FQdef2.1} Let $F$ be a locally convex topological vector space, $%
F_{0}\subset F$ be a finite dimensional subspace of $F.$ Let $%
F^{0}\subseteq F^{\prime }$ be defined by
\begin{equation}
F^{0}:=\left\{ \sigma \in F^{\prime }:\sigma |_{F_{0}}=0\right\} ,
\label{FQeq2.21}
\end{equation}%
and called the annihilator of $F_{0}$.
\end{definition}

The quotient space $F^{\prime \, 0}:=F^{\prime }/F^{0}$ may be, evidently,
identified with $F_{0}^{\prime }\subset F^{\prime },$ the adjoint space of $%
F_{0}.$

\begin{definition}
\label{FQdef2.2} Let $Q\subseteq F^{\prime\, 0};$ then, the subset
\begin{equation}
X_{F^{0}}^{(Q)}:=\left\{ \sigma \in F^{\prime }:\sigma +F^{0}\subset
Q\right\}  \label{FQeq2.22}
\end{equation}%
is called the cylinder set with the base $Q$ and the generating subspace $%
F^{0}.$
\end{definition}

\begin{definition}
\label{FQdef2.3} Let $n=\dim F_{0}=\dim F_{0}^{\prime }=\dim F^{\prime \,0}.$
One says that a cylinder set $X^{(Q)}$ has Borel base, if $Q$ is a Borel
set, when regarded as a subset of $\mathbb{R}^{m}.$
\end{definition}

The family of cylinder sets with Borel base forms an algebra of sets, which
is a key stone for defining measurable sets in $F^{\prime}$ and the corresponding
measures on $F^{\prime }.$

\begin{definition}
\label{FQdef2.4} The measurable sets in $F^{\prime }$ are the elements of
the $\sigma $-algebra generated by the cylinder sets with Borel base.
\end{definition}

\begin{definition}
\label{FQdef2.5} A cylindrical measure in $F^{\prime }$ is a non-negative $%
\sigma $-pre-additive function $\mu $ defined on the algebra of cylinder
sets with Borel base and satisfying the conditions $0\leqslant \mu (X)\leqslant 1$ for
any $X,$ $\mu (F^{\prime })=1$ and $\mu ( \coprod_{j\in \mathbb{Z}%
_{+}}X_{j}) =\sum_{j\in \mathbb{Z}_{+}}\mu (X_{j}),$ if all sets $%
X_{j}\subset F^{\prime },$ $j\in \mathbb{Z}_{+},$ have a common generating
subspace $F_{0}\subset F$.
\end{definition}

\begin{definition}
\label{FQdef2.6} A cylindrical measure $\mu $ satisfies the commutativity
condition if and only if for any bounded continuous function $\alpha :%
\mathbb{R}^{n}\rightarrow \mathbb{R}$ of $n\in \mathbb{Z}_{+}$ real
variables the function
\begin{equation}
\alpha \lbrack f_{1},f_{2},\ldots ,f_{n}]:=\int_{F^{\prime }}\alpha (\eta
(f_{1}),\eta (f_{2}),\ldots ,\eta (f_{n}))\rd\mu (\eta )  \label{FQeq2.23}
\end{equation}%
is sequentially continuous in $f_{j}\in F,$ $j=\overline{1,m}.$
\end{definition}

\begin{remark}
It is known \cite{GeVi,Gold,BeSh} that in countably normalized spaces, the
properties of sequential and ordinary continuity are equivalent.
\end{remark}

\begin{definition}
\label{FQdef2.7} A cylindrical measure $\mu $ is countably additive if and
only if for any cylinder set $X=\coprod_{j\in \mathbb{Z}_{+}}X_{j},$ which
is the union of countable numerous mutually disjoint cylinder sets $%
X_{j}\subset F^{\prime },j\in \mathbb{Z}_{+},$ $\mu (X)=\sum_{j\in \mathbb{Z}%
_{+}}\mu (X_{j}).$
\end{definition}

The next two standard propositions \cite{AlDaKoLy,AlKoSt,BeSh,GeVi,Gold},
characterizing extensions of the measure $\mu $ on $ X=\coprod_{j\in
\mathbb{Z}_{+}}X_{j},$ hold.

\begin{proposition}
\label{FQpr2.8} A countably additive cylindrical measure $\mu $ can be
extended to a countably additive measure on the $\sigma $-algebra, 
generated by the cylinder sets with Borel base. Such a measure will also be
called a cylindrical measure.
\end{proposition}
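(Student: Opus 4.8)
The plan is to recognize Proposition \ref{FQpr2.8} as an instance of the classical Carath\'eodory--Hahn--Kolmogorov extension theorem and to check that its hypotheses are met by a countably additive cylindrical measure in the sense of Definitions \ref{FQdef2.5} and \ref{FQdef2.7}.

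First I would record, as already noted in the text following Definition \ref{FQdef2.3}, that the cylinder sets with Borel base form an algebra $\mathcal{A}$ of subsets of $F^{\prime}$: both $\emptyset$ and $F^{\prime}$ are such cylinder sets; the complement of a cylinder set with generating subspace $F^{0}$ and Borel base $Q$ is again a cylinder set with generating subspace $F^{0}$ and Borel base $F^{\prime\,0}\setminus Q$; and any finite family of cylinder sets can be re-expressed over a common finite-dimensional generating subspace — namely the annihilator of $F_{0,1}+\dots+F_{0,k}\subset F$ — by pulling their bases back along the appropriate canonical surjections of the finite-dimensional quotients; hence $\mathcal{A}$ is closed under finite unions and intersections. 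This reduction to a common generating subspace is the one genuinely non-formal piece of bookkeeping, and it is also what makes $\mu$ well defined on $\mathcal{A}$, the consistency of the finite-dimensional marginals being built into the notion of a cylindrical measure.

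Next I would verify that $\mu$ is a premeasure on $\mathcal{A}$. Finite additivity: if $X=\coprod_{j=1}^{k}X_{j}$ with all $X_{j}\in\mathcal{A}$, rewrite $X$ and every $X_{j}$ over a common finite-dimensional generating subspace as above; then $\mu(X)=\sum_{j=1}^{k}\mu(X_{j})$ follows from the additivity clause of Definition \ref{FQdef2.5}. Countable additivity on $\mathcal{A}$: whenever a cylinder set $X$ happens to be the disjoint union of countably many cylinder sets $X_{j}$, the identity $\mu(X)=\sum_{j\in\mathbb{Z}_{+}}\mu(X_{j})$ holds by Definition \ref{FQdef2.7}. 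Together with $0\leqslant\mu\leqslant 1$ and $\mu(F^{\prime})=1$, this makes $\mu$ a countably additive probability premeasure on the algebra $\mathcal{A}$.

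Finally, by the Carath\'eodory extension theorem the outer measure
\[
\mu^{\ast}(E):=\inf\Big\{\textstyle\sum_{j\in\mathbb{Z}_{+}}\mu(X_{j}):\ E\subseteq\bigcup_{j\in\mathbb{Z}_{+}}X_{j},\ X_{j}\in\mathcal{A}\Big\}
\]
restricts to a countably additive measure on the $\sigma$-algebra of $\mu^{\ast}$-measurable sets, which contains $\mathcal{A}$ and therefore the $\sigma$-algebra $\sigma(\mathcal{A})$ generated by the cylinder sets with Borel base; on $\mathcal{A}$ this restriction coincides with $\mu$, and $\mu^{\ast}(F^{\prime})=1$, so it is the sought probability extension — which, following the statement, we continue to call a cylindrical measure. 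The only real subtlety, the passage from finite-dimensional marginals to genuine countable additivity, is not an obstacle here because Definition \ref{FQdef2.7} supplies it by hypothesis; what remains is the routine verification just outlined.
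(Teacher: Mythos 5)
Your proof is correct: the paper itself states Proposition \ref{FQpr2.8} without proof, citing standard references (Gelfand--Vilenkin et al.), and your route --- checking that the cylinder sets with Borel base form an algebra by passing to a common finite-dimensional generating subspace, observing that Definition \ref{FQdef2.5} gives finite additivity there while Definition \ref{FQdef2.7} supplies countable additivity on the whole algebra, and then invoking the Carath\'eodory--Hahn extension theorem --- is exactly the standard argument those references use. No gap; the one genuinely non-trivial hypothesis (countable additivity on the algebra) is indeed assumed, as you note, so the extension is immediate.
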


\begin{proposition}
\label{FQpr2.9} Let $F$ be a nuclear space. Then, any cylindrical measure $%
\mu $ on $F^{\prime },$ satisfying the continuity condition, is countably
additive.
\end{proposition}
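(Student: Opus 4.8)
The plan is to read this as a version of the Bochner--Minlos theorem and to prove it through the characteristic functional. First I would attach to the cylindrical measure $\mu$ its Fourier transform $L(f):=\int_{F'}\exp[\ri\,\eta(f)]\,\rd\mu(\eta)$, which is well defined on each finite-dimensional section and hence on all of $F$. Applying the continuity condition of Definition~\ref{FQdef2.6} with $n=1$ to the bounded continuous functions $\alpha=\cos$ and $\alpha=\sin$ shows that $L:F\to\mathbb{C}$ is sequentially continuous, hence, by the remark after Definition~\ref{FQdef2.6} ($F$ being countably normalized), continuous, with $L(0)=1$. Restricting $\mu$ to an arbitrary finite-dimensional subspace $F_0\subset F$ gives a genuine Borel probability measure $\mu_{F_0}$ on $F_0'$ whose characteristic function is $L|_{F_0}$ (this uses only the $\sigma$-pre-additivity with common generating subspace, i.e.\ Bochner's theorem on $\mathbb{R}^n$); consequently $L$ is positive definite on every $F_0$ and therefore on $F$.

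Next I would exploit nuclearity to localize $L$ on a Hilbertian neighbourhood. Continuity of $L$ at the origin gives, for each $\varepsilon>0$, a neighbourhood $U$ of $0$ in $F$ with $\mathrm{Re}\,(1-L(f))<\varepsilon$ for $f\in U$; since $F$ is nuclear, inside $U$ one can fit the unit ball of a Hilbertian seminorm $p$ such that the canonical embedding $F_q\hookrightarrow F_p$ of the associated local Hilbert spaces is Hilbert--Schmidt for a suitable stronger Hilbertian seminorm $q\geqslant p$. Writing $p(\cdot)^2=\sum_k\lambda_k\langle\,\cdot\,,e_k\rangle_q^2$ with $\{e_k\}$ a $q$-orthonormal system and $\sum_k\lambda_k<\infty$, I would run the standard Gaussian-averaging estimate on the finite-dimensional marginals: integrating the elementary inequality $1-\cos t\geqslant c\min(1,t^2)$ against a Gaussian built from the $\lambda_k$ yields, uniformly in the finite-dimensional subspace, a bound $\mu_{F_0}\bigl(\{\eta: q_-(\eta)^2>R^2\}\bigr)\leqslant C\bigl(\varepsilon+R^{-2}\sum_k\lambda_k\bigr)$, where $q_-$ is the dual Hilbert norm. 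Since the $q_-$-ball is compact in $F_p'$ (the adjoint of a Hilbert--Schmidt map is compact) and hence compact in $F'$, this is precisely the uniform tightness we need.

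Finally, with uniform tightness of the consistent family $\{\mu_{F_0}\}$ in hand, I would invoke the Prokhorov/Kolmogorov-type extension argument: the marginals are concentrated, up to $\varepsilon$, on a fixed compact subset of $F'$, so by the standard extension theorem (equivalently, Proposition~\ref{FQpr2.8} supplemented by this tightness) $\mu$ extends to a countably additive Radon probability measure on the $\sigma$-algebra generated by the cylinder sets with Borel base; in particular $\mu$ is countably additive, which is the assertion.

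I expect the genuine obstacle to be the middle step --- converting nuclearity into a quantitative tail bound valid for \emph{all} finite-dimensional marginals simultaneously. Everything before it (continuity and positive-definiteness of $L$, Bochner on finite-dimensional sections) and everything after it (passing from uniform tightness to countable additivity via the extension theorem) is routine; the technical heart is the Hilbert--Schmidt estimate that forces the mass of $\mu$ to concentrate on a $\sigma$-compact ellipsoid inside $F'$.
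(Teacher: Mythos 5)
The paper offers no proof of this proposition at all: it is quoted as a standard result (this is Minlos' theorem) with references to Gelfand--Vilenkin, Berezansky--Kondratiev, Goldin et al., so there is no in-text argument to compare yours against. Your sketch is exactly the classical Bochner--Minlos proof given in that literature: pass to the characteristic functional $L$, get its continuity from Definition~\ref{FQdef2.6} and positive definiteness from the cylindrical integral, use nuclearity to trap a Hilbert--Schmidt pair of Hilbertian seminorms inside a neighbourhood where $\mathrm{Re}\,(1-L)$ is small, run the Gaussian-averaging estimate uniformly over the finite-dimensional marginals to get a tail bound on a dual Hilbert ball, observe that this ball is compact in $F'$, and conclude countable additivity from uniform tightness. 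That is the right architecture, and you correctly identify the Hilbert--Schmidt tail estimate as the heart of the matter.

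Two local repairs are needed. First, the elementary inequality you invoke, $1-\cos t\ge c\min(1,t^{2})$, is false (take $t=2\piup$). The standard fix is to compute the Gaussian average exactly: $\int\bigl(1-\cos\eta(f)\bigr)\rd\gamma(f)=1-\exp\bigl[-\tfrac{1}{2}\sigma^{2}(\eta)\bigr]$ with $\sigma^{2}(\eta)$ a quadratic form built from the $\lambda_{k}$ and $\eta(e_{k})$, and then use the true inequality $1-\re^{-x}\ge(1-\re^{-1})\min(1,x)$; combined with $|1-L(f)|\le 2$ outside the $p$-ball and a Chebyshev bound on the Gaussian mass of that complement (this is precisely where $\sum_{k}\lambda_{k}<\infty$ enters), this yields the uniform tail bound you state. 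Second, the last step should not be credited to Proposition~\ref{FQpr2.8}, which presupposes countable additivity and only concerns the extension to the generated $\sigma$-algebra; what you actually need is the separate, standard tightness argument: if cylinder sets $X_{j}\downarrow\emptyset$ had $\mu(X_{j})\ge\delta>0$, inner regularity of the finite-dimensional marginals together with the fixed compact ellipsoid would produce a decreasing sequence of nonempty compact subsets of $F'$ with empty intersection, a contradiction. With these two points patched, your proof is complete and coincides with the proof in the sources the paper cites.
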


Concerning the Fourier transform of a cylindrical measure $\mu $ in $%
F^{\prime },$ we will use the following natural definitions.

\begin{definition}
\label{FQdef2.10} Let $\mu $ be a cylindrical measure in $F^{\prime }$. The
Fourier transform of $\mu $ is the nonlinear functional
\begin{equation}
\mathcal{L}(f):=\int_{F^{\prime }}\exp [\ri\eta (f)]\rd\mu (\eta ),
\label{FQeq2.24}
\end{equation}%
coinciding with the characteristic functional of the measure $\mu .$
\end{definition}

\begin{definition}
\label{FQdef2.11} The nonlinear functional $\mathcal{L}:F\rightarrow
\mathbb{C}$ on $F,$ defined by (\ref{FQeq2.24}), is called positive
definite, if and only if for all $f_{j}\in F$ and $\lambda _{j}\in \mathbb{C}%
,$ $j=\overline{1,n},$ the condition
\begin{equation}
\sum_{j,k=1}^{n}\bar{\lambda}_{j}\mathcal{L}(f_{k}-f_{j})\lambda _{k}\geqslant 0
\label{FQeq2.25}
\end{equation}%
holds for any $n\in \mathbb{Z}_{+}.$
\end{definition}

The following important proposition, owing to Gelfand and Vilenkin \cite%
{GeVi,Gold}, Araki \cite{Arak} and Goldin \cite{Gold,Gold-1}, holds.

\begin{proposition}
\label{FQpr2.12} The functional $\mathcal{L}:F\rightarrow \mathbb{C}$ on
$F,$ defined by (\ref{FQeq2.24}), is the Fourier transform of a cylindrical
measure on $F^{\prime }$ if and only if it is positive definite,
sequentially continuous and satisfying the condition $\mathcal{L}(0)=1$%
.\bigskip\
\end{proposition}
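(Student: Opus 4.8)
The plan is to read Proposition~\ref{FQpr2.12} as a Bochner--Minlos type statement at the level of \emph{cylindrical} measures --- i.e., of finitely additive set functions on the algebra of cylinder sets with Borel base, prior to any question of countable additivity on the full $\sigma$-algebra, the latter being the separate concern of Propositions~\ref{FQpr2.8}--\ref{FQpr2.9} --- and to prove the two implications separately.

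For necessity I would simply unwind (\ref{FQeq2.24}). The normalization $\mathcal{L}(0)=\int_{F^{\prime}}\rd\mu(\eta)=\mu(F^{\prime})=1$ is immediate, and positive definiteness follows by rewriting, for $f_{j}\in F$ and $\lambda_{j}\in\mathbb{C}$, $j=\overline{1,n}$ (all $f_{j}$ lying in a common finite dimensional subspace, so that the integrand below is a bounded cylinder function),
\begin{equation}
\sum_{j,k=1}^{n}\bar{\lambda}_{j}\mathcal{L}(f_{k}-f_{j})\lambda_{k}=\int_{F^{\prime}}\Big|\sum_{k=1}^{n}\lambda_{k}\exp[\ri\eta(f_{k})]\Big|^{2}\rd\mu(\eta)\geqslant 0 .
\end{equation}
Sequential continuity of $\mathcal{L}$ I would deduce from the commutativity condition of Definition~\ref{FQdef2.6} --- the regularity property satisfied by the cylindrical measures relevant to the representation problem --- applied to the bounded continuous functions $\alpha=\cos$ and $\alpha=\sin$ of a single real variable, since $\mathcal{L}(f)=\cos[f]+\ri\sin[f]$ in the notation of (\ref{FQeq2.23}).

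The substance lies in the sufficiency direction, where the plan is the standard projective-limit construction. For each finite dimensional subspace $F_{0}\subseteq F$ with $\dim F_{0}=n$, I would fix a basis, identifying $F_{0}\cong\mathbb{R}^{n}$ and, via the pairing, $F^{\prime\,0}=F^{\prime}/F^{0}\cong F_{0}^{\prime}\cong\mathbb{R}^{n}$; on this finite dimensional space $\mathcal{L}|_{F_{0}}$ is positive definite, equal to $1$ at the origin and continuous (there sequential continuity is ordinary continuity), so the classical Bochner theorem furnishes a \emph{unique} Borel probability measure $\mu_{F_{0}}$ on $F_{0}^{\prime}$ with $\int_{F_{0}^{\prime}}\exp[\ri\eta(f)]\,\rd\mu_{F_{0}}(\eta)=\mathcal{L}(f)$ for $f\in F_{0}$. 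Next I would verify that $\{\mu_{F_{0}}\}$ is a projective family: for $F_{0}\subseteq F_{1}$ the dual restriction map $\pi_{10}:F_{1}^{\prime}\to F_{0}^{\prime}$ transports $\mu_{F_{1}}$ to a probability measure on $F_{0}^{\prime}$ having the same Fourier transform $\mathcal{L}|_{F_{0}}$, whence $(\pi_{10})_{\ast}\mu_{F_{1}}=\mu_{F_{0}}$ by the uniqueness in Bochner's theorem. Then, for a cylinder set $X=X_{F^{0}}^{(Q)}$ with Borel base $Q\subseteq F^{\prime\,0}\cong F_{0}^{\prime}$ (Definitions~\ref{FQdef2.2}--\ref{FQdef2.4}), I would set $\mu(X):=\mu_{F_{0}}(Q)$, check that this is independent of the representing subspace and basis by passing to a common refinement and invoking the projective consistency, and then read off that $0\leqslant\mu(X)\leqslant 1$ (each $\mu_{F_{0}}$ being a probability measure), that $\mu(F^{\prime})=1$ (the case $F_{0}=\{0\}$), and that $\sigma$-pre-additivity on cylinder sets sharing a generating subspace $F^{0}$ reduces to the countable additivity of the honest measure $\mu_{F_{0}}$; thus $\mu$ is a cylindrical measure in the sense of Definition~\ref{FQdef2.5}. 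Finally, evaluating (\ref{FQeq2.24}) on $f\in F$ through the one-dimensional subspace $F_{0}=\mathbb{R}f$ collapses the integral to $\int_{F_{0}^{\prime}}\exp[\ri\eta(f)]\,\rd\mu_{F_{0}}(\eta)=\mathcal{L}(f)$, identifying $\mathcal{L}$ as the Fourier transform of $\mu$.

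The main obstacle --- essentially the only non-routine point --- is the well-definedness of $\mu$ on the algebra of cylinder sets, that is, the projective consistency of $\{\mu_{F_{0}}\}$, which rests squarely on the uniqueness clause of the finite dimensional Bochner theorem together with the care needed so that the identifications $F_{0}\cong\mathbb{R}^{n}$ and the induced $F_{0}^{\prime}\cong\mathbb{R}^{n}$ respect the canonical pairing: a change of basis transforms $\mu_{F_{0}}$ covariantly and hence leaves each number $\mu_{F_{0}}(Q)$ unchanged. I would emphasize that nuclearity of $F$ is not used here; it enters only in upgrading the finitely additive $\mu$ to a genuine countably additive measure on the generated $\sigma$-algebra, which is precisely the content of Propositions~\ref{FQpr2.8}--\ref{FQpr2.9}.
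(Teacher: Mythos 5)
Your proposal is correct, but note that the paper itself offers no proof of Proposition~\ref{FQpr2.12} at all: it is stated as a known result, attributed to Gelfand--Vilenkin, Araki and Goldin, and the reader is sent to those references. What you have written is precisely the standard argument underlying that citation --- necessity by direct computation from (\ref{FQeq2.24}) (normalization, the square modulus identity for positive definiteness, and continuity via the regularity condition of Definition~\ref{FQdef2.6}), and sufficiency by the projective-limit construction: finite-dimensional Bochner on each $F_{0}^{\prime}\cong\mathbb{R}^{n}$, projective consistency of the family $\{\mu_{F_{0}}\}$ forced by the uniqueness clause, well-definedness of $\mu$ on the cylinder algebra, and $\sigma$-pre-additivity for a fixed generating subspace inherited from the honest measure $\mu_{F_{0}}$. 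Your closing remark correctly isolates the division of labour with Propositions~\ref{FQpr2.8}--\ref{FQpr2.9}: nuclearity and the continuity condition are what upgrade the cylindrical measure to a countably additive one, and play no role in the equivalence itself.

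One small caveat deserves a sentence in the write-up: as literally stated, the ``only if'' direction is slightly too strong, since the Fourier transform of a completely arbitrary cylindrical measure need not be sequentially continuous; continuity of $\mathcal{L}$ requires exactly the regularity (commutativity) condition of Definition~\ref{FQdef2.6}, which you invoke. You handle this by reading the proposition, as Gelfand--Vilenkin do, for cylindrical measures satisfying that condition --- a reasonable reading, but it should be made explicit rather than left implicit in the phrase ``the cylindrical measures relevant to the representation problem.''
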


\subsection{The unitary family and generating functional equations}

Suppose now that we have a continuous unitary representation of the unitary
family $\mathcal{U}$ on a suitable Hilbert space $\Phi $ with a cyclic
vector $|\Omega )\in \Phi .$ Then, we can put
\begin{equation}
\mathcal{L}(f):=(\Omega |\mathrm{U}(f)|\Omega )  \label{FQeq2.26}
\end{equation}%
for any $f\in F:=\mathcal{S},$ being the Schwartz space on $\mathbb{R}^{m},$
and observe that functional (\ref{FQeq2.26}) is continuous on $F$ owing to
the continuity of the representation. Therefore, this functional is the
generalized Fourier transform of a cylindrical measure $\mu $ on $F^{\prime
}$:
\begin{equation}
(\Omega |\mathrm{U}(f)|\Omega )=\int_{\mathcal{S}^{\prime }}\exp [\ri\eta
(f)]\rd\mu (\eta ).  \label{FQeq2.27}
\end{equation}%
From the spectral point of view, based on theorem \ref{FQth2.1}, there is an
isomorphism between the Hilbert spaces $\Phi $ and $L_{2}^{(\mu )}(F;\mathbb{%
C}),$ defined by $|\Omega )\rightarrow \Omega (\eta )=1$ and $\mathrm{U}%
(f)|\Omega )\rightarrow \exp [\ri\eta (f)]$ and next extended by linearity
upon the whole Hilbert space $\Phi $.

In the non-cyclic case, there exists a finite or countably
infinite family of measures $\left\{ \mu _{k}:k\in \mathbb{Z}_{+}\right\} $
on $F^{\prime },$ with ${\Phi }\simeq \oplus _{k\in \mathbb{Z}%
_{+}}L_{2}^{(\mu _{k})}(F^{\prime };\mathbb{C})$ and the unitary operator $%
\mathrm{U}(f):{\Phi }\rightarrow {\Phi }$ for any $f\in \mathcal{S}%
^{\prime }$ corresponds in all $L_{2}^{(\mu _{k})}(F^{\prime };\mathbb{C}),$
$k\in \mathbb{Z}_{+},$ to a multiplication operator on the exponent function
$\exp [\ri\eta (f)].$ This means that there exists a single cylindrical
measure $\mu $ on $F^{\prime }$ and a $\mu$-measurable field of Hilbert
spaces ${\Phi }_{\eta }$ on $F^{\prime },$ such that
\begin{equation}
{\Phi }\simeq \int_{F^{\prime }}^{\oplus }{\Phi }_{\eta }\rd\mu (\eta ),
\label{FQeq2.28}
\end{equation}%
with $\mathrm{U}(f):{\Phi }\rightarrow {\Phi },$ corresponding \cite%
{GeVi} to the operator of multiplication by $\exp [\ri\eta (f)]$ for any $f\in
F$ and $\eta \in F^{\prime }.$ Thereby, having constructed the nonlinear
functional (\ref{FQeq2.24}) in an exact analytical form, one can retrieve
the representation of the unitary family $\mathcal{U}$ in the corresponding
Hilbert space ${\Phi },$ making use of the suitable factorization (\ref%
{FQeq2.20}) as follows: ${\Phi }=\oplus _{n\in \mathbb{Z}_{+}}{\Phi }_{n},$
where
\begin{equation}
{\Phi }_{n}=\prod_{j=\overline{1,n}}\rho (x_{j})|\Omega ),  \label{FQeq2.29}
\end{equation}%
for all $n\in \mathbb{Z}_{+}.$

The cyclic vector $|\Omega )\in {\Phi }$ can be, in particular, obtained as
the ground state vector of some unbounded self-adjoint positive definite
Hamilton operator $\mathrm{\hat{H}}:{\Phi }\rightarrow {\Phi },$
commuting with the self-adjoint non-negative particle number operator
\begin{equation}
\mathrm{N}:=\int_{\mathbb{R}^{m}}\rd x\rho (x),  \label{FQeq2.30}
\end{equation}%
that is $[\mathrm{\hat{H}},\mathrm{N}]=0$. Moreover, the conditions
\begin{equation}
\mathrm{\hat{H}}|\Omega )=0  \label{FQeq2.31}
\end{equation}%
and
\begin{equation}
\inf_{\varphi \in D_{\mathrm{H}}}(\varphi |\mathrm{\hat{H}}|\varphi
)=(\Omega |\mathrm{\hat{H}}|\Omega )=0  \label{FQeq2.32}
\end{equation}%
hold for the operator $\mathrm{\hat{H}}:{\Phi }\rightarrow {\Phi },$
where $D_{\mathrm{H}}$ denotes its domain of definition, dense in $\Phi .$
To find the functional (\ref{FQeq2.26}), which is called the generating
Bogolubov type functional for moment distribution functions
\begin{equation}
F_{n}(x_{1},x_{2},\ldots ,x_{n}):=(\Omega |:\rho (x_{1})\rho (x_{2})\ldots \rho
(x_{n}):|\Omega ),  \label{FQeq2.33}
\end{equation}%
where $x_{j}\in \mathbb{R}^{m},$ $j=\overline{1,n},$ and the \textit{normal
ordering operation} $:\cdot :$ is defined \cite%
{Bere,BoBo,BoPr,Gold,GoMeSh-1,GoMeSh-2} as
\begin{equation}
:\rho (x_{1})\rho (x_{2})\ldots \rho (x_{n}):\;=\prod_{j=1}^{n}\left( \rho
(x_{j})-\sum_{k=1}^{j-1}\delta (x_{j}-x_{k})\right) ,  \label{FQeq2.34}
\end{equation}%
it is convenient first to choose the Hamilton operator $\mathrm{H:}$ $\Phi
_\text{F}\rightarrow \Phi _\text{F} $ in the following secondly quantized \cite%
{BoBo,Gold,GoGrPoSh} representation
\begin{equation}
\mathrm{H:}=\frac{1}{2}\int_{\mathbb{R}^{m}}\left\langle \nabla
_{x}a^{+}(x)|\nabla _{x}a(x)\right\rangle \rd x+\mathrm{V}(\rho ),
\label{FQeq2.35}
\end{equation}%
in the related Fock space $\Phi _\text{F},$ where the sign ``$\nabla _{x}$''
means the usual gradient operation with respect to $x\in \mathbb{R}^{m}$ in
the Euclidean space $\mathbb{ E}^{m}\simeq (\mathbb{R}^{m};\left\langle
\cdot |\cdot \right\rangle ).$ If the energy spectrum density of the
Hamiltonian operator (\ref{FQeq2.35}) in the Fock space {${\Phi }_\text{F} $}
is bounded from below, in works done by Goldin, Grodnik, Menikov,
Powers and Sharp \cite{Meni,Gold,GoGrPoSh} there was stated that
this Hamiltonian, modulo the ground state energy eigenvalue, can be
algebraically represented on a suitably constructed \textit{current algebra
symmetry representation Hilbert space} ${\Phi },$ as the positive definite
gauge type factorized operator
\begin{equation}
\mathrm{\hat{H}}=\frac{1}{2}\int_{\mathbb{R}^{m}}\left\langle (\mathrm{K}%
^{+}(x)-\mathrm{A}(x;\rho ))|\rho ^{-1}(x)(\mathrm{K}(x)-\mathrm{A}(x;\rho
))\right\rangle \rd x,  \label{FQeq2.35b}
\end{equation}%
satisfying conditions (\ref{FQeq2.31}) and (\ref{FQeq2.32}), where $\mathrm{A%
}(x;\rho ):\Phi \rightarrow \Phi ^{m},$ $ x\in \mathbb{R}^{m},$ is some
specially constructed linear self-adjoint operator, satisfying the condition
\begin{equation}
\mathrm{K}(x)|\Omega )=\mathrm{A}(x;\rho )|\Omega )  \label{FQeq2.38}
\end{equation}%
for any $x\in \mathbb{R}^{m}$ and the ground state $|\Omega )\in {\Phi }%
, $ corresponding to a chosen potential operators $\mathrm{V}(\rho ):{\Phi
}\rightarrow {\Phi }.$ The singular structure of the operator (\ref%
{FQeq2.35b}) was earlier analyzed in detail in \cite{GoSh-3} where, in part,
there was shown its wellposedness.

The ``potential'' operator $\mathrm{V}(\rho ):{\Phi }%
\rightarrow {\Phi }$ is, in general, a polynomial (or analytical)
functional of the density operator $\rho (x):{\Phi }\rightarrow {\Phi }$
for any $x\in \mathbb{R}^{m},$ and the operator $\mathrm{K}(x):{\Phi }%
\rightarrow {\Phi }^{m}$ is defined as
\begin{equation}
\mathrm{K}(x):=\nabla _{x}\rho (x)/2+\ri J(x),  \label{FQeq2.36}
\end{equation}%
where the self-adjoint ``current''  operator $J(x):{\Phi }\rightarrow
{\Phi }^{m}$ can be naturally defined (but non-uniquely) from the continuity
equality
\begin{equation}
\partial \rho /\partial t=-\ri[\mathrm{H},\rho (x)]=-\langle\nabla |J(x)\rangle,
\label{FQeq2.37}
\end{equation}%
holding for all $x\in \mathbb{R}^{m}.$ Such an operator $J(x):{\Phi }%
\rightarrow {\Phi }^{m},$ $x\in \mathbb{R}^{m},$ can exist owing to the
commutation condition $[\mathrm{\hat{H}},\mathrm{\hat{N}}]=0,$ giving rise
to the continuity relationship (\ref{FQeq2.37}), if we additionally take
into account that supports $\mathrm{supp}$ $\rho $ of the density operator $%
\rho (x):{\Phi }\rightarrow {\Phi },$ $x\in \mathbb{R}^{m},$ can be
chosen arbitrarily owing to the independence of (\ref{FQeq2.37}) on the
potential operator $V(\rho ):{\Phi }\rightarrow {\Phi },$ but its strict
dependence on the corresponding representation (\ref{FQeq2.28}).

\begin{remark}
Here we mention that the operator $\mathrm{K}(x):{\Phi }\rightarrow {%
\Phi }^{m}, x\in \mathbb{R}^{m},$ defined by (\ref{FQeq2.36}), relates to
that from the work \cite{Gold,GoGrPoSh,Meni} via scaling $\mathrm{K}%
(x)\rightarrow \mathrm{K}(x)/2,$ $x\in \mathbb{R}^{m}.$
\end{remark}

In particular, based on the Fock space $\Phi _\text{F},$ defined by (\ref{FQeq2.1}%
) and generated by the creation-annihilation operators (\ref{FQeq2.7}) and (%
\ref{FQeq2.8}), the current operator $ J(x):{\Phi }_\text{F}\rightarrow {%
\Phi }_\text{F}^{{m}},$ $x\in \mathbb{R}^{m},$ can be easily constructed as
follows:
\begin{equation}
J(x)=\frac{1}{2\ri}[a^{+}(x)\text{ }\nabla _{x}a(x)-\nabla _{x}a^{+}(x)\text{ }%
a(x)],  \label{FQeq2.37a}
\end{equation}%
satisfying jointly with the density operator $\rho (x):{\Phi }%
_\text{F}\rightarrow {\Phi }_\text{F},$ $x\in \mathbb{R}^{m},$ defined by (\ref%
{FQeq2.16}), the following quantum current Lie algebra symmetry \cite%
{Aref,BoPr,Gold,GoGrPoSh,GoMeSh-1,GoMeSh-2,PrMy} relationships:%
\begin{align}
\lbrack J(g_{1}),J(g_{2})] &=\ri J([g_{1,}g_{2}]),\text{ }[\rho
(f_{1}),\rho (f_{2})]=0,    \notag\\
\lbrack J(g_{1}),\rho (f_{1})] &=\ri \rho (\left\langle g_{1}|\nabla
f_{1}\right\rangle),\text{ }  \label{FQeq2.37b}
\end{align}%
holding for all $f_{1},f_{1}\in F$ and $g_{1},g_{2}\in F^{m},$ where we
put, by definition,
\begin{equation}
\lbrack g_{1,}g_{2}]:=\left\langle g_{1}|\nabla \right\rangle g_{2}-\left\langle g_{2}|\nabla \right\rangle g_{1},  \label{FQeq2.37c}
\end{equation}%
being the usual commutator of vector fields $\left\langle g_{1}|\nabla
\right\rangle _{\ }$and $\left\langle g_{2}|\nabla \right\rangle _{\ }$on
the configuration space $\mathbb{R}^{m}.$ It is easy to observe that the
current algebra (\ref{FQeq2.37b}) is the Lie algebra $\mathcal{G},$
corresponding to the Banach Lie group $G=\Diff(\mathbb{R}^{m})\ltimes F,$\
the semidirect product of the Banach-Lie group of diffeomorphisms  $\Diff (\mathbb{R}^{m})$ of the $m$-dimensional space $\mathbb{R}^{m}$ 
and the abelian group $F$ subject to the multiplicative operation Banach
group of smooth functions $F.$

\begin{remark}
The self-adjointness of the operator $\mathrm{A}(g;\rho ):{\Phi }%
\rightarrow {\Phi }^{m},$ $g\in F,$ can be stated following schemes from
works \cite{Aref,BoPr,GoGrPoSh} under the additional existence of such a
linear anti-unitary mapping $\mathrm{T}:{\Phi }\rightarrow {\Phi }$ that
the following invariance conditions hold:
\begin{equation}
\mathrm{T}\rho (x)\mathrm{T}^{-1}=\rho (x),\qquad \mathrm{T}J(x)\mathrm{T%
}^{-1}=-J(x),\qquad \mathrm{T}|\Omega )=|\Omega )  \label{FQeq2.39}
\end{equation}%
for any $x\in \mathbb{R}^{m}.$ Thereby, owing to conditions (\ref{FQeq2.39}%
), the following equalities
\begin{equation}
\mathrm{K}(x)|\Omega )=\mathrm{A}(x;\rho )|\Omega )  \label{FQeq2.40}
\end{equation}%
hold for any $x\in \mathbb{R}^{m},$ giving rise to the self-adjointness of
the operator $\mathrm{A}(g;\rho ):{\Phi }\rightarrow {\Phi }^{m},$ $g\in
F.$
\end{remark}

It is easy to observe that the time-reversal condition (\ref{FQeq2.39})
imposes the real value relationship for the ground state $\ \Omega _{N}=%
\overline{\Omega }_{N}\in {\Phi }_{N}$ ${\simeq L}_{2}^{(s)}(\mathbb{R}%
^{m\times N};\mathbb{C})$ of the canonically represented $N$-particle
Hamiltonian \textrm{H}$_{N}:{\Phi }_{N}{\rightarrow \Phi }_{N}{\ }$for
arbitrary $N\in \mathbb{Z}_{+}.$ Moreover, taking into account$\ $the
relationship\ (\ref{FQeq2.40}), one can easily observe that on the invariant
subspace ${\Phi }_{N}\subset \Phi $ the operator $K(x):{\Phi }%
\rightarrow {\Phi }^{m}$ is representable as
\begin{equation}
K_{N}(x)=\sum_{j=\overline{1,N}}\delta (x-x_{j})\frac{\partial }{\partial
x_{j}}\,,  \label{FQeq2.41b}
\end{equation}%
entailing the following expression for the related operator $A_{N}(x;\rho ):{%
\Phi }_{N}{\rightarrow \Phi }_{N}^{m}{\ }$\ on the subspace ${\Phi }%
_{N}\subset \Phi :$ ${\ }$
\begin{equation}
\ A_{N}(x;\rho )=\sum_{j=\overline{1,N}}\delta (x-x_{j})\nabla _{x_{j}}\ln
|\Omega _{N}(x_{1},x_{2},\ldots ,x_{N})|.  \label{FQeq2.41c}
\end{equation}%
The latter makes it possible to derive its secondly quantized \cite%
{BeSh,BoBo} expression as
\begin{eqnarray}
\mathrm{A}(x;\rho ) =\int_{\mathbb{R}^{m\times
N}}\rd x_{2}\rd x_{3}\ldots \rd x_{N}:\rho (x)\rho (x_{2})\rho (x_{3})\ldots 
 \rho (x_{N}) :\nabla _{x}\ln |\Omega _{N}(x,x_{2},\ldots ,x_{N})|,  \label{FQeq2.42}
\end{eqnarray}%
which holds for any $x\in \mathbb{R}^{m}$ and arbitrary $N\in \mathbb{Z}%
_{+}. $ Being interested in the infinite particle case when $N\rightarrow
\infty ,$ the expression (\ref{FQeq2.42}) can be naturally decomposed \cite%
{Feen,MeSh} as
\begin{align}
&\mathrm{A}(x;\rho ):= \rho (x)\nabla \frac{\delta }{\delta \rho (x)}%
\mathrm{W}\left( \rho \right) \nonumber\\
&=\sum_{n\mathbb{\in Z}_{+}}\frac{1}{n!}\int_{\mathbb{R}^{m\times
n}}\rd y_{1}\rd y_{2}\ldots \rd y_{n}:\rho (x)\rho (y_{1})\rho (y)\rho (y_{3})
\ldots \rho (y_{n}):\nabla _{x}W_{n+1}(x;y_{1},y_{2},\ldots ,y_{n}),%
\label{FQeq2.43}
\end{align}%
where the corresponding real-valued coefficients $W_{n}$ $\in H_{2\ \
}^{(1)}(\mathbb{R}^{m\times n};\mathbb{R})$ should be such functions that
the series (\ref{FQeq2.43}) were convergent in a suitably chosen
representation Fock space ${\Phi }_\text{F}{,}$ for which the resulting ground
state $\lim_{N\rightarrow \infty }\Omega _{N}\simeq $ $|\Omega )\in {\Phi }$
is necessarily cyclic in $\Phi $ and normalized.

Based now on the above construction one easily deduces from expression (\ref%
{FQeq2.36}) that the generating Bogolubov type functional (\ref{FQeq2.26})
obeys, for all $x\in \mathbb{R}^{m}$, the following functional-differential
equation:
\begin{equation}
\left[ \nabla _{x}-\ri\nabla _{x}f\right] \frac{1}{\ri}\frac{\delta \mathcal{L}%
(f)}{\delta f(x)}=\mathrm{A}\left( x;\frac{1}{\ri}\frac{\delta }{\delta f}%
\right) \mathcal{L}(f),  \label{FQeq2.41}
\end{equation}%
whose solutions should satisfy \cite{GoSh-2} the Fourier transform
representation (\ref{FQeq2.27}), and which were, in part, studied in \cite%
{GoSh-2}. In particular, a wide class of special so-called Poissonian white
noise type solutions to the functional-differential equation (\ref{FQeq2.41}%
) was obtained in \cite{AlDaKoLy,Bere-1,Bere-2,BeKo,BoPr,GoGrPoSh} by means
of functional-operator methods in the following generalized form:
\begin{equation}
\mathcal{L}(f)=\exp \left[ \int_{\mathbb{R}^{m}}\mathrm{W}\left( \frac{1}{\ri}%
\frac{\delta }{\delta f}\right) \rd x\right] \exp \left( \bar{\rho}\int_{%
\mathbb{R}^{m}}\{\exp [\ri f(x)]-1\}\rd x\right) ,  \label{FQeq2.41a}
\end{equation}%
where $\bar{\rho}=(\Omega |\rho |\Omega )\in \mathbb{R}_{+}$ is a suitable
Poisson process parameter and the operator $\mathrm{A}\left( x;\rho \right)
:\Phi \rightarrow \Phi ^{m},x\in \mathbb{R}^{m},$ resulting from the
expression (\ref{FQeq2.43}) for some scalar operator $\mathrm{W}\left(
\rho \right) :\Phi \rightarrow \Phi .$

\begin{remark}
It is worth to remark here that solutions to equation (\ref{FQeq2.41})
realize the suitable physically motivated representations of the abelian
Banach subgroup $F$ of the Banach group $G=\Diff(\mathbb{R}^{m})\ltimes F,$
mentioned above. In the general case of this Banach group $G$, one can
also construct \cite{GoMeSh-1,GoGrPoSh,PrBoGoTa} a generalized Bogolubov
type functional equation, whose solutions give rise to suitable physically
motivated representations of the corresponding current Lie algebra $\mathcal{%
G}.$
\end{remark}

Recalling now the Hamiltonian operator representation (\ref{FQeq2.35b}),
one can readily deduce that the following weak representation Hilbert space $%
{\Phi }$ is a weak relationship
\begin{equation}
\left( \left\langle \mathrm{A}|\rho ^{-1}\mathrm{A}\right\rangle
-\left\langle \mathrm{K}^{\ast }|\rho ^{-1}\mathrm{A}\right\rangle
-\left\langle \mathrm{A}|\rho ^{-1}\mathrm{K}\right\rangle \right) \big/2-
\mathrm{V}(\rho )= \epsilon _{0}\,,  \label{FQeq2.44}
\end{equation}%
where $\epsilon _{0}\in \mathbb{R}$ is the corresponding ground state energy
density value. Thus, the main analytical problem is now reduced to
constructing the expansion (\ref{FQeq2.43}) corresponding to a suitable
cyclic representation Hilbert space ${\Phi }$ of the quantum current algebra
 (\ref{FQeq2.37b}), compatible with the Hamiltonian operator structure~(%
\ref{FQeq2.35}).

\subsection{The Hamiltonian operator reconstruction within the cyclic
current algebra representation}

We assume that we are given a Banach current group $G=\Diff(\mathbb{R}%
^{m})\ltimes F$ cyclic representation in a Hilbert space $\Phi $ with
respect to $F$ with a cyclic vector $|\Omega )\in \Phi _{+}\subset \Phi .$
Based on the well-known Araki reconstruction theorem \cite{Arak,GoGrPoSh} \
for the canonical Weyl commutation relations, from (\ref{FQeq2.37}) we can first readily obtain
\begin{equation}
\lbrack \mathrm{\hat{H},U}(f)]=J(\nabla f)\mathrm{U}(f)-\frac{1}{2}\rho
(\left\langle \nabla f_{1}|\nabla f_{2}\right\rangle )\mathrm{U}(f),
\label{FQeq4.1}
\end{equation}%
where $\mathrm{U}(f)=\exp [\ri\rho (f)],f\in F,$ is an element of the unitary
family $\mathcal{U}.$ The expression (\ref{FQeq4.1}) makes it possible to
calculate the bilinear form
\begin{equation}
(\mathrm{U}(f_{1})\Omega |\mathrm{\hat{H}|U}(f_{2})\Omega )=(\mathrm{U}%
(f_{1})\Omega |J(\nabla f_{1})|\mathrm{U}(f_{2})\Omega )
-\frac{1}{2}(\mathrm{U}(f_{1})\Omega |\rho (\left\langle \nabla f_{1}|\nabla
f_{2}\right\rangle )|\mathrm{U}(f_{2})\Omega )%
\label{FQeq4.2}
\end{equation}%
for any $f_{1},f_{2}\in F.$ Taking into account the symmetry properties (%
\ref{FQeq2.39}), we finally deduce from (\ref{FQeq4.2}) that for arbitrary
functions $f_{1},f_{2}\in F$
\begin{equation}
(\mathrm{U}(f_{1})\Omega |\mathrm{\hat{H}|U}(f_{2})|\Omega )=\frac{1}{2}(\mathrm{U}%
(f_{1})\Omega |\rho (\left\langle \nabla f_{1}|\nabla f_{2}\right\rangle )|%
\mathrm{U}(f_{2})\Omega ).  \label{FQeq4.3}
\end{equation}%
The following standard reasonings make it possible to state that the
bilinear symmetric form (\ref{FQeq4.3}) determines on $\Phi $ a
self-adjoint non-negative definite Hamiltonian operator $\mathrm{\hat{H}}%
:\Phi \rightarrow \Phi ,$ densely defined on the domain $D_{\mathrm{H}}:=
\rm{span}_{f\in F}\{\exp [\ri\rho (f)]|\Omega )\in \Phi \}.$ Really,
for any set of functions $f_{j}\in F,j=\overline{1,n},$ the following
inequalities
\begin{equation}
\sum\limits_{j,k=\overline{1,n}}\bar{s}_{j}s_{k}\left\langle \nabla
f_{j}|\nabla f_{k}\right\rangle \geqslant 0, \qquad 
\sum\limits_{j,k=%
\overline{1,n}}\bar{s}_{j}s_{k}(\mathrm{U}(f_{j})\Omega |\rho (x)|\mathrm{U}%
(f_{k})\Omega )\geqslant 0  \label{FQeq4.4}
\end{equation}%
hold for any complex numbers $s_{j}\in \mathbb{C},j=\overline{1,n},$ and
arbitrary $n\in \mathbb{N}.$ Since for any non-negative definite complex
matrices $A,B\in \End\mathbb{R}^{n}$, the matrix $C:=\{A_{jk}B_{jk}:j,k=%
\overline{1,n}\}\in \End \mathbb{R}^{n}$ proves to be non-negative definite
\cite{Arak,BeBe} too, one ensues that the bilinear form (\ref{FQeq4.4}) is
also non-negative definite. Then, as follows from the classical Friedrichs'
 theorem \cite{Frie1,Frie2,Frie3,Kato,ReSi-1,ReSi-2}, there exists a self-adjoint
densely defined and non-negative definite operator $\mathrm{\hat{H}}:\Phi
\rightarrow \Phi .$

\section{Nonrelativistic many-particle integrable quantum models, their
current algebra representations and the Hamiltonian reconstruction}

\subsection{An integrable many-particle oscillatory quantum model}

As a first application of the local current algebra representation
construction devised above, we consider a simple nonrelativistic
oscillatory quantum model \cite{Aref} of interacting bose-particles in the $%
m $-dimensional Euclidean space $(\mathbb{R}^{m};\langle\cdot |\cdot \rangle),m\in
\mathbb{Z}_{+},$ described by the secondly quantized Hamiltonian operator 
\begin{equation}
\mathrm{H}^{(\omega )}=\frac{1}{2}\int_{\mathbb{R}^{m}}\rd x\langle\nabla \psi
^{+}(x)|\nabla \psi (x)\rangle +\frac{1}{2}\int_{\mathbb{R}^{m}}\rd x\langle\omega x|\omega
x\rangle \psi ^{+}(x)\psi (x),\   \label{G1}
\end{equation}%
acting on the corresponding Fock space $ \Phi , $ and parameterized by
the positive definite frequency matrix $\omega \in \End\mathbb{R}^{m}.$ 
\ It is easy to check that for any $|\Omega _{0})\in \Phi $, the vector
\begin{equation}
|\varphi _{\omega })=\exp \left( -\frac{1}{2}\int_{\mathbb{R}^{m}}\rd x\rho
(x)\langle x|\omega x\rangle \right) |\varphi _{0})  \label{G2}
\end{equation}%
satisfies the identity%
\begin{equation}
\mathrm{K}(x)|\varphi _{\omega })=\exp \left( -\frac{1}{2}\int_{\mathbb{R}%
^{m}}\rd x\rho (x)\langle x|\omega x\rangle \right) \left[ \mathrm{K}(x)+\omega x\rho (x)%
\right] |\varphi _{0}),  \label{G3}
\end{equation}%
from which one easily ensues
\begin{align}
&\left( \mathrm{H}^{(\omega )}-\frac{1}{2}\tr\omega\,\textrm{N}%
\right) |\varphi _{0})
=\frac{1}{2}\int_{\mathbb{R}^{m}}\rd x\langle [\mathrm{K}^{+}(x)+\omega x\rho (x)]|\rho
(x)^{-1}[\mathrm{K}(x)+\omega x\rho (x)]\rangle |\varphi _{0}) \nonumber\\
&=\exp \left( \frac{1}{2}\int_{\mathbb{R}^{m}}\rd x\rho (x)\langle x|\omega x\rangle \right)
\mathrm{H}^{(0)}\exp \left( -\frac{1}{2}\int_{\mathbb{R}^{m}}\rd x\rho
(x)\langle x|\omega x\rangle \right) |\varphi _{0}),  
\label{G4}
\end{align}%
meaning that operators $( \mathrm{H}^{(\omega )}-\frac{1}{2}\tr %
\omega\, \textrm{N}) $ and $\mathrm{H}^{(0)}$ are equivalent in
the Fock space $\Phi _\text{F}.$ Thus, we can formulate the following proposition.

\begin{proposition}
The quantum oscillatory Hamiltonian operator (\ref{G1}) permits on the
suitable Fock space $\Phi _\textup{F}$ the factorized representation
\begin{equation}
\mathrm{\hat{H}}^{(\omega )}=\frac{1}{2}\int_{\mathbb{R}^{m}}\rd x\langle[\mathrm{K}%
^{+}(x)+\omega x\rho (x)]|\rho (x)^{-1}[\mathrm{K}(x)+\omega x\rho (x)]\rangle +%
\frac{1}{2}\tr \omega\, \textup{N.}  \label{G5}
\end{equation}%
Its ground state $|\Omega ^{(\omega )})\in \Phi $ satisfies the conditions%
\begin{equation}
\mathrm{\hat{H}}^{(\omega )}|\Omega ^{(\omega )})=\frac{1}{2}\tr %
\omega\, \textup{N}|\Omega ^{(\omega )}), \qquad [\mathrm{K}(x)+\omega
x\rho (x)]|\Omega ^{(\omega )})=0  \label{G6}
\end{equation}%
for all $x\in \mathbb{R}^{m}.$
\end{proposition}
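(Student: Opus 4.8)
The plan is to reduce everything to the free ($\omega=0$) case by a density-dependent gauge transformation, after which (\ref{G5}) and (\ref{G6}) are essentially read off from the identities (\ref{G2})--(\ref{G4}) already established above. Write $D:=\exp\big(-\frac{1}{2}\int_{\mathbb{R}^{m}}\rho(x)\langle x|\omega x\rangle\,\rd x\big)$ for the dressing operator of (\ref{G2}). Two facts are needed. First, in the Fock space $\Phi_\textup{F}$ the free Hamiltonian $\mathrm{H}^{(0)}$ (the $\omega=0$ instance of (\ref{G1})) already has the gauge-factorized form $\mathrm{H}^{(0)}=\frac{1}{2}\int_{\mathbb{R}^{m}}\langle\mathrm{K}^{+}(x)|\rho^{-1}(x)\mathrm{K}(x)\rangle\,\rd x$, since this is the case of (\ref{FQeq2.35b}) with vanishing potential and $\mathrm{A}\equiv 0$ (the free ground state has constant $N$-particle components, so $\nabla\ln|\Omega_{N}|=0$), its singular structure being wellposed by \cite{GoSh-3}. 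Second, the intertwining identity (\ref{G3}) says exactly that conjugating $\mathrm{K}(x)$ by the dressing exponential produces the shift $\mathrm{K}(x)\rightarrow\mathrm{K}(x)+\omega x\rho(x)$, and likewise for $\mathrm{K}^{+}(x)$; this can also be checked sector by sector from the $N$-particle form (\ref{FQeq2.41b}) of $\mathrm{K}(x)$ together with the fact that $\rho(x)$ commutes with $D$.

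Granting these, (\ref{G5}) follows immediately: since $\rho^{-1}(x)$ also commutes with $D$, conjugating $\mathrm{H}^{(0)}$ by the dressing operator and substituting the shifted operators gives, by the chain of equalities (\ref{G4}),
\[
\frac{1}{2}\int_{\mathbb{R}^{m}}\langle[\mathrm{K}^{+}(x)+\omega x\rho(x)]|\rho^{-1}(x)[\mathrm{K}(x)+\omega x\rho(x)]\rangle\,\rd x=\mathrm{H}^{(\omega)}-\frac{1}{2}\tr\omega\,\textup{N},
\]
which is precisely the claimed factorization after moving the term $\frac{1}{2}\tr\omega\,\textup{N}$ to the other side. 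That the right-hand side of (\ref{G5}) defines a densely defined, non-negative, self-adjoint operator $\mathrm{\hat{H}}^{(\omega)}$ weakly equivalent to (\ref{G1}) then follows from the Friedrichs-type argument of Section~\ref{Sec_FQ1} together with $[\mathrm{H}^{(\omega)},\textup{N}]=0$, which is evident from the equal number of creation and annihilation factors in (\ref{G1}).

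For the ground state I would take $|\Omega^{(\omega)})$ to be the vector (\ref{G2}) with $|\varphi_{0})$ the (generalized) free ground state, i.e.\ the state with constant $N$-particle components; equivalently, $|\Omega^{(\omega)})$ is the normalizable cyclic vector whose $N$-particle component is proportional to $\exp\big(-\frac{1}{2}\sum_{j=\overline{1,N}}\langle x_{j}|\omega x_{j}\rangle\big)$, the Gaussian factor supplied by $D$ rendering it square-integrable. Since $\mathrm{K}(x)|\varphi_{0})=0$ for all $x$, the intertwining identity of the first paragraph gives $[\mathrm{K}(x)+\omega x\rho(x)]|\Omega^{(\omega)})=0$ for every $x\in\mathbb{R}^{m}$ --- this is the second relation in (\ref{G6}), and it can equally be verified directly on each $N$-particle sector using (\ref{FQeq2.41b}) together with $\delta(x-x_{j})\,\omega x=\delta(x-x_{j})\,\omega x_{j}$. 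Feeding this into (\ref{G5}) annihilates the factorized part and leaves $\mathrm{\hat{H}}^{(\omega)}|\Omega^{(\omega)})=\frac{1}{2}\tr\omega\,\textup{N}|\Omega^{(\omega)})$, the first relation in (\ref{G6}); moreover, since the factorized part of (\ref{G5}) is manifestly non-negative (a sum of squares paired against $\rho^{-1}(x)\geqslant 0$) and commutes with $\textup{N}$, the eigenvalue $\frac{1}{2}N\tr\omega$ is the bottom of the spectrum of $\mathrm{\hat{H}}^{(\omega)}$ on the $N$-particle sector, so $|\Omega^{(\omega)})$ is indeed the ground state.

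The delicate point, as always for these current-algebra factorizations, is analytic rather than algebraic: justifying the conjugation $D\,\mathrm{H}^{(0)}\,D^{-1}$ and the relation $[\mathrm{K}(x)+\omega x\rho(x)]|\Omega^{(\omega)})=0$ rigorously requires controlling the unbounded operator $D^{-1}$ and the distribution-valued operators $\rho^{-1}(x)$ and $\mathrm{K}(x)$ on a common dense invariant domain (e.g.\ finite-particle states weighted by the Gaussian), and then checking that the closure of the resulting quadratic form is self-adjoint --- exactly the wellposedness analysis of \cite{GoSh-3,GoGrPoSh}, which I would invoke rather than reprove.
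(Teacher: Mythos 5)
Your proposal is correct and follows essentially the same route as the paper: the paper's own justification of this proposition is precisely the chain (\ref{G2})--(\ref{G4}) (Gaussian density-operator dressing, the intertwining identity for $\mathrm{K}(x)$, and the resulting conjugation equivalence of $\mathrm{H}^{(\omega )}-\frac{1}{2}\tr \omega\,\mathrm{N}$ with the factorized free operator $\mathrm{H}^{(0)}$), together with the observation (\ref{G7}) that the dressed free vacuum satisfies $[\mathrm{K}(x)+\omega x\rho (x)]|\Omega ^{(\omega )})=0$. Your additional remarks on square-integrability of the Gaussian-weighted $N$-particle components, the non-negativity of the factorized part giving the bottom of the spectrum, and the analytic wellposedness issues only make explicit what the paper leaves implicit.
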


As a consequence from the relationship (\ref{G6}), one can easily obtain
that the ground state of the Hamiltonian (\ref{G1}) equals
\begin{equation}
|\Omega ^{(\omega )})=\exp \left( -\frac{1}{2}\int_{\mathbb{R}^{m}}\rd x\rho
(x)\langle x|\omega x\rangle \right) |\Omega ^{(0)}),\   \label{G7}
\end{equation}%
where, by definition, $\mathrm{K}(x)|\Omega ^{(0)})=0$ for all $x\in \mathbb{%
R}^{m},$ and the related average energy density $\epsilon _{0}=$ $%
\lim_{N\rightarrow \infty }\frac{N}{2N}\tr \omega =\frac{1}{2}\tr\omega .$ 
Moreover, as for any $x,y\in \mathbb{R}^{m}$, there holds the
equality
\begin{equation}
\lbrack \langle\mathrm{D}^{(\omega ),+}(x)|\rho (x)^{-1}\mathrm{D}^{(\omega
)}(x)\rangle,\langle\mathrm{D}^{(\omega ),+}(y)|\rho (y)^{-1}\mathrm{D}^{(\omega
)}(y)\rangle]=0,  \label{G8}
\end{equation}%
where, by definition, the local operator 
\begin{equation}
\mathrm{D}^{(\omega )}(x):=\mathrm{K}(x)+\omega x\rho (x),
\label{G9}
\end{equation}%
the next operators%
\begin{equation}
\mathrm{\hat{H}}^{(\omega ,p)}=\frac{1}{2}\int_{\mathbb{R}^{m}}\rd x\left( \langle
\mathrm{D}^{(\omega ),+}(x)|\rho (x)^{-1}\mathrm{D}^{(\omega )}(x)\rangle\right)
^{p}  \label{G10}
\end{equation}%
on the Fock space $\Phi $ \textit{a priori} commute to each other,
that is
\begin{equation}
\lbrack \mathrm{\hat{H}}^{(\omega ,p)},\mathrm{\hat{H}}^{(\omega ,q)}]=0
\label{G11}
\end{equation}%
for any integers $p,q\in \mathbb{Z}_{+}.$ Thus, we have stated the following
quantum integrability proposition.

\begin{proposition}
The nonrelativistic oscillatory quantum model (\ref{G1}) of interacting
bose-particles in the $m$-dimensional space $\mathbb{R}^{m}$ possesses a
countable hierarchy of the commuting to each other symmetric operators (%
\ref{G10}) on the suitable Fock space $\Phi _\textup{F}$ and represents a quantum
completely integrable model.
\end{proposition}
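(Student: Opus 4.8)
The plan is to deduce the proposition directly from the factorized representation (\ref{G5}) of the preceding proposition and from the pointwise commutation relation (\ref{G8}). Logically the statement breaks into three pieces: (i) the operators (\ref{G10}) pairwise commute, which is (\ref{G11}) and follows at once from (\ref{G8}) by taking powers of the commuting local densities and integrating; (ii) each $\mathrm{\hat{H}}^{(\omega,p)}$ is a densely defined symmetric (indeed non-negative) operator; and (iii) the Hamiltonian $\mathrm{\hat{H}}^{(\omega)}$ itself lies in the commutative algebra generated by the family $\{\mathrm{\hat{H}}^{(\omega,p)}:p\in\mathbb{Z}_{+}\}$ together with the particle-number operator $\mathrm{N}$, which is precisely the meaning of quantum complete integrability for this model.

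To make the mechanism behind (\ref{G8})--(\ref{G11}) transparent I would introduce the similarity (gauge) operator $\mathrm{S}:=\exp\big(-\frac{1}{2}\int_{\mathbb{R}^{m}}\rho(x)\langle x|\omega x\rangle\,\rd x\big)$. By (\ref{G2})--(\ref{G3}) it intertwines the shifted local operator (\ref{G9}) with the free one, $\mathrm{K}(x)\mathrm{S}=\mathrm{S}\,\mathrm{D}^{(\omega)}(x)$, i.e.\ $\mathrm{D}^{(\omega)}(x)=\mathrm{S}^{-1}\mathrm{K}(x)\mathrm{S}$; taking adjoints and using that $\mathrm{S}$ is a function of $\rho$ (so $\mathrm{S}^{\ast}=\mathrm{S}$ and $[\mathrm{S},\rho(y)]=0$) gives $\mathrm{D}^{(\omega),+}(x)=\mathrm{S}\,\mathrm{K}^{+}(x)\,\mathrm{S}^{-1}$. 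Hence the shifted energy densities $\langle\mathrm{D}^{(\omega),+}(x)|\rho^{-1}(x)\mathrm{D}^{(\omega)}(x)\rangle$ are the $\mathrm{S}$-conjugates of the free ones, and (consistently with (\ref{G4})) each $\mathrm{\hat{H}}^{(\omega,p)}$ is the $\mathrm{S}$-conjugate of the corresponding free operator $\mathrm{\hat{H}}^{(0,p)}$, so it suffices to verify (\ref{G8}), (\ref{G11}) for $\omega=0$. Restricting to the $N$-particle sector $\Phi_{N}\subset\Phi$, formula (\ref{FQeq2.41b}) gives $\mathrm{K}_{N}(x)=\sum_{j=\overline{1,N}}\delta(x-x_{j})\nabla_{x_{j}}$ with $\Omega_{N}^{(0)}=\mathrm{const}$, so that $\mathrm{\hat{H}}^{(0,p)}|_{\Phi_{N}}$ --- after the normal-ordering/regularization of its singular structure, analyzed in \cite{GoSh-3} --- is a symmetric polynomial in the mutually commuting one-particle Laplacians $\{-\Delta_{x_{j}}\}_{j=\overline{1,N}}$; two such polynomials commute on $\Phi_{N}$, and since $\Phi=\oplus_{N\in\mathbb{Z}_{+}}\Phi_{N}$ is the decomposition into eigenspaces of $\mathrm{N}$, this yields $[\mathrm{\hat{H}}^{(\omega,p)},\mathrm{\hat{H}}^{(\omega,q)}]=0$ on all of $\Phi$.

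Then I would assemble the proposition. Each density $\langle\mathrm{D}^{(\omega),+}(x)|\rho^{-1}(x)\mathrm{D}^{(\omega)}(x)\rangle=\sum_{\mu=\overline{1,m}}\mathrm{D}^{(\omega),+}_{\mu}(x)\rho^{-1}(x)\mathrm{D}^{(\omega)}_{\mu}(x)$ is formally self-adjoint (the adjoint of $B^{+}CB$ with $C=\rho^{-1}(x)=C^{+}$ and $B=\mathrm{D}^{(\omega)}_{\mu}(x)$ is again $B^{+}CB$), hence so is every power of it and the integral (\ref{G10}); combined with the Friedrichs-type argument of Section~\ref{Sec_FQ1} this makes each $\mathrm{\hat{H}}^{(\omega,p)}$ a densely defined non-negative self-adjoint operator on $\Phi$. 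Moreover $\mathrm{N}=\int_{\mathbb{R}^{m}}\rho(x)\,\rd x$ commutes with $\rho$ and with $J$, hence with $\mathrm{K}$, hence with $\mathrm{D}^{(\omega)}(x)$ and with every $\mathrm{\hat{H}}^{(\omega,p)}$; and by (\ref{G5}) one has $\mathrm{\hat{H}}^{(\omega)}=\mathrm{\hat{H}}^{(\omega,1)}+\frac{1}{2}\tr\omega\,\mathrm{N}$. Therefore $\{\mathrm{N}\}\cup\{\mathrm{\hat{H}}^{(\omega,p)}:p\in\mathbb{Z}_{+}\}$ is a countable family of pairwise commuting symmetric operators whose commutative algebra contains $\mathrm{\hat{H}}^{(\omega)}$ itself --- which is precisely the claimed quantum complete integrability.

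The hard part is not the algebra but the functional analysis: turning the higher operators $\mathrm{\hat{H}}^{(\omega,p)}$, $p\geqslant 2$, into genuine essentially self-adjoint operators, since the pointwise powers $\langle\mathrm{D}^{(\omega),+}(x)|\rho^{-1}(x)\mathrm{D}^{(\omega)}(x)\rangle^{p}$ and their $x$-integrals involve products of distributions and require the same kind of regularization as the $p=1$ operator treated in \cite{GoSh-3}. The point to be checked carefully is that all these operators share a common dense invariant core --- e.g.\ the linear span of the vectors $\mathrm{U}(f)|\Omega^{(\omega)})$, $f\in F$, or the smooth finite-particle vectors --- on which the commutators (\ref{G11}) hold literally rather than only formally.
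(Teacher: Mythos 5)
Your top-level skeleton coincides with the paper's own argument: the paper likewise simply asserts the pointwise commutation (\ref{G8}) and then observes that powers and $x$-integrals of pointwise-commuting densities commute, giving (\ref{G11}) and the proposition; your items (ii)--(iii) (formal symmetry of (\ref{G10}), commutation with $\mathrm{N}$, and $\mathrm{\hat{H}}^{(\omega )}=\mathrm{\hat{H}}^{(\omega ,1)}+\frac{1}{2}\tr \omega \,\mathrm{N}$ from (\ref{G5})) are consistent with that and unobjectionable at the paper's level of rigour. The genuine gap is in the mechanism you add in order to actually \emph{prove} (\ref{G8})--(\ref{G11}), namely the reduction to $\omega =0$ by conjugation with $\mathrm{S}=\exp \big( -\frac{1}{2}\int_{\mathbb{R}^{m}}\rho (x)\langle x|\omega x\rangle \,\rd x\big) .$ The intertwining $\mathrm{D}^{(\omega )}(x)=\mathrm{S}^{\mp 1}\mathrm{K}(x)\mathrm{S}^{\pm 1}$ is fine (up to the sign convention in (\ref{G3})), but precisely because $\mathrm{S}$ is self-adjoint the adjoint conjugates the \emph{opposite} way, $\mathrm{D}^{(\omega ),+}(x)=\mathrm{S}^{\pm 1}\mathrm{K}^{+}(x)\mathrm{S}^{\mp 1},$ so that $\langle \mathrm{D}^{(\omega ),+}(x)|\rho ^{-1}(x)\mathrm{D}^{(\omega )}(x)\rangle =\mathrm{S}^{\pm 1}\mathrm{K}^{+}(x)\rho ^{-1}(x)\mathrm{S}^{\mp 2}\mathrm{K}(x)\mathrm{S}^{\pm 1}$ is a twisted product and \emph{not} an $\mathrm{S}$-conjugate of $\langle \mathrm{K}^{+}(x)|\rho ^{-1}(x)\mathrm{K}(x)\rangle .$ Consequently $\mathrm{\hat{H}}^{(\omega ,p)}\neq \mathrm{S}^{\mp 1}\mathrm{\hat{H}}^{(0,p)}\mathrm{S}^{\pm 1}$: already at $p=1$ your claim would make the oscillator similar to the free Hamiltonian, which is impossible since similarity preserves spectra (purely discrete versus absolutely continuous); concretely, in the one-particle sector $\frac{1}{2}(-\nabla +\omega x)\cdot (\nabla +\omega x)=-\frac{1}{2}\Delta +\frac{1}{2}|\omega x|^{2}-\frac{1}{2}\tr \omega $ while $\re^{W}(-\frac{1}{2}\Delta )\re^{-W}=-\frac{1}{2}\Delta +\langle \omega x|\nabla \rangle -\frac{1}{2}|\omega x|^{2}+\frac{1}{2}\tr \omega $ with $W=\frac{1}{2}\langle x|\omega x\rangle .$ (The loose wording around (\ref{G4}) in the paper has the same defect, but the paper does not use it to prove this proposition.) Hence ``it suffices to verify (\ref{G8}), (\ref{G11}) for $\omega =0$'' is unjustified, and your subsequent $N$-particle step, phrased as symmetric polynomials in the free Laplacians $-\Delta _{x_{j}}$, covers only the free case.

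The repair is to argue directly for arbitrary $\omega $, which is also what the paper's bare assertion of (\ref{G8}) amounts to: by (\ref{FQeq2.41b}), on the $N$-particle subspace one has $\mathrm{D}_{N}^{(\omega )}(x)=\sum_{j=\overline{1,N}}\delta (x-x_{j})(\nabla _{x_{j}}+\omega x_{j}),$ so each density $\langle \mathrm{D}^{(\omega ),+}(x)|\rho ^{-1}(x)\mathrm{D}^{(\omega )}(x)\rangle $ is ultralocal (all its $\delta $-factors are concentrated at the single point $x$), whence densities at distinct points $x\neq y$ commute in the formal distributional sense --- that is the actual content of (\ref{G8}); after the same diagonal regularization you invoke, the $N$-particle projections of (\ref{G10}) become symmetric functions of the mutually commuting one-particle operators $\frac{1}{2}(-\nabla _{x_{j}}+\omega x_{j})\cdot (\nabla _{x_{j}}+\omega x_{j}),$ $j=\overline{1,N},$ which yields (\ref{G11}) on every $N$-particle sector and hence on $\Phi _\text{F}.$ With that replacement your assembly of the proposition goes through; your closing caveat about the regularization of the higher densities and a common dense invariant core on which (\ref{G11}) holds non-formally is well taken and is indeed left unaddressed by the paper as well.
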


The related $N$-particle differential expressions for the operators (\ref%
{G10}) can be calculated as their corresponding projections on the $N$%
-particle Fock subspace $\Phi _{N}^{(s)}\subset \Phi _\text{F},$ $N\in \mathbb{Z}%
_{+}.$ In particular, if a vector $\ |\varphi _{N})\in \Phi _{N}\ $is
representable as
\begin{equation}
|\varphi _{N}):=\int_{\mathbb{R}^{mN}}f_{N}(x_{1},x_{2},\ldots ,x_{N})\prod%
\limits_{j=\overline{1,N}}\rd x_{j}\psi ^{+}(x_{j})|0)  \label{G12}
\end{equation}%
for some coefficient function $f_{N}\in L_{2}^{(s)}(\mathbb{R}^{mN};\mathbb{C%
}),$ then, by definition,
\begin{equation}
\mathrm{\hat{H}}^{(\omega ,p)}|\varphi _{N}):=|\varphi _{N}^{(p)}),
\label{G13}
\end{equation}%
where
\begin{equation}
|\varphi _{N}^{(p)})=\int_{\mathbb{R}^{mN}}(H_{N}^{(\omega
,p)}f_{N})(x_{1},x_{2},\ldots ,x_{N})\prod\limits_{j=\overline{1,N}}\rd x_{j}\psi
^{+}(x_{j})|0)  \label{G14}
\end{equation}%
for a given $p\in \mathbb{Z}_{+}$ and arbitrary finite $N\in \mathbb{Z}_{+}.$

The  differential operators $H_{N}^{(\omega ,p)}:
L_{2}^{(s)}(\mathbb{R}^{mN};\mathbb{C})\rightarrow L_{2}^{(s)}(\mathbb{R}%
^{mN};\mathbb{C}),p$ $\in \mathbb{Z}_{+}$ obtained this way, respectively, also commute
to each other, as this follows from (\ref{G11}), giving rise to the
quantum integrability of the $N$-particle oscillatory Hamiltonian model $%
\mathrm{\hat{H}}_{N}^{(\omega )}=\mathrm{\hat{H}}_{N}^{(\omega ,1)}+\frac{1}{%
2}\tr \omega \,N$ for arbitrary finite $N\in \mathbb{Z}_{+}. $

\subsection{A generalized integrable many-particle oscillatory quantum model}

A generalized quantum oscillatory model of bose-particles in $ \mathbb{R}%
^{m}$ is described by the $N$-particle Hamiltonian operator
\begin{equation}
H_{N}:=\frac{1}{2}\sum_{j=\overline{1,N}}\langle \nabla _{x_{j}}|\nabla _{x_{j}}\rangle +%
\frac{1}{2}\sum_{j,k=\overline{1,N}}\langle \omega _{N}(x_{j}-x_{k})|\omega
_{N}(x_{j}-x_{k})\rangle \   \label{K1}
\end{equation}%
on $L_{2}^{(s)}(\mathbb{R}^{mN};\mathbb{C}),$ parameterized by a positive
definite interaction matrix $\omega _{N}\in \End \mathbb{R}^{m},N\in
\mathbb{Z}_{+}.$ In the case when this interaction matrix depends on the
particle number $N\in \mathbb{Z}_{+}\ $as $\omega _{N}=\bar{\omega}\sqrt{N/2%
\text{\ }}$ for some constant positive definite matrix $\bar{\omega}\in \End \mathbb{R}^{m},$ the corresponding to (\ref{K1}) secondly quantized
Hamiltonian operator is representable as
\begin{align}
\mathrm{H}=\frac{1}{2}\int_{\mathbb{R}^{m}}\rd x\langle \nabla \psi ^{+}(x)|\nabla
\psi (x)\rangle 
+\frac{\mathrm{N}}{4}\int_{\mathbb{R}^{m}\times \mathbb{R}^{m}}\rd x\rd y\psi
^{+}(x)\psi ^{+}(y)\psi (y)\psi (x)\langle \bar{\omega}(x-y)|\bar{\omega}(x-y)\rangle ,%
\label{K2}
\end{align}%
acting on a suitably chosen Fock type representation space $\Phi _\text{F}.$

Consider now a quasi-local operator\ mapping $\ \mathrm{D}(x):\Phi
\rightarrow \Phi ^{m},x\in \mathbb{R}^{m},$ equal to
\begin{equation}
\mathrm{D}(x):=K(x)+\int_{\mathbb{R}^{m}}\rd y\langle \bar{\omega}(x-y):\rho (x)\rho
(y):\,,  \label{K3}
\end{equation}%
and construct the next operator expression:
\begin{equation}
\mathrm{\hat{H}}=\frac{1}{2}\int_{\mathbb{R}^{m}}\langle \mathrm{D}^{+}(x)|\rho
(x)^{-1}\mathrm{D}(x)\rangle .  \label{K4}
\end{equation}%
Then, the following proposition holds.

\begin{proposition}
The operator expression (\ref{K4}) is equivalent on the Fock space $\Phi
_\textup{F}$ to the secondly quantized Hamiltonian operator (\ref{K2}):%
\begin{equation}
\mathrm{\hat{H}}=\mathrm{H}-\frac{\tr \bar{\omega}}{2}\mathrm{%
N}(\mathrm{N}-1).  \label{K5}
\end{equation}
\end{proposition}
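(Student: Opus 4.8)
The plan is to expand the factorized operator (\ref{K4}) about the free one and read off (\ref{K5}) piece by piece. Write $\mathrm{D}(x)=\mathrm{K}(x)+\mathrm{B}(x)$ with $\mathrm{B}(x):=\int_{\mathbb{R}^m}\rd y\,\bar{\omega}(x-y):\rho(x)\rho(y):$. Since $\bar{\omega}(0)=0$, normal ordering collapses $\mathrm{B}$ to $\mathrm{B}(x)=\rho(x)\tilde{\omega}(x)$ with the self-adjoint field $\tilde{\omega}(x):=\int_{\mathbb{R}^m}\rd y\,\bar{\omega}(x-y)\rho(y)$, and one checks that $\mathrm{B}(x)=\rho(x)\nabla_x\frac{\delta}{\delta\rho(x)}\mathrm{W}(\rho)$ for $\mathrm{W}(\rho)=\frac{1}{4}\int\int\langle(x-y)|\bar{\omega}(x-y)\rangle:\rho(x)\rho(y):\rd x\,\rd y$; in particular $\mathrm{B}$ and $\tilde{\omega}$ are functions of the density operator alone, hence commute with all $\rho(y)$. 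Using $\mathrm{D}^{+}(x)=\mathrm{K}^{+}(x)+\mathrm{B}(x)$, the relation $\mathrm{K}(x)+\mathrm{K}^{+}(x)=\nabla_x\rho(x)$, and the cancellations $\rho^{-1}\mathrm{B}=\tilde{\omega}$, $\mathrm{B}^{+}\rho^{-1}\mathrm{B}=\rho\,\tilde{\omega}^{2}$ (immediate from $\mathrm{B}=\rho\tilde{\omega}$), the bilinear form (\ref{K4}) splits as $2\mathrm{\hat{H}}=P_1+P_2+P_3$ with
\[
P_1=\int_{\mathbb{R}^m}\langle\mathrm{K}^{+}|\rho^{-1}\mathrm{K}\rangle\,\rd x,\qquad P_2=\int_{\mathbb{R}^m}\big(\langle\mathrm{K}^{+}|\tilde{\omega}\rangle+\langle\tilde{\omega}|\mathrm{K}\rangle\big)\rd x,\qquad P_3=\int_{\mathbb{R}^m}\rho\,\langle\tilde{\omega}|\tilde{\omega}\rangle\,\rd x .
\]

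Next I would identify the three pieces. For $P_1$: by the $\mathrm{V}\equiv 0$, zero-ground-state-energy instance of the factorization (\ref{FQeq2.35b}) recalled above, $P_1=\int_{\mathbb{R}^m}\langle\nabla\psi^{+}(x)|\nabla\psi(x)\rangle\rd x$, twice the kinetic term of (\ref{K2}). For $P_2$: since $\tilde{\omega}$ is a function of $\rho$, the current algebra relations (\ref{FQeq2.17}), (\ref{FQeq2.37b}) give $\sum_a[\tilde{\omega}_a(x),\mathrm{K}_a(x)]=\rho(x)\tr\bar{\omega}$ (the gradient $\nabla_x[\bar{\omega}(x-y)]$ being the constant matrix $\bar{\omega}$), so $\langle\mathrm{K}^{+}|\tilde{\omega}\rangle+\langle\tilde{\omega}|\mathrm{K}\rangle=\langle\nabla\rho|\tilde{\omega}\rangle+\rho\,\tr\bar{\omega}$ pointwise; integrating $\int\langle\nabla\rho|\tilde{\omega}\rangle\rd x$ by parts (with $\langle\nabla_x|\bar{\omega}(x-y)\rangle=\tr\bar{\omega}$) turns it into $-\tr\bar{\omega}\,\mathrm{N}^{2}$, whence $P_2=-\tr\bar{\omega}\,\mathrm{N}^{2}+\tr\bar{\omega}\,\mathrm{N}=-\tr\bar{\omega}\,\mathrm{N}(\mathrm{N}-1)$. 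For $P_3$: $P_3=\int\int\int\langle\bar{\omega}(x-y)|\bar{\omega}(x-z)\rangle\rho(x)\rho(y)\rho(z)$, and by the elementary bilinear identity $\sum_j\big|\sum_l\bar{\omega}(x_j-x_l)\big|^{2}=\frac{N}{2}\sum_{j\ne k}\langle\bar{\omega}(x_j-x_k)|\bar{\omega}(x_j-x_k)\rangle$ — checked on each $N$-particle sector by direct expansion, using the antisymmetry $\bar{\omega}(x_j-x_l)=-\bar{\omega}(x_l-x_j)$ (equivalently $\sum_j v_j=0$ for $v_j:=\sum_l\bar{\omega}(x_j-x_l)$) — one gets $P_3=\frac{\mathrm{N}}{2}\int\int:\rho(x)\rho(y):\langle\bar{\omega}(x-y)|\bar{\omega}(x-y)\rangle\,\rd x\,\rd y$, twice the interaction term of (\ref{K2}). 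Adding up, $2\mathrm{\hat{H}}=\int\langle\nabla\psi^{+}|\nabla\psi\rangle\rd x-\tr\bar{\omega}\,\mathrm{N}(\mathrm{N}-1)+\frac{\mathrm{N}}{2}\int\int:\rho\rho:|\bar{\omega}|^{2}=2\mathrm{H}-\tr\bar{\omega}\,\mathrm{N}(\mathrm{N}-1)$, which is (\ref{K5}).

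The genuinely delicate point — as in \cite{GoSh-3} and in the oscillatory examples above — is the meaning of the singular factor $\rho(x)^{-1}$: the cancellations $\rho^{-1}\mathrm{B}=\tilde{\omega}$, $\mathrm{B}^{+}\rho^{-1}\mathrm{B}=\rho\tilde{\omega}^{2}$ and the free factorization of $P_1$ must be read through the diagonal-coincidence prescription for $\rho(x)^{-1}$. The safe way to make this precise, and simultaneously to check every normal-ordering constant above, is to work fibrewise over the particle number: both (\ref{K4}) and (\ref{K2}) commute with $\mathrm{N}$, so it suffices to compare their restrictions to each $\Phi_N\subset\Phi_\text{F}$ and sum over $N\in\mathbb{Z}_{+}$. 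On $\Phi_N$, by (\ref{FQeq2.41b}) and $:\rho(x)\rho(y):\to\sum_{j\ne k}\delta(x-x_j)\delta(y-x_k)$, one has $\mathrm{D}_N(x)=\sum_j\delta(x-x_j)(\nabla_{x_j}+v_j)=\sum_j\delta(x-x_j)\exp(-W_N)\nabla_{x_j}\exp(W_N)$ with $W_N=\frac{1}{2}\sum_{k<l}\langle(x_k-x_l)|\bar{\omega}(x_k-x_l)\rangle$ and $v_j=\nabla_{x_j}W_N=\sum_{l\ne j}\bar{\omega}(x_j-x_l)$; the $\rho_N(x)^{-1}$-collapse then gives $\mathrm{\hat{H}}|_{\Phi_N}=\frac{1}{2}\sum_j(\nabla_{x_j}+v_j)^{+}(\nabla_{x_j}+v_j)=-\frac{1}{2}\sum_j\Delta_{x_j}-\frac{1}{2}\sum_j\langle\nabla_{x_j}|v_j\rangle+\frac{1}{2}\sum_j\langle v_j|v_j\rangle$, and the two evaluations $\sum_j\langle\nabla_{x_j}|v_j\rangle=N(N-1)\tr\bar{\omega}$ and $\sum_j\langle v_j|v_j\rangle=\frac{N}{2}\sum_{j\ne k}\langle\bar{\omega}(x_j-x_k)|\bar{\omega}(x_j-x_k)\rangle$ reproduce $H_N-\frac{N(N-1)}{2}\tr\bar{\omega}$, that is the restriction of $\mathrm{H}-\frac{\tr\bar{\omega}}{2}\mathrm{N}(\mathrm{N}-1)$ to $\Phi_N$ once $H_N$ of (\ref{K1}) is matched with $\omega_N=\bar{\omega}\sqrt{N/2}$. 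The Fock-space expansion above is then simply the shadow of this fibrewise computation; the hardest part is thus not any single algebraic step but pinning down the $\rho^{-1}$-calculus rigorously, for which one relies on the wellposedness results already cited.
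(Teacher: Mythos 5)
Your proposal is correct, and its Fock-space half is essentially the paper's own proof reorganized: both expand the quadratic form (\ref{K4}) into kinetic, cross and cubic-in-$\rho$ pieces; both take for granted the free factorization $\frac{1}{2}\int\langle\mathrm{K}^{+}(x)|\rho(x)^{-1}\mathrm{K}(x)\rangle\,\rd x=\frac{1}{2}\int\langle\nabla\psi^{+}(x)|\nabla\psi(x)\rangle\,\rd x$; both extract $-\frac{\tr\bar{\omega}}{2}\,\mathrm{N}(\mathrm{N}-1)$ from the cross terms (your commutator $\sum_{a}[\tilde{\omega}_{a}(x),\mathrm{K}_{a}(x)]=\rho(x)\tr\bar{\omega}$ is exactly the source of the $\frac{\tr\bar{\omega}}{2}\int\rho(x)\,\rd x$ term appearing in (\ref{K6})); and both reduce the triple-density term to the two-body interaction times $\mathrm{N}$ by exploiting the linearity $\bar{\omega}(x-z)-\bar{\omega}(y-z)=\bar{\omega}(x-y)$ --- in your case through the sector identity $\sum_{j}|v_{j}|^{2}=\frac{N}{2}\sum_{j\neq k}|\bar{\omega}(x_{j}-x_{k})|^{2}$, in the paper through the six-term symmetrization (\ref{K7}). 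What is genuinely different is your second, fibrewise computation on the $N$-particle subspaces using (\ref{FQeq2.41b}) and the gauge form $\nabla_{x_{j}}+v_{j}=\re^{-W_{N}}\nabla_{x_{j}}\,\re^{W_{N}}$: the paper never descends to $\Phi_{N}$ in this proof, and this extra check buys two things --- a concrete prescription for the singular factor $\rho(x)^{-1}$, and automatic bookkeeping of the normal-ordering contractions, which the intermediate lines of (\ref{K7}) handle only loosely (they are not individually exact operator identities, though the final result (\ref{K8}) is). So the proposal stands, with the per-$N$ verification serving as a useful complement to, rather than a departure from, the paper's method.
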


\begin{proof}
It is sufficient to successfully calculate the operator expression (\ref{K4}):%
\begin{align}
\mathrm{\hat{H}}&=\frac{1}{2}\int_{\mathbb{R}^{m}}\rd x\langle \nabla \psi
^{+}(x)|\nabla \psi (x)\rangle +\frac{1}{2}\int_{\mathbb{R}^{m\times 2}}\rd x\rd y\langle \nabla
\psi ^{+}(x)\psi (x)|\bar{\omega}(x-y)\rangle\rho (y) \nonumber \\
&+\frac{1}{2}\int_{\mathbb{R}^{m\times 2}}\rd x\rd y\rho (y)\langle \bar{\omega}(x-y)|\psi
^{+}(x)\nabla \psi (x)\rangle  
+\frac{1}{2}\int_{\mathbb{R}^{m\times 3} }\rd x\rd y\rd z\rho (x)\rho (y)\rho (z)\langle %
\bar{\omega}(x-y)|\bar{\omega}(x-z)\rangle  \nonumber  \\
&=\frac{1}{2}\int_{\mathbb{R}^{m}}\rd x\langle \nabla \psi ^{+}(x)|\nabla \psi (x)\rangle +%
\frac{1}{2}\int_{\mathbb{R}^{m\times 2}}\rd x\rd y\langle \nabla (\psi ^{+}(x)\psi (x))|%
\bar{\omega}(x-y)\rangle \rho (y)  \nonumber  \\
&+\frac{\tr \bar{\omega}}{2}\int_{\mathbb{R}^{m}}\rd x\rho (x)+\frac{1}{2}%
\int_{\mathbb{R}^{m\times 3} }\rd x\rd y\rd z\rho (x)\rho (y)\rho (z)\langle \bar{\omega}%
(x-y)|\bar{\omega}(x-z)\rangle \nonumber  \\
&=\frac{1}{2}\int_{\mathbb{R}^{m}}\rd x\langle \nabla \psi ^{+}(x)|\nabla \psi (x)\rangle -%
\frac{\tr \bar{\omega}}{2}\int_{\mathbb{R}^{m\times
2}}\rd x\rd y\rho (x)\rho (y) \nonumber  \\
&+\frac{1}{2}\int_{\mathbb{R}^{m\times 3} }\rd x\rd y\rd z\rho (x)\rho (y)\rho (z)\langle %
\bar{\omega}(x-y)|\bar{\omega}(x-z)\rangle 
=\frac{1}{2}\int_{\mathbb{R}^{m}}\rd x\langle \nabla \psi ^{+}(x)|\nabla \psi (x)\rangle -%
\frac{\tr \bar{\omega}}{2}\mathrm{N}(\mathrm{N-1)} \nonumber  \\
&+\frac{1}{2}\int_{\mathbb{R}^{m\times 3} }\rd x\rd y\rd z\rho (x)\rho (y)\rho (z)\langle %
\bar{\omega}(x-y)|\bar{\omega}(x-z)\rangle .%
\label{K6}
\end{align}%
The latter component of the operator expression (\ref{K6}) should be
symmetrized to be representable as the corresponding part of the\ particle
interaction operator $\mathrm{V}(\rho ):\Phi \rightarrow \Phi :$
\begin{align}
\mathrm{V}(\rho )&:=\frac{1}{2}\int_{\mathbb{R}^{m\times 3}\ }\rd x\rd y\rd z\rho
(x)\rho (y)\rho (z)\langle \bar{\omega}(x-y)|\bar{\omega}(x-z)\rangle \nonumber  \\
&=\frac{1}{2\cdot 3!}\int_{\mathbb{R}^{m\times 3}\ }\rd x\rd y\rd z:\rho (x)\rho
(y)\rho (z):  \lbrack \langle \bar{\omega}(x-y)|\bar{\omega}(x-z)\rangle +\langle \bar{\omega}(x-z)|\bar{%
\omega}(x-y)\rangle \nonumber \\
&+\langle \bar{\omega}(y-z)|\bar{\omega}(y-x)\rangle +\langle \bar{\omega}(y-x)|\bar{\omega}(y-z)\rangle 
+\langle \bar{\omega}(z-x)|\bar{\omega}(z-y)\rangle +\langle \bar{\omega}(z-y)|\bar{\omega}%
(z-x)\rangle ] \nonumber  \\
&=\frac{1}{2\cdot 3!}\int_{\mathbb{R}^{m\times 3}\ }\rd x\rd y\rd z:\rho (x)\rho
(y)\rho (z):  \lbrack \langle \bar{\omega}(x-y)|\bar{\omega}(x-z)\rangle +\langle \bar{\omega}(y-z)|\bar{%
\omega}(y-x)\rangle \nonumber  \\
&+\langle \bar{\omega}(x-z)|\bar{\omega}(x-y)\rangle +\langle \bar{\omega}(z-x)|\bar{\omega}(z-y)\rangle 
+\langle \bar{\omega}(y-x)|\bar{\omega}(y-z)\rangle +\langle \bar{\omega}(z-y)|\bar{\omega}%
(z-x)\rangle ] \nonumber  \\
&=\frac{1}{2\cdot 3!}\int_{\mathbb{R}^{m\times 3}\ }\rd x\rd y\rd z:\rho (x)\rho
(y)\rho (z):[\langle \bar{\omega}(x-y)|\bar{\omega}(x-z)-\bar{\omega}(y-z)\rangle \nonumber  \\
&+\langle \bar{\omega}(x-z)|\bar{\omega}(x-y)-\bar{\omega}(z-y)\rangle +\langle \bar{\omega}(y-x)-%
\bar{\omega}(z-x)|\bar{\omega}(y-z)\rangle ] \nonumber  \\
&=\frac{3}{2\cdot 3!}\left( \int_{\,\mathbb{R}^{m}}\rd z\rho (z)\right) \int_{%
\mathbb{R}^{m\times 3}\ }\rd x\rd y:\rho (x)\rho (y):\langle \bar{\omega}(x-y)|\bar{\omega%
}(x-y)\rangle \nonumber  \\
&=\frac{\mathrm{N}}{4}\int_{\mathbb{R}^{m\times 3}\ }\rd x\rd y:\rho (x)\rho (y):\langle %
\bar{\omega}(x-y)|\bar{\omega}(x-y)\rangle . 
\label{K7}
\end{align}%
Having substituted the interaction potential (\ref{K7}) into the
operator expression (\ref{K6}), we obtain 
\begin{align}
\mathrm{H}&=\frac{1}{2}\int_{\mathbb{R}^{m}}\rd x\langle \nabla \psi ^{+}(x)|\nabla
\psi (x)\rangle -\frac{\tr \bar{\omega}}{2}\mathrm{N}(\mathrm{N-1)}
\nonumber\\
&+\frac{\mathrm{N}}{4}\int_{\mathbb{R}^{m}\times \mathbb{R}^{m}}\rd x\rd y\psi
^{+}(x)\psi ^{+}(y)\psi (y)\psi (x)\langle \bar{\omega}(x-y)|\bar{\omega}(x-y)\rangle ,%
\label{K8}
\end{align}%
that is exactly the operator relationship (\ref{K5}), proving the
proposition.
\end{proof}

\begin{remark}
It is worthy to remark here that owing to its construction, the operator
mappings \linebreak $\langle \mathrm{D}^{+}(x)|\rho (x)^{-1}\mathrm{D}(x)\rangle :\Phi \rightarrow
\Phi ,x\in \mathbb{R}^{m},$  commute to each other, that is
\begin{equation}
\lbrack \langle \mathrm{D}^{+}(x)|\rho (x)^{-1}\mathrm{D}(x)\rangle ,\langle \mathrm{D}%
^{+}(y)|\rho (y)^{-1}\mathrm{D}(y)\rangle ]=0  \label{K9}
\end{equation}%
for any $x,y\in \mathbb{R}^{m}.$ This naturally makes it possible to
construct a countable hierarchy of commuting to each other operators $\ \
\mathrm{\hat{H}}^{(p)}:\Phi \rightarrow \Phi ,p\in \mathbb{Z}_{+},$ where $%
\mathrm{\ }$
\begin{equation}
\mathrm{\hat{H}}^{(p)}:=\int_{\mathbb{R}^{m}}\rd x\left( \langle \mathrm{D}%
^{+}(x)|\rho (x)^{-1}\mathrm{D}(x)\rangle \right) ^{p},  \label{K10}
\end{equation}%
that is
\begin{equation}
\lbrack \mathrm{\hat{H}}^{(p)},\mathrm{\hat{H}}^{(q)}]=0  \label{K11}
\end{equation}%
for all $p,q\in \mathbb{Z}_{+}.$ The latter, in particular, means that our
generalized quantum oscillatory model (\ref{K1}) is also integrable.
\end{remark}

Consider now the ground state vector $|\Omega ^{(\omega )})\in \Phi ,$
satisfying the conditions%
\begin{equation}
\mathrm{\hat{H}}|\Omega ^{(\omega )})=0, \qquad \mathrm{D}(x)|\Omega
^{(\omega )})=0  \label{K12}
\end{equation}%
for any $x\in \mathbb{R}^{m}.$ Then, as easily follows from (\ref{K5}), \
the oscillatory quantum Hamiltonian \ (\ref{K2}) satisfies the constraint
\begin{equation}
(\Omega ^{(\omega )}|\mathrm{H}|\Omega ^{(\omega )})=\frac{\tr \bar{%
\omega}}{2}\left(||\mathrm{N}|\Omega ^{(\omega )})||^{2}-||\mathrm{N}%
^{1/2}|\Omega ^{(\omega )})||^{2}\right)\geqslant 0,  \label{K13}
\end{equation}%
meaning, in part, that the quantum oscillatory Hamiltonian (\ref{K1}) is
bounded from below.

\section{The spatially one-dimensional integrable quantum systems, their
operator symmetries and related quantum current algebra representations}

\subsection{The Calogero-Moser-Sutherland quantum model: the current algebra
representation, the Hamiltonian reconstruction and integrability}

The periodic Calogero-Moser-Sutherland quantum bosonic model on the finite \
interval $[0,l]\simeq $ $\mathbb{R}/[0,l]\mathbb{Z}$ is governed by the $N$%
-particle Hamiltonian%
\begin{equation}
H_{N}:=-\sum_{j=\overline{1,N}}\frac{\partial ^{2}}{\partial x_{j}^{2}}%
+\sum_{j\neq k=\overline{1,N}}\frac{\piup ^{2}\beta (\beta -1)}{l^{2}\sin ^{2}[%
\frac{\piup }{l}(x_{j}-x_{k})]}  \label{C3.1}
\end{equation}%
in the symmetric Hilbert space $L_{2}^{(s)}([0,l]^{N};\mathbb{C}),$ where $%
N\in \mathbb{Z}_{+}$ and $\beta \in \mathbb{R}$ is an interaction parameter.
 As it was stated in  very interesting and highly speculative works \cite%
{LaVi}, there exist linear differential operators
\begin{equation}
\mathcal{D}_{j}:=\frac{\partial }{\partial x_{j}}-\frac{\piup \beta }{l}%
\sum_{k=\overline{1,N},\,k\neq j}\cot\left[\frac{\piup }{l}(x_{j}-x_{k})\right]
\label{C3.2}
\end{equation}%
for $j=\overline{1,N},$ such that the Hamiltonian (\ref{C3.1}) is
factorized as the bounded from below symmetric operator
\begin{equation}
H_{N}=\sum_{j=\overline{1,N}}\mathcal{D}_{j}^{+}\mathcal{D}_{j}+E_{N},
\label{C3.3}
\end{equation}%
where
\begin{equation}
E_{N}=\frac{1}{3}\left( \frac{\piup \beta }{l}\right) ^{2}N(N^{2}-1)
\label{C3.3a}
\end{equation}%
is the groundstate energy of the Hamiltonian operator (\ref{C3.1}), which means that
 there exists such a vector $|\Omega _{N})\in $ $L_{2}^{(s)}([0,l]^{N};%
\mathbb{C}),\ $satisfying for any $ N\in \mathbb{Z}_{+}$ the
eigenfunction condition
\begin{equation}
H_{N}|\Omega _{N})=E_{N}|\Omega _{N})  \label{C3.4}
\end{equation}%
and equals
\begin{equation}
|\Omega _{N})=\prod\limits_{j<k=\overline{1,N}}\left\{ \sin \left[\frac{\piup }{l}%
(x_{j}-x_{k})\right]\right\} ^{\beta }.  \label{C3.5}
\end{equation}%
Being additionally interested in proving the quantum integrability of the
Calogero-Moser-Sutherland model (\ref{C3.1}), we proceed to its
second quantized representation \cite{BeSh,BoBo} and study it by means of
the current algebra representation approach, described above in section \ref%
{Sec_FQ1} and devised before in \cite%
{Gold,Gold-1,GoGrPoSh,GoMeSh-1,GoMeSh-2,Meni,MeSh}.

The secondly quantized form of the Calogero-Moser-Sutherland Hamiltonian
operator (\ref{C3.1}) looks as follows:
\begin{equation}
\mathrm{H}=\int_{0}^{l}\rd x\psi _{x}^{+}(x)\psi _{x}(x)+\left( \frac{\piup }{l}%
\right) ^{2}\beta (\beta -1)\int_{0}^{l}\rd x\int_{0}^{l}\rd y\frac{\psi
^{+}(x)\psi ^{+}(y)\psi (y)\psi (x)}{\sin ^{2}[\frac{\piup }{l}(x-y)]}\,,
\label{C3.6}
\end{equation}%
acting in the corresponding Fock space $\Phi _\text{F}:=\oplus _{n\in \mathbb{Z}%
_{+}}\Phi _{n}^{(s)},\ \Phi _{n}^{(s)}\simeq L_{2}^{(s)}([0,l]^{n};\mathbb{C}%
),n\in \mathbb{Z}_{+}.$ To proceed to the current algebra representation of
the Hamiltonian operator (\ref{C3.6}), it would be useful to recall the
factorized representation (\ref{C3.3}) and preliminarily construct the
following singular Dunkl type \cite{AnAvJe,Dunk,LaVi} symmetrized $N$%
-particle singular differential operator
\begin{align}
D_{N}(x)&:=\sum_{j=\overline{1,N}}\delta (x-x_{j})\frac{\partial }{\partial
x_{j}}  \nonumber \\
&-\frac{\piup \beta }{2l}\sum_{j\neq k=\overline{1,N}}\left\{ \delta
(x-x_{j})\cot\left[\frac{\piup }{l}(x_{j}-x_{k})\right]+\delta (x-x_{k})\cot\left[\frac{\piup }{l}%
(x_{k}-x_{j})\right]\right\}
\label{C3.7}
\end{align}%
in the Hilbert space $L_{2}^{(s)}([0,l]^{N};\mathbb{C}),$ $N\in \mathbb{Z}%
_{+},$ parameterized by a running point $x\in \mathbb{R}/[0,l]\mathbb{Z}.$
The corresponding secondly quantized representation of the operator (\ref%
{C3.7}) looks as follows:
\begin{equation}
\mathrm{D}(x)=\psi ^{+}(x)\psi _{x}(x)-\frac{\piup \beta }{l}\int_{0}^{l}\rd y%
\text{ }\cot\left[\frac{\piup }{l}(x-y)\right]:\psi ^{+}(x)\psi ^{+}(y)\psi (y)\psi (x):
\label{C3.8}
\end{equation}%
for any $x\in \mathbb{R}/[0,l]\mathbb{Z}$, or in the current-density 
operator representation form as
\begin{equation}
\mathrm{D}(x)=K(x)-\frac{\piup \beta }{2l}\int_{0}^{l}\rd y\cot\left[\frac{\piup }{l}(x-y)\right]:\rho
(x)\rho (y):.%
\label{C3.9}
\end{equation}%
Now, based on the operator (\ref{C3.9}), one can formulate the following
proposition, first stated in \cite{Meni}, using a completely different
approach.

\begin{proposition}
The secondly quantized Hamiltonian operator (\ref{C3.6}) in the Fock space
$\Phi _\textup{F}$ is representable in dual to (\ref{C3.3}) factorized form as
\begin{equation}
\mathrm{H}=\int_{0}^{l}\rd x\mathrm{D}^{+}(x)\rho (x)^{-1}\mathrm{D}(x)+\mathrm{%
E}\,,  \label{C3.10}
\end{equation}%
where the ground state energy operator $\mathrm{E}:\Phi \rightarrow \Phi $
equals
\begin{equation}
\mathrm{E}=\frac{1}{3}\left( \frac{\piup \beta }{l}\right) ^{2}:\mathrm{N}%
^{3}:+\left( \frac{\piup \beta }{l}\right) ^{2}:\mathrm{N}^{2}:\,,
\label{C3.11}
\end{equation}%
where
\begin{equation}
\mathrm{N}:=\int_{0}^{l}\rho (x)\rd x  \label{C3.11a}
\end{equation}%
is the particle number operator, and satisfies the determining conditions
\begin{equation}
(\mathrm{H}-\mathrm{E})|\Omega )=0,\qquad \mathrm{D}(x)|\Omega )=0
\label{C3.12}
\end{equation}%
on the suitably renormalized groundstate $|\Omega )\in \Phi $ \ for all $%
x\in \mathbb{R}/[0,l]\mathbb{Z}.$ Moreover, for any integer $N\in \mathbb{Z%
}_{+}$, the corresponding projected vector $|\Omega _{N}):=|\Omega )|_{\Phi
_{N}}$  satisfies the following eigenfunction relationships:%
\begin{align*}
\mathrm{N}|\Omega _{N}) =N|\Omega _{N}), 
\end{align*}
\begin{align}
 \mathrm{E}|\Omega
_{N})&=\left( \frac{1}{3}\left( \frac{\piup \beta }{l}\right) ^{2}:\mathrm{N}%
^{3}:+\left( \frac{\piup \beta }{l}\right) ^{2}:\mathrm{N}^{2}:\right) |\Omega
_{N}) \nonumber \\
&=\left[ \frac{1}{3}\left( \frac{\piup \beta }{l}\right)
^{2}(N^{3}-3N^{2}+2N)+\left( \frac{\piup \beta }{l}\right) ^{2}N(N-1)\right]
|\Omega _{N}) \notag \\
&=\left[ \frac{1}{3}\left( \frac{\piup \beta }{l}\right)
^{2}(N^{3}-3N^{2}+2N+3N^{2}-3N)\right] |\Omega _{N})  \notag \\
&=\left[ \frac{1}{3}\left( \frac{\piup \beta }{l}\right) ^{2}N(N^{2}-1)\right]
|\Omega _{N}):=E_{N}|\Omega _{N}),  
\label{C3.13}
\end{align}%
coinciding exactly with the result (\ref{C3.3a}).
\end{proposition}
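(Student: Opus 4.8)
The plan is to prove (\ref{C3.10})--(\ref{C3.11}) by a direct computation of $\int_0^l \rd x\,\mathrm{D}^+(x)\rho(x)^{-1}\mathrm{D}(x)$, starting from the secondly quantized form (\ref{C3.8}) (equivalently, the current--density form (\ref{C3.9})) and opening all products via the canonical relations (\ref{FQeq2.10}), (\ref{FQeq2.17}) and the normal--ordering rule (\ref{FQeq2.34}); this is the spatially one--dimensional analogue of the calculation (\ref{K6})--(\ref{K8}) carried out above for the generalized oscillatory model. Writing $\mathrm{D}(x)=K(x)-C(x)$ with $C(x):=\frac{\piup\beta}{l}\int_0^l \rd y\,\cot[\frac{\piup}{l}(x-y)]:\rho(x)\rho(y):$ --- a self--adjoint operator, since $\cot$ is real and $:\rho(x)\rho(y):=\psi^+(x)\psi^+(y)\psi(y)\psi(x)$ is self--adjoint --- the product splits into $\int K^+\rho^{-1}K$, the cross terms $-\int K^+\rho^{-1}C-\int C\rho^{-1}K$, and $\int C\rho^{-1}C$. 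Using $K(x)=\psi^+(x)\psi_x(x)$ (from (\ref{FQeq2.36}), (\ref{FQeq2.37a})) together with the Goldin--type identity $\psi(x)\rho(x)^{-1}\psi^+(x)=1$ on the dense domain where $\rho(x)>0$, the first piece reproduces the kinetic term $\int_0^l\rd x\,\psi_x^+(x)\psi_x(x)$ of (\ref{C3.6}) exactly.

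For the cross terms I would combine $\psi_x^+(x)\psi(x)+\psi^+(x)\psi_x(x)=\partial_x\rho(x)$, integrate by parts in $x$ (no boundary terms on the circle $[0,l]$) and use $\frac{\partial}{\partial x}\cot[\frac{\piup}{l}(x-y)]=-\frac{\piup}{l}\sin^{-2}[\frac{\piup}{l}(x-y)]$; after normal--ordering --- the $\sin^{-2}$--at--coincidence singularities being absorbed by the regularization of \cite{GoSh-3} --- this yields the two--body contribution $-(\frac{\piup}{l})^2\beta\int_0^l\!\int_0^l\rd x\,\rd y\,\sin^{-2}[\frac{\piup}{l}(x-y)]:\rho(x)\rho(y):$. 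The heart of the argument, and the step I expect to be the main obstacle, is the term $\int_0^l\rd x\,C(x)\rho(x)^{-1}C(x)$: opening the two normal orderings produces a genuine triple--density term $(\frac{\piup\beta}{l})^2\int\!\int\!\int\rd x\,\rd y\,\rd z\,\cot[\frac{\piup}{l}(x-y)]\cot[\frac{\piup}{l}(x-z)]:\rho(x)\rho(y)\rho(z):$ together with the $y=z$ coincidence piece $(\frac{\piup\beta}{l})^2\int\!\int\rd x\,\rd y\,\cot^2[\frac{\piup}{l}(x-y)]:\rho(x)\rho(y):$. Symmetrizing the first over $x,y,z$ exactly as in (\ref{K7}) and invoking the classical identity $\cot\alpha\cot\beta+\cot\beta\cot\gamma+\cot\gamma\cot\alpha=1$, valid whenever $\alpha+\beta+\gamma\equiv 0\pmod{\piup}$ (here $\alpha=\frac{\piup}{l}(x-y)$, $\beta=\frac{\piup}{l}(y-z)$, $\gamma=\frac{\piup}{l}(z-x)$), collapses it to $-\frac13(\frac{\piup\beta}{l})^2:\mathrm{N}^3:$ --- this is precisely why no three--body interaction survives; using $\cot^2\theta=\sin^{-2}\theta-1$ the coincidence piece becomes $(\frac{\piup}{l})^2\beta^2\int\!\int\sin^{-2}[\frac{\piup}{l}(x-y)]:\rho(x)\rho(y):-(\frac{\piup\beta}{l})^2:\mathrm{N}^2:$. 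Adding the $\sin^{-2}$ contributions from the cross terms and from the coincidence piece produces the coefficient $(\frac{\piup}{l})^2(\beta^2-\beta)=(\frac{\piup}{l})^2\beta(\beta-1)$ of (\ref{C3.6}), while the remaining number--operator pieces assemble into $-\mathrm{E}$ with $\mathrm{E}=\frac13(\frac{\piup\beta}{l})^2:\mathrm{N}^3:+(\frac{\piup\beta}{l})^2:\mathrm{N}^2:$ and $\mathrm{N}=\int_0^l\rho(x)\rd x$; this gives (\ref{C3.10})--(\ref{C3.11}).

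Finally, for the determining conditions (\ref{C3.12}) I would work on each $N$--particle subspace $\Phi_N\subset\Phi_\text{F}$: the projection there of (\ref{C3.8}) is the symmetrized Dunkl--type operator (\ref{C3.7}), and using $\cot[\frac{\piup}{l}(x_k-x_j)]=-\cot[\frac{\piup}{l}(x_j-x_k)]$ to collapse its brace one obtains $D_N(x)=\sum_{j=\overline{1,N}}\delta(x-x_j)\mathcal{D}_j$, with $\mathcal{D}_j$ the Calogero--Sutherland factorization operators (\ref{C3.2}). Since $\frac{\partial}{\partial x_j}\ln|\Omega_N|=\frac{\piup\beta}{l}\sum_{k\neq j}\cot[\frac{\piup}{l}(x_j-x_k)]$ for the Bethe--type ground state (\ref{C3.5}), one has $\mathcal{D}_j|\Omega_N)=0$, hence $D_N(x)|\Omega_N)=0$ for all $N$ and all $x$, i.e. $\mathrm{D}(x)|\Omega)=0$ on the renormalized ground state; since $\int_0^l\rd x\,\mathrm{D}^+(x)\rho(x)^{-1}\mathrm{D}(x)$ is manifestly non--negative, $(\mathrm{H}-\mathrm{E})|\Omega)=0$ and $|\Omega)$ is indeed the ground state. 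The eigenvalue relations (\ref{C3.13}) then follow because $\mathrm{N}$ acts as multiplication by $N$ on $\Phi_N$ and each normal power $:\mathrm{N}^k:$ acts as the falling factorial $N(N-1)\cdots(N-k+1)$, so $\mathrm{E}|\Omega_N)=\big[\frac13(\frac{\piup\beta}{l})^2N(N-1)(N-2)+(\frac{\piup\beta}{l})^2N(N-1)\big]|\Omega_N)=\frac13(\frac{\piup\beta}{l})^2N(N^2-1)|\Omega_N)=E_N|\Omega_N)$, reproducing (\ref{C3.3a}). Throughout, the delicate point is the singular bookkeeping in $\int C\rho^{-1}C$ and the cross terms --- separating the three--body collapse (via the cotangent identity) from the coincidence terms and from the $\rho^{-1}$--induced singularities, and verifying that what remains is exactly the operator $\mathrm{E}$ of (\ref{C3.11}) and not some shifted variant.
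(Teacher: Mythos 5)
Your proposal is correct at the level of rigor the paper itself employs, and it is worth noting that the paper gives no explicit proof of this proposition at all: it states it with a reference to Menikoff's work and only verifies the eigenvalue arithmetic (\ref{C3.13}), otherwise relying on the analogy with the proof written out for the generalized oscillatory model in (\ref{K6})--(\ref{K8}). Your computation is exactly the missing one-dimensional analogue of that proof: the splitting of $\int_0^l\mathrm{D}^+(x)\rho(x)^{-1}\mathrm{D}(x)\,\rd x$ into kinetic, cross and $C\rho^{-1}C$ pieces mirrors (\ref{K6}); the symmetrization of the triple-density term mirrors (\ref{K7}), with the linear-kernel identity used there replaced by the trigonometric identity $\cot\alpha\cot\beta+\cot\beta\cot\gamma+\cot\gamma\cot\alpha=1$ for $\alpha+\beta+\gamma\equiv 0\ (\mathrm{mod}\ \piup)$, which is indeed what collapses the three-body term to $-\tfrac{1}{3}(\piup\beta/l)^2:\mathrm{N}^3:$; and the coincidence piece $\cot^2=\sin^{-2}-1$ supplies both the $\beta^2$ part of the two-body coefficient $\beta(\beta-1)$ of (\ref{C3.6}) and the $-(\piup\beta/l)^2:\mathrm{N}^2:$ term, so $\mathrm{E}$ in (\ref{C3.11}) comes out exactly and not as a shifted variant. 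Two remarks. First, you implicitly resolve the factor-of-two discrepancy between (\ref{C3.8}) and (\ref{C3.9}) in favour of the $\piup\beta/l$ normalization of (\ref{C3.8}), (\ref{C3.16}); that is the right choice, as your own reduction $D_N(x)=\sum_{j}\delta(x-x_j)\mathcal{D}_j$ of (\ref{C3.7}) shows, and it is what makes $\mathcal{D}_j|\Omega_N)=0$, hence (\ref{C3.12}) and, by positivity of the factorized part, the ground-state property; your first-quantized check also reproduces $E_N$ of (\ref{C3.3a}). Second, your handling of $\rho^{-1}$ and of the $\delta(x-y)\sin^{-2}$ self-energy and operator-ordering corrections in the cross terms is formal, but no more so than the paper's own computations (\ref{K6}), (\ref{C1.7}), (\ref{C2.9}); a cleaner version would regularize with a parameter $\varepsilon$ as in the Coulomb and $\delta$-interaction sections and exhibit the divergent constants before discarding them, which is what the phrase ``suitably renormalized groundstate'' is meant to cover. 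Your final eigenvalue step is the same arithmetic as (\ref{C3.13}), with the falling-factorial action of $:\mathrm{N}^k:$ on $\Phi_N$ replacing the paper's explicit normal-ordering identities (\ref{C3.14}).
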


\begin{remark}
When deriving the expression (\ref{C3.13}), we have used the identity
\begin{align}
\rho (x)\rho (y)&=\text{ }:\rho (x)\rho (y):+\rho (y)\delta (x-y), \nonumber\\
\rho (x)\rho (y)\rho (z)&=\text{ }:\rho (x)\rho (y)\rho (z):+:\rho (x)\rho
(y):\delta (y-z) \nonumber\\
&+:\rho (y)\rho (z):\delta (z-x)+:\rho (z)\rho (x):\delta (x-y)+:\rho
(x)\delta (y-z)\delta (z-x),%
\label{C3.14}
\end{align}%
which holds \cite{Bere,BoBo,GoGrPoSh,MeSh} for the density operator $\rho
:\Phi \rightarrow \Phi $ at any points $x,y,z\in \mathbb{R}/[0,l]\mathbb{Z}%
. $
\end{remark}

Observe now that the operator (\ref{C3.8}) can be rewritten down as follows:
\begin{equation}
\mathrm{D}(x)=\mathrm{K}(x)-\mathrm{A}(x),  \label{C3.15}
\end{equation}%
where
\begin{align}
\mathrm{K}(x) &:=\nabla _{x}\rho (x)/2+\ri J(x),
  \notag \\
\mathrm{A}(x) &:=\frac{\piup \beta }{l}\int_{0}^{l}\rd y\text{ }\cot\left[\frac{\piup }{l%
}(x-y)\right]:\rho (x)\rho (y): 
\label{C3.16} 
\end{align}%
for all $x\in \mathbb{R}/[0,l]\mathbb{Z}.$ Recalling now the second
condition of (\ref{C3.12}), one can rewrite it equivalently as
\begin{equation}
\mathrm{K}(x)|\Omega )=\mathrm{A}(x)|\Omega )  \label{C3.17}
\end{equation}%
on the renormalized groundstate vector $|\Omega )\in \Phi $ for all $x\in
\mathbb{R}/[0,l]\mathbb{Z}.$ On the other hand, owing to the expression (%
\ref{C3.10}), we obtain the searched  current algebra representation
\begin{equation}
\mathrm{\hat{H}}=\int_{0}^{l}\rd x[\mathrm{K}^{+}(x)-\mathrm{A}(x)]\rho
(x)^{-1}[\mathrm{K}(x)-\mathrm{A}(x)]  \label{C3.18}
\end{equation}%
of the Calogero-Moser-Sutherland Hamiltonian operator (\ref{C3.1}) in a
suitably constructed cyclic Hilbert space $\Phi ,$ as it was already
demonstrated in the work \cite{Meni,MeSh}, using the condition (\ref{C3.17}%
) in the form~(\ref{FQeq2.42}).

Now shortly discuss  the quantum integrability of the
Calogero-Moser-Sutherland model (\ref{C3.1}). Owing to the factorized
representation (\ref{C3.18} one can easily observe that for any integer $%
p\in \mathbb{Z}_{+}$, the operators $\mathrm{D}^{+}(x)\rho (x)^{-1}\mathrm{D%
}(x):\Phi \rightarrow \Phi ,x\in \mathbb{R}/[0,l]\mathbb{Z}$, commute to
each other and with the particle number operator $\mathrm{N}:\Phi
\rightarrow \Phi ,$ that is
\begin{equation}
\lbrack \mathrm{D}^{+}(x)\rho (x)^{-1}\mathrm{D}(x),\mathrm{D}^{+}(y)\rho
(y)^{-1}\mathrm{D}(y)]=0,\qquad [\mathrm{D}^{+}(x)\rho (x)^{-1}\mathrm{D}%
(x),\mathrm{N}]=0  \label{C3.19}
\end{equation}%
for any $x,y\in \mathbb{R}/[0,l]\mathbb{Z}.$ As a result of the commutation
property (\ref{C3.19}), one easily obtains that, for any integer $p\in
\mathbb{Z}_{+}$, the symmetric operators
\begin{equation}
\mathrm{\hat{H}}^{(p)}:=\int_{0}^{l}\rd x\left( \mathrm{D}^{+}(x)\rho (x)^{-1}%
\mathrm{D}(x)\right) ^{p}  \label{C3.20}
\end{equation}%
also commute to each other
\begin{equation}
\lbrack \mathrm{H}^{(p)},\mathrm{\hat{H}}^{(q)}]=0  \label{C3.21}
\end{equation}%
for all integers $p,q\in \mathbb{Z}_{+},$ and in particular, commute to the
Calogero-Moser-Sutherland Hamiltonian operator (\ref{C3.10}):%
\begin{equation}
\lbrack \mathrm{\hat{H}}^{(p)},\mathrm{\hat{H}}]=0.  \label{C3.22}
\end{equation}%
Concerning the related $N$-particle differential expressions for the
operators (\ref{C3.20}), it is sufficient to calculate their projections on
the $N$-particle Fock subspace $\Phi _{N}\subset \Phi _\text{F},$ $N\in \mathbb{Z}%
_{+},$ where one assumes that there exists a suitable embedding $\Phi
\hookrightarrow \Phi _\text{F}.$ Namely, let an arbitrary vector $\ |\varphi
_{N})\in \Phi _{N}$ be representable as
\begin{equation}
|\varphi _{N}):=\int_{[0,l]^{N}}f_{N}(x_{1},x_{2},\ldots ,x_{N})\prod\limits_{j=%
\overline{1,N}}\rd x_{j}\psi ^{+}(x_{j})|0)  \label{C3.23}
\end{equation}%
for some coefficient function $f_{N}\in L_{2}^{(s)}([0,l]^{N};\mathbb{C}).$
Then, by definition,
\begin{equation}
\mathrm{H}^{(p)}|\varphi _{N}):=|\varphi _{N}^{(p)}),  \label{C3.24}
\end{equation}%
where
\begin{equation}
|\varphi
_{N}^{(p)})=\int_{[0,l]^{N}}(H_{N}^{(p)}f_{N})(x_{1},x_{2},\ldots ,x_{N})\prod%
\limits_{j=\overline{1,N}}\rd x_{j}\psi ^{+}(x_{j})|0)  \label{C3.25}
\end{equation}%
for a given $p\in \mathbb{Z}_{+}$ any $N\in \mathbb{Z}_{+}.$ In
particular, for $p=2,$ when $\mathrm{H}^{(2)}+\mathrm{E}=\mathrm{\hat{H}}%
:\Phi \rightarrow \Phi ,$ one easily retrieves the shifted
Calogero-Moser-Sutherland Hamiltonian operator (\ref{C3.1}):%
\begin{equation}
H_{N}^{(2)}=-\sum_{j=\overline{1,N}}\frac{\partial ^{2}}{\partial x_{j}^{2}}%
+\sum_{j\neq k=\overline{1,N}}\frac{\piup ^{2}\beta (\beta -1)}{l^{2}\sin ^{2}[%
\frac{\piup }{l}(x_{j}-x_{k})]}-\left( \frac{\piup \beta }{l}\right) ^{2}\frac{%
N(N^{2}-1)}{3}.  \label{C3.26}
\end{equation}%
Respectively, for higher integers $p>2$, the resulting $N$-particle
differential operator expressions $H_{N}^{(p)}:L_{2}^{(s)}([0,l]^{N};\mathbb{%
C})$ $\rightarrow L_{2}^{(s)}([0,l]^{N};\mathbb{C}),$ $N\in \mathbb{Z}_{+},$
can be obtained the above described way by means of simple yet well
cumbersome calculations, and which will prove to be completely equivalent
to those calculated before in the earlier cited nice work \cite{LaVi}.

\begin{remark}
In the thermodynamical limit, when $\lim_{N\rightarrow \infty ,l\rightarrow
\infty }$ $N/\piup l:=\bar{\rho}>0,$ the structural operator $\mathrm{D}%
(x):\Phi \rightarrow \Phi ,x\in \mathbb{R}/[0,l]\mathbb{Z}$, reduces to
\begin{equation}
\mathrm{\ \bar{D}}(x):=\lim_{N/l\rightarrow \bar{\rho}}\mathrm{D}%
(x)=K(x)-\beta \int_{\mathbb{R}}\rd y\text{ }\frac{:\rho (y)\rho (x):}{x-y}\ ,
\label{C3.27}
\end{equation}%
and respectively, the operator (\ref{C3.1}) reduces to
\begin{equation}
\bar{H}_{N}=-\sum_{j=\overline{1,N}}\frac{\partial ^{2}}{\partial x_{j}^{2}}%
+\beta (\beta -1)\sum_{j\neq k=\overline{1,N}}\frac{\ 1}{(x_{j}-x_{k})^{2}}
\label{C3.28}
\end{equation}%
in the Hilbert space $L_{2}^{(s)}(\mathbb{R}^{N};\mathbb{C})$ for any $N\in
\mathbb{Z}_{+},$ whose secondly quantized operator expression in the
suitable Fock space $\Phi $ equals
\begin{equation}
\mathrm{\bar{H}=}\int_{\mathbb{R}}\rd x\mathrm{\ }\left( \mathrm{\bar{D}}%
^{+}(x)\rho (x)^{-1}\mathrm{\bar{D}}(x)\ +\epsilon _{0}\right) ,
\label{C3.29}
\end{equation}%
where $\epsilon _{0}:=\lim_{N/l\rightarrow \bar{\rho}}\frac{E_{N}}{l}=\bar{%
\rho}^{3}/3$ denotes the average energy density of the reduced
Calogero-Moser-Sutherland Hamiltonian operator (\ref{C3.28}) as $%
N\rightarrow \infty ,$ exactly coinciding with the results obtained earlier 
in \cite{MeSh}.
\end{remark}

\subsection{An integrable many-particle Coulomb type quantum model on axis}

A many particle Coulomb type quantum bose model on the axis is governed by the $%
N $-particle Hamiltonian
\begin{align}
H_{N} &:=-\sum_{j=\overline{1,N}}\frac{\partial ^{2}}{\partial x_{j}^{2}}%
+\sum_{j\neq k=\overline{1,N}}\frac{\alpha }{|x_{j}-x_{k}|}   +\frac{\alpha ^{2}}{3}\sum_{j\neq k\neq s=\overline{1,N}}\left[ \ln
|x_{j}-x_{k}|\frac{(x_{j}-x_{k})(x_{j}-x_{s})}{|x_{j}-x_{k}||x_{j}-x_{s}|}%
\ln |x_{j}-x_{s}|\right. \nonumber \\
&+\ \ln |x_{k}-x_{j}|\frac{(x_{k}-x_{j})(x_{k}-x_{s})}{%
|x_{k}-x_{j}||x_{k}-x_{s}|}\ln |x_{k}-x_{s}|  +\left. \ln |x_{s}-x_{j}|\frac{(x_{s}-x_{j})(x_{s}-x_{k})}{%
|x_{s}-x_{j}||x_{s}-x_{k}|}\ln |x_{s}-x_{k}|\right] 
\label{C1.1}
\end{align}%
acting in the Hilbert space $L_{2}^{(s)}(\mathbb{R}^{N};\mathbb{C}),N\in
\mathbb{Z}_{+},\ $ is parameterized by a real-valued interaction parameter $%
\alpha \in \mathbb{R}$, which modulates both the \textit{%
binary} and \textit{ternary} particle interactions. Its secondly quantized
representation in a suitably chosen Fock space $\Phi _\text{F},$ looks as follows:
\begin{align}
\mathrm{H} &=\int_{\mathbb{R}}\rd x\psi _{x}^{+}(x) |\psi _{x}(x)\rangle+\int_{%
\mathbb{R}^{2}}\rd x\rd y\frac{\alpha }{|x-y|}:\rho (x)\rho (y):   \notag\\
&+\frac{\alpha ^{2}}{3}\int_{\mathbb{R}^{3}}\rd x\rd y\rd z :\rho (x)\rho (y)\rho
(z):\left[ \ln |x-y|\frac{(x-y)(x-z)}{|x-y||x-z|}\ln |x-z|\right.   \notag
\\
&+\left. \ln |y-x|\frac{(y-z)(y-x)}{|y-z||y-x|}\ln |y-x|+\ln |z-x|\frac{%
(z-x)(z-y)}{|z-x||z-y|}\ln |z-y|\right] ,  \label{C1.2}
\end{align}%
modulo the infinite renormalization constant operator, responsible for the
coinciding points $x=y\in \mathbb{R}$ of the Coulomb and logarithmic type
interaction potentials. On the axis $\mathbb{R}$, one can define, for any
points $x\neq y\in $ $\mathbb{R}$ and small positive parameter $%
\varepsilon >0$, the functional expression $s(x-y;\varepsilon ):=\frac{(x-y)}{%
\varepsilon |x-y|^{1-\varepsilon }},$ whose derivative $\partial
s(x-y;\varepsilon )/\partial x=1/|x-y|^{1-\varepsilon }.$ Moreover, if point
$x\rightarrow y\in \mathbb{R},$ one has $\lim_{x\rightarrow
y}s(x-y;\varepsilon )=0$ for any $\varepsilon >0.$

One can observe that for any points $x\neq y\in $ $\mathbb{R}$ the\
Hamiltonian operator (\ref{C1.2}) can be equivalently rewritten as
\begin{align}
\mathrm{H}&=\int_{\mathbb{R}}\rd x\psi _{x}^{+}(x)|\psi _{x}(x)\rangle 
+\frac{\alpha }{2}\lim_{\varepsilon \rightarrow 0}\int_{\mathbb{R}%
^{2}}\rd x\rd y[\partial s(x-y;\varepsilon )/\partial x+\partial s(y-x;\varepsilon
)/\partial y:\rho (x)\rho (y): \nonumber\\
&+\frac{\alpha ^{2}}{3}\lim_{\varepsilon \rightarrow 0} \int_{\mathbb{R}%
^{3}}:\rd x\rd y\rd z\rho (x)\rho (y)\rho (z):[s(x-y;\varepsilon )s(x-z;\varepsilon
) \nonumber\\
&+s(y-z;\varepsilon )s(y-x;\varepsilon )+s(z-x;\varepsilon )s(z-y;\varepsilon
)]\ -\varepsilon ^{-2}\mathrm{N(N}^{2}-1)%
\label{C1.3}
\end{align}%
being a well defined quantum operator in the Fock space $\Phi _\text{F}.$
Introduce now, at any point $x\in $ $\mathbb{R}$, the quasi-local operator
expression
\begin{equation}
\mathrm{D}^{(\varepsilon )}(x):=K(x)-\alpha \int_{\mathbb{R}}\rd y:\rho (x)\rho
(y):s(x-y;\varepsilon ),\   \label{C1.4}
\end{equation}%
acting in the Fock space $\Phi ,$ and construct the following operator:%
\begin{equation}
\mathrm{\hat{H}}^{(\varepsilon )}=\int_{\mathbb{R}}\rd x\langle\mathrm{D}%
^{(\varepsilon ),+}(x)|\rho (x)^{-1}\mathrm{D}^{(\varepsilon )}(x)\rangle.
\label{C1.5}
\end{equation}%
Then, one can formulate the next proposition.

\begin{proposition}
The many-particle Coulomb type Hamiltonian operator (\ref{C1.3}) in a
suitably chosen Fock space $\Phi $ is weakly equivalent, as $\varepsilon
\rightarrow 0,$ to the operator expression (\ref{C1.5}), and satisfies the
following regularized limiting relationship:%
\begin{equation}
\reg\lim_{\varepsilon \rightarrow 0}\mathrm{\hat{H}}^{(\varepsilon
)}:=\lim_{\varepsilon \rightarrow 0}\left[ \mathrm{\tilde{H}}^{(\varepsilon
)}-\frac{\alpha ^{2}}{3\varepsilon ^{2}}\mathrm{N(N}^{2}-1)\right] =\mathrm{%
\hat{H}.}  \label{C1.6}
\end{equation}
\end{proposition}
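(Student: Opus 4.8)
The plan is to mimic, for the $|x-y|^{-1}$ (plus ternary-logarithmic) interaction kernel, the computation already carried out above for the generalized oscillatory model, cf.\ (\ref{K3})--(\ref{K8}): one substitutes the definition (\ref{C1.4}) of $\mathrm{D}^{(\varepsilon)}(x)$ into the factorized expression (\ref{C1.5}), multiplies out $\langle\mathrm{D}^{(\varepsilon),+}(x)|\rho(x)^{-1}\mathrm{D}^{(\varepsilon)}(x)\rangle$, brings every resulting monomial in $\psi^{\pm},\rho,J$ to normally ordered form by means of (\ref{FQeq2.17}), (\ref{FQeq2.37b}), (\ref{FQeq2.37a}) and the contraction identities (\ref{C3.14}), and compares the outcome with (\ref{C1.3}). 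As in the derivation of (\ref{FQeq4.3}), all the equalities are meant weakly, i.e.\ as identities of bilinear forms on the dense domain $D_{\mathrm{H}}=\mathrm{span}_{f\in F}\{\mathrm{U}(f)|\Omega)\}$.

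Concretely, write $\mathrm{D}^{(\varepsilon)}(x)=\mathrm{K}(x)-\mathrm{A}^{(\varepsilon)}(x)$ with $\mathrm{A}^{(\varepsilon)}(x):=\alpha\int_{\mathbb{R}}\rd y\,:\rho(x)\rho(y):s(x-y;\varepsilon)$, so that (\ref{C1.5}) splits into $\langle\mathrm{K}^{+}|\rho^{-1}\mathrm{K}\rangle$, the two cross terms $-\langle\mathrm{K}^{+}|\rho^{-1}\mathrm{A}^{(\varepsilon)}\rangle-\langle\mathrm{A}^{(\varepsilon),+}|\rho^{-1}\mathrm{K}\rangle$, and $\langle\mathrm{A}^{(\varepsilon),+}|\rho^{-1}\mathrm{A}^{(\varepsilon)}\rangle$. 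First, the term $\int_{\mathbb{R}}\rd x\,\langle\mathrm{K}^{+}(x)|\rho(x)^{-1}\mathrm{K}(x)\rangle$ reproduces the kinetic operator $\int_{\mathbb{R}}\rd x\,\psi_{x}^{+}(x)|\psi_{x}(x)\rangle$, verbatim as in the first line of (\ref{K6}). Next, in the cross terms $\rho(x)^{-1}$ removes one density factor from $:\rho(x)\rho(y):$, the operator $\mathrm{K}(x)=\nabla_{x}\rho(x)/2+\ri J(x)$ is paired against $\rho(y)s(x-y;\varepsilon)$, and an integration by parts in $x$ using $\partial s(x-y;\varepsilon)/\partial x=1/|x-y|^{1-\varepsilon}$ (together with $\lim_{x\to y}s(x-y;\varepsilon)=0$, which kills the boundary and coincidence contributions) collapses them into $\frac{\alpha}{2}\int_{\mathbb{R}^{2}}\rd x\rd y\,[\partial_{x}s(x-y;\varepsilon)+\partial_{y}s(y-x;\varepsilon)]:\rho(x)\rho(y):$, i.e.\ into the binary term of (\ref{C1.3}), whose $\varepsilon\to0$ limit is $\alpha\int_{\mathbb{R}^{2}}\rd x\rd y\,|x-y|^{-1}:\rho(x)\rho(y):$. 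Finally, in $\langle\mathrm{A}^{(\varepsilon),+}(x)|\rho(x)^{-1}\mathrm{A}^{(\varepsilon)}(x)\rangle$ the factor $\rho(x)^{-1}$ acts on the product $:\rho(x)\rho(y)::\rho(x)\rho(z):$; restoring the normal order and performing the three-fold symmetrization over $\{x,y,z\}$ familiar from (\ref{K7}) produces $\frac{\alpha^{2}}{3}\int_{\mathbb{R}^{3}}\rd x\rd y\rd z\,:\rho(x)\rho(y)\rho(z):[s(x-y;\varepsilon)s(x-z;\varepsilon)+\text{cyclic}]$, i.e.\ the ternary term of (\ref{C1.3}), plus a family of lower-order terms in which two of the three points have been identified by a contraction (the $\delta(0)$-type pieces being absorbed into the infinite renormalization constant operator already mentioned after (\ref{C1.2})).

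The key step is then to show that, once the counterterm $\frac{\alpha^{2}}{3\varepsilon^{2}}\mathrm{N}(\mathrm{N}^{2}-1)$ and that infinite renormalization constant operator have been removed, the $\varepsilon\to0$ limit of what remains exists and equals $\mathrm{\hat{H}}$. For the $\varepsilon^{-2}$ part this is essentially a combinatorial count: each coincidence of two variables among the terms collected above forces an integration of a product $s(\cdot;\varepsilon)s(\cdot;\varepsilon)$ over one coincident variable, whose singular part is $\varepsilon^{-2}$ times a constant, while the accompanying normally ordered densities carry the weights $\mathrm{N}(\mathrm{N}-1)$ and $\mathrm{N}(\mathrm{N}-1)(\mathrm{N}-2)$ of ordered pairs and triples of distinct particles on the $N$-particle sector; these recombine into $\mathrm{N}(\mathrm{N}^{2}-1)$, so a single counterterm with the stated coefficient cancels precisely these divergences. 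One then checks that the surviving regular integrals converge, $1/|x-y|^{1-\varepsilon}\to1/|x-y|$ in the binary term and $s(x-y;\varepsilon)s(x-z;\varepsilon)\to\ln|x-y|\,\frac{(x-y)(x-z)}{|x-y||x-z|}\,\ln|x-z|$ in the symmetrized ternary term after its divergent part is subtracted, so that the limit is exactly the Coulomb operator (\ref{C1.3}); this is both the asserted relation (\ref{C1.6}) and the weak equivalence of (\ref{C1.3}) with (\ref{C1.5}) on $D_{\mathrm{H}}$.

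The hard part will be the bookkeeping in the last two paragraphs: tracking every $\delta$-function and $\delta(0)$-type singularity that is produced when the two already normally ordered bilinears $:\rho(x)\rho(y):$ and $:\rho(x)\rho(z):$ sharing the point $x$ are multiplied and then divided by $\rho(x)$, and verifying that --- beyond the genuine $O(\varepsilon^{-2})$ remainder absorbed by the counterterm --- all the subleading, possibly position-dependent, singular contributions generated there and by the cross terms cancel among themselves, so that the regularized limit $\reg\lim_{\varepsilon\to0}\mathrm{\hat{H}}^{(\varepsilon)}$ is indeed well defined and coincides with $\mathrm{\hat{H}}$.
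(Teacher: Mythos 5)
Your outline follows essentially the same route as the paper's own proof: the paper simply carries out the direct expansion of (\ref{C1.5}) with $\mathrm{D}^{(\varepsilon)}=\mathrm{K}-\mathrm{A}^{(\varepsilon)}$, uses $\partial s(x-y;\varepsilon)/\partial x=|x-y|^{\varepsilon-1}$ for the cross terms, symmetrizes the cubic-in-$\rho$ term exactly as in (\ref{K7}), and reads off the $\varepsilon^{-2}\mathrm{N}(\mathrm{N}^{2}-1)$ divergence, which is what you describe, cf.\ (\ref{C1.7})--(\ref{C1.8}). Your extra combinatorial remark on how the pair and triple coincidence weights $\mathrm{N}(\mathrm{N}-1)$ and $\mathrm{N}(\mathrm{N}-1)(\mathrm{N}-2)$ recombine into $\mathrm{N}(\mathrm{N}^{2}-1)$ makes explicit a step the paper only asserts, and the bookkeeping you defer is carried out in the paper at the same formal level of operator manipulation.
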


\begin{proof}
We have from (\ref{C1.5}) that
\begin{align}
\mathrm{\hat{H}}^{(\varepsilon )}&=\int_{\mathbb{R}}\rd x\psi _{x}^{+}(x)\psi
_{x}(x)-\alpha \int_{\mathbb{R}^{2}}\rd x\rd y\rho (y)s(x-y;\varepsilon )\psi
^{+}(x)\psi _{x}(x)  -\alpha \int_{\mathbb{R}^{2}}\rd x\rd y\psi _{x}^{+}(x)\psi (x)\rho
(y)s(x-y;\varepsilon ) \nonumber \\
&+\alpha ^{2}\int_{\mathbb{R}^{3}}\rd x\rd y\rd z\rho (x)\rho (y)\rho
(z)s(x-y;\varepsilon )s(x-z;\varepsilon )=\int_{\mathbb{R}}\rd x\psi
_{x}^{+}(x)\psi (x) \nonumber \\
&+\alpha \int_{\mathbb{R}^{2}}\rd x\rd y:\rho (x)\rho (y):\partial
s(x-y;\varepsilon )/\partial x  +\frac{2\alpha ^{2}}{3!}\int_{\mathbb{R}^{3}}\rd x\rd y\rd z\rho (x)\rho (y)\rho
(z)[s(x-y;\varepsilon )s(x-z;\varepsilon )  \nonumber\\
&+s(y-z;\varepsilon )s(y-x;\varepsilon )+s(z-x;\varepsilon )s(z-y;\varepsilon
)] \nonumber\\
&=\int_{\mathbb{R}}\rd x\psi _{x}^{+}(x)\psi _{x}(x)+\alpha \int_{\mathbb{R}%
^{2}}\rd x\rd y:\rho (x)\rho (y):\frac{1}{|x-y|^{1-\varepsilon }} \nonumber \\
&+\frac{\alpha ^{2}}{3}\int_{\mathbb{R}^{3}}\rd x\rd y\rd z:\rho (x)\rho (y)\rho (z):%
\left[ \ln |x-y|\frac{(x-y)(x-z)}{|x-y||x-y|}\ln |x-z|\right.  \nonumber \\
&+\left. \ln |y-x|\frac{(y-z)(y-x)}{|y-z||y-x|}\ln |y-x|+\ln |z-x|\frac{%
(z-x)(z-y)}{|z-x||z-y|}\ln |z-y|\right]  \nonumber \\
&+\varepsilon ^{-2}\mathrm{N(N}^{2}-1)+O(\varepsilon ).%
\label{C1.7}
\end{align}%
\end{proof}
From the relationship (\ref{C1.7}), one follows that
\begin{equation}
\mathrm{H}^{(\varepsilon )}=\mathrm{\hat{H}}+\frac{\alpha }{3\varepsilon ^{2}%
}\mathrm{N(N}^{2}-1)+O(\varepsilon ),  \label{C1.8}
\end{equation}%
as $\varepsilon \rightarrow 0.$ The latter, evidently, means that the weak
limit (\ref{C1.6}) holds, proving the proposition.

Taking into account that the quasi-operators $\mathrm{D}^{(\varepsilon
),+}(x)\rho (x)^{-1}\mathrm{D}^{(\varepsilon )}(x):\Phi \rightarrow \Phi
,x\in \mathbb{R},$  commute to each other, that is
\begin{equation}
\lbrack \mathrm{D}^{(\varepsilon ),+}(x)\rho (x)^{-1}\mathrm{D}%
^{(\varepsilon )}(x),\mathrm{D}^{(\varepsilon ),+}(y)\rho (y)^{-1}\mathrm{D}%
^{(\varepsilon )}(y)]=0  \label{C1.9}
\end{equation}%
for any points $x,y\in \mathbb{R},$ one can also construct a countable
hierarchy of quantum operators
\begin{equation}
\mathrm{\hat{H}}^{(\varepsilon ,p)}:=\int_{\mathbb{R}}\rd x\left( \mathrm{D}%
^{(\varepsilon ),+}(x)\rho (x)^{-1}\mathrm{D}^{(\varepsilon )}(x)\right)
^{p}\   \label{C1.10}
\end{equation}%
for all $p\in \mathbb{Z}_{+},$ also commuting to each other, that is
\begin{equation}
\lbrack \mathrm{\hat{H}}^{(\varepsilon ,p)},\mathrm{\hat{H}}^{(\varepsilon
,q)}]=0  \label{C1.11}
\end{equation}%
for all $p,q\in \mathbb{Z}_{+}.$ Applying then to the operators (\ref%
{C1.10}) the standard weak regularization scheme, similar to that above, one
can obtain, respectively, a countable hierarchy of quantum operators
\begin{equation}
\reg\lim_{\varepsilon \rightarrow 0}\mathrm{\hat{H}}^{(\varepsilon ,p)}:=%
\mathrm{\hat{H}}^{(p)},  \label{C1.12}
\end{equation}%
commuting to each other, that is
\begin{equation}
\lbrack \mathrm{\hat{H}}^{(p)},\mathrm{\hat{H}}^{(q)}]=0  \label{C1.13}
\end{equation}%
for all $p,q\in \mathbb{Z}_{+}.$ The latter makes it possible to claim that
the many-particle Coulomb type quantum bose model (\ref{C1.1}) on the axis
is completely integrable.

\subsection{Quantum many-particle Hamiltonian dynamical system on the axis with $%
\protect\delta $-inte\-raction, its quantum symmetries and integrability}

In this section, we consider a quantum non-relativistic many-particle
bose-system on the axis $\mathbb{R},$ governed by the Hamiltonian operator:

\begin{equation}
H_{N}:=-\sum_{j=\overline{1,N}}\frac{\partial ^{2}}{\partial x_{j}^{2}}%
+\beta \sum_{j\neq k=\overline{1,N}}\delta (x_{j}-x_{k}),  \label{C2.1}
\end{equation}%
where $\alpha ,\beta \in \mathbb{R}$ are interaction constants, and acting
in the symmetric Hilbert space $L_{2}^{(s)}(\mathbb{R}^{N};\mathbb{C}),$ $%
N\in \mathbb{Z}_{+}.$ The corresponding secondly quantized expression \cite%
{Bere,BlPrSa,BoBo,FaYa,PrMy,Takh} for the Hamiltonian operator (\ref{C2.1}%
) in the related Fock space $\Phi _\text{F}\simeq \sum_{n\in \mathbb{Z}%
_{+}}^{\oplus }L_{2}^{(s)}(\mathbb{R}^{n};\mathbb{C})$ equals
\begin{equation}
\mathrm{H}=\int_{\mathbb{R}}\rd x(\psi _{x}^{+}\psi _{x}+\beta \psi ^{+}\psi
^{+}\psi \psi ),\   \label{C2.2}
\end{equation}%
where the creation $\psi ^{+}$-  and annihilation $\psi $-operators satisfy
the canonical commutator relationships
\begin{align}
\lbrack \psi (x),\psi ^{+}(y)] &=\delta (x-y),   \notag\\
\lbrack \psi ^{+}(x),\psi ^{+}(y)] &=0=[\psi (x),\psi (y)]  \label{C2.3}
\end{align}%
for any $x,y\in \mathbb{R}.$ The Hamiltonian operator (\ref{C2.2}) via the
Heisenberg recipe \cite{BoBo,MiBoPrSa,Takh} naturally generates on the
creation $\psi ^{+}:$ $\Phi _\text{F}\rightarrow \Phi _\text{F}$ and annihilation $%
\psi :\Phi _\text{F}\rightarrow \Phi _\text{F}$ operators the following quantum Schr\"{o}dinger type evolution flow:
\begin{align}
\rd\psi /\rd t &:=\frac{1}{\ri}[\mathrm{H},\psi ]=-\ri\psi _{xx}+2\ri\beta \psi
^{+}\psi ^{2},  \notag \\
\rd\psi ^{+}/\rd t &:=\frac{1}{\ri}[\mathrm{H},\psi ^{+}]=\ri\psi _{xx}^{+}-2\ri\beta
(\psi ^{+})^{2}\psi   \label{C2.4}
\end{align}%
with respect to the temporal parameter $t\in \mathbb{R}.$ Subject to the
quantum Schr\"{o}dinger type evolution flow~(\ref{C2.4}), the particle
number operator $\textrm{N}=\int_{\mathbb{R}}\rho (x)\rd x$ and the
Hamiltonian operator (\ref{C2.2}) in the Fock space $\Phi _\text{F}$ are its
conservative symmetries, that is
\begin{equation}
\frac{\rd}{\rd t}\mathrm{N}=0,\qquad \frac{\rd}{\rd t}\mathrm{H}=0
\label{C2.5}
\end{equation}%
for any $t\in \mathbb{R}.$ The quantum model (\ref{C2.2}), as is well
known \cite{BlPrSa,Skla,Takh}, presents a completely integrable quantum Schr\"{o}dinger type dynamical system, possessing an infinite hierarchy of
quantum commuting to each other operators in the Fock space $\Phi _\text{F}.$
This result was proved by means of the quantum inverse scattering transform
\cite{Skla}, based on the existence of a special so-called Lax type quantum
operator linearization in the associated operator-valued space $C^{\infty
}(\mathbb{R};\End \Phi _\text{F}^{2}).$ In what follows below we 
prove the quantum integrability of the quantum Schr\"{o}dinger type
evolution flow (\ref{C2.4}), making use of the local quantum current
algebra representation technique, devised in \cite%
{Gold,GoGrPoSh,GoMeSh-1,GoMeSh-2,Meni}, similarly the way this was done in
sections above.

Let us define, in the Fock space $\Phi _\text{F}$, the following structural
operator:
\begin{equation}
\mathrm{D}^{(\varepsilon )}(x):=K(x)-\beta \int_{\mathbb{R}}\rd y\vartheta
_{\varepsilon }(x-y):\rho (x)\rho (y):\,,  \label{C2.6}
\end{equation}%
where, for any $\varepsilon >0$, the expression $\vartheta _{\varepsilon
}(x-y):=\vartheta (x-y-\varepsilon )=\{1,$ if $\ x>y-\varepsilon \}\wedge $\
$\{0,$ if $\ x\leqslant y+\varepsilon \}$ for $x,y$ $\in \mathbb{R}$ denotes $\ $%
the shifted classical Heaviside $\vartheta $-function, and construct the
following quantum operator:%
\begin{equation}
\mathrm{\hat{H}}^{(\varepsilon )}:=\int_{\mathbb{R}}\rd x\mathrm{D}%
^{(\varepsilon ),+}(x)\rho (x)^{-1}\mathrm{D}^{(\varepsilon )}(x).\
\label{C2.7}
\end{equation}%
The next proposition (\ref{C2.7}) states an equivalence of the quantum
Hamiltonian operator (\ref{C2.2}) and the weak operator limit $%
\lim_{\varepsilon \rightarrow 0}\mathrm{\hat{H}}^{(\varepsilon )}.$

\begin{proposition}
The many-particle quantum operator (\ref{C2.2}) in a suitably chosen Fock
space $\Phi _\textup{F}$ is weakly equivalent, as $\varepsilon \rightarrow 0,$ to
the operator expression (\ref{C2.7}), and satisfies the following
regularized limiting relationship:%
\begin{equation}
\reg\lim_{\varepsilon \rightarrow 0}\mathrm{H}^{(\varepsilon
)}:=\lim_{\varepsilon \rightarrow 0}\left( \mathrm{H}^{(\varepsilon )}-\beta
^{2}\mathrm{N}^{3}/3\right) =\mathrm{\hat{H}.}  \label{C2.8}
\end{equation}
\end{proposition}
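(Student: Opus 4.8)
The plan is to repeat, for the shifted Heaviside kernel $\vartheta_\varepsilon$, the computation already carried out for the Coulomb model in (\ref{C1.7}). Writing $\mathrm D^{(\varepsilon)}(x)=K(x)-\mathrm A^{(\varepsilon)}(x)$ with the two-body operator $\mathrm A^{(\varepsilon)}(x):=\beta\int_{\mathbb R}\rd y\,\vartheta_\varepsilon(x-y):\rho(x)\rho(y):$, one expands (arguments $x$ suppressed)
\[
\hat{\mathrm H}^{(\varepsilon)}=\int_{\mathbb R}\rd x\,K^{+}\rho^{-1}K-\int_{\mathbb R}\rd x\,(K^{+}\rho^{-1}\mathrm A^{(\varepsilon)}+\mathrm A^{(\varepsilon),+}\rho^{-1}K)+\int_{\mathbb R}\rd x\,\mathrm A^{(\varepsilon),+}\rho^{-1}\mathrm A^{(\varepsilon)},
\]
and then identifies each of the three pieces in the Fock space $\Phi_{\mathrm F}$ by substituting $\rho(x)=\psi^{+}(x)\psi(x)$ and $K(x)=\psi^{+}(x)\nabla_{x}\psi(x)$ and using the canonical relations (\ref{C2.3}). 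First I would note that the first integral reproduces the free kinetic part $\int_{\mathbb R}\rd x\,\psi^{+}_{x}(x)\psi_{x}(x)$ of (\ref{C2.2}), the singular factor $\rho(x)^{-1}$ sandwiched between $K^{+}(x)$ and $K(x)$ being understood in the sense of \cite{GoSh-3}.

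Next I would treat the cross term: using $\psi^{+}\psi_{x}+\psi^{+}_{x}\psi=\partial_{x}\rho$, the bracketed expression becomes, up to contributions carrying a factor $\vartheta_\varepsilon$ at a coinciding pair of arguments, $-\beta\int_{\mathbb R^{2}}\rd x\,\rd y\,\rho(y)\,\vartheta_\varepsilon(x-y)\,\partial_{x}\rho(x)$; an integration by parts in $x$ transfers $\nabla_{x}$ onto the kernel and, since $\partial_{x}\vartheta_\varepsilon(x-y)=\delta(x-y-\varepsilon)$ has support disjoint from the diagonal for $\varepsilon>0$, the normal-ordering contraction from (\ref{C3.14}) drops out and one is left with $\beta\int_{\mathbb R}\rd x\,:\rho(x)\rho(x-\varepsilon):$, which tends weakly, as $\varepsilon\to0$, to the $\delta$-interaction term $\beta\int_{\mathbb R}\rd x\,\psi^{+}(x)\psi^{+}(x)\psi(x)\psi(x)$ of (\ref{C2.2}). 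For the last, quadratic, piece one commutes the density operators and cancels $\rho(x)^{-1}\rho(x)$ to reduce $:\rho(x)\rho(y):\rho(x)^{-1}:\rho(x)\rho(z):$ to $\rho(x)\rho(y)\rho(z)$ modulo terms killed by a factor $\vartheta_\varepsilon$ at coinciding arguments (here $\vartheta_\varepsilon(0)=\vartheta(-\varepsilon)=0$ is used), and then applies the cubic identity (\ref{C3.14}). Symmetrising $\vartheta_\varepsilon(x-y)\vartheta_\varepsilon(x-z)$ over the six permutations of $x,y,z$ and using that exactly one of three generic points is the largest, the fully normal-ordered part tends as $\varepsilon\to0$ to $\frac{\beta^{2}}{3}:\mathrm N^{3}:$, while the one surviving contraction (the $\delta(y-z)$-term, which is not multiplied by a vanishing $\vartheta_\varepsilon$) contributes a term proportional to $:\mathrm N^{2}:$; together they form the ground-state energy operator $\mathrm E:\Phi\to\Phi$, a cubic polynomial in $\mathrm N$ with leading coefficient $\beta^{2}/3$, i.e.\ $\beta^{2}\mathrm N^{3}/3$ modulo lower powers of $\mathrm N$. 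Collecting the three contributions gives $\hat{\mathrm H}^{(\varepsilon)}=\hat{\mathrm H}+\mathrm E+O(\varepsilon)$ in the weak operator sense on the dense domain $D_{\mathrm H}=\mathrm{span}_{f\in F}\{\exp[\ri\rho(f)]|\Omega)\}$, which is the regularised limiting relationship (\ref{C2.8}).

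The main obstacle is not any individual calculation but justifying these formal operator manipulations: (i) the factor $\rho(x)^{-1}$ and the product $\mathrm D^{(\varepsilon),+}(x)\rho(x)^{-1}\mathrm D^{(\varepsilon)}(x)$ must be given a meaning and shown well posed along the lines of \cite{GoSh-3}; (ii) the integration by parts shifting $\nabla_{x}$ onto the merely piecewise-smooth kernel $\vartheta_\varepsilon$, whose distributional derivative is the translated $\delta$-function $\delta(\cdot-\varepsilon)$, has to be performed inside the bilinear forms $(\mathrm U(f_{1})\Omega|\cdot|\mathrm U(f_{2})\Omega)$; and (iii) one must enumerate every contraction $\delta$-term produced by the triple normal-ordered products, check that each one carrying a $\vartheta_\varepsilon$ evaluated at coinciding arguments does vanish for $\varepsilon>0$, and prove that the leftover error is genuinely $O(\varepsilon)$ in the weak operator topology on $D_{\mathrm H}$ — so that the asserted equivalence is precisely the weak equivalence used throughout the paper, the strong (hence $N$-particle) equivalence then following as in the earlier sections.
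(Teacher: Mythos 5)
Your strategy is essentially the paper's: expand $\hat{\mathrm H}^{(\varepsilon)}=\int_{\mathbb R}\rd x\,(K^{+}-\mathrm A^{(\varepsilon),+})\rho^{-1}(K-\mathrm A^{(\varepsilon)})$, use the identities $\vartheta_{\varepsilon}(x-y)\delta(x-y)=0=\vartheta_{\varepsilon}(x-y)\delta^{\prime}(x-y)$ and $\vartheta_{\varepsilon}^{\prime}(x-y)=\delta(x-y-\varepsilon)$, integrate by parts in the cross term to obtain $\beta\int\rd x\,\rho(x)\rho(x-\varepsilon)$ (which tends weakly to the $\delta$-interaction term of (\ref{C2.2})), and symmetrize the cubic density term over the permutations of $x,y,z$. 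Up to that point you and the paper proceed identically.

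The gap is in your identification of the counterterm. The proposition requires subtracting \emph{exactly} $\beta^{2}\mathrm N^{3}/3$, and the paper's computation (\ref{C2.9}) produces exactly that: since every normal-ordering contraction inside $\mathrm A^{(\varepsilon),+}(x)\rho(x)^{-1}\mathrm A^{(\varepsilon)}(x)$ is multiplied by $\vartheta_{\varepsilon}$ evaluated at the same pair of arguments, all normal orderings are dropped, the cubic piece is the plain product $\beta^{2}\int\rho(x)\rho(y)\rho(z)\vartheta_{\varepsilon}(x-y)\vartheta_{\varepsilon}(x-z)$, and after symmetrization the bracket of Heaviside factors is taken to be identically $1$, yielding $\beta^{2}\mathrm N^{3}/3$ with no lower-order remainder, so that the $\varepsilon\to0$ limit only concerns the two-body term. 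You instead re-expand the plain product via (\ref{C3.14}) and retain a surviving contraction, ending with $\mathrm E=\tfrac{\beta^{2}}{3}\,{:}\mathrm N^{3}{:}+c\,{:}\mathrm N^{2}{:}$ and the remark that this equals ``$\beta^{2}\mathrm N^{3}/3$ modulo lower powers of $\mathrm N$.'' That does not prove (\ref{C2.8}): with your $\mathrm E$, the limit $\lim_{\varepsilon\to0}\bigl(\hat{\mathrm H}^{(\varepsilon)}-\beta^{2}\mathrm N^{3}/3\bigr)$ would differ from $\hat{\mathrm H}$ by nonvanishing $\mathrm N^{2}$- and $\mathrm N$-terms, and your bookkeeping is in any case incomplete (all three single contractions of (\ref{C3.14}) survive partially, each carrying one remaining $\vartheta_{\varepsilon}$ factor, not only the $\delta(y-z)$ one). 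To close the argument you must either follow the paper and never reintroduce normal ordering in the cubic term, so that no ${:}\mathrm N^{2}{:}$ contribution appears, or, if you keep the finer decomposition, explicitly show that the extra terms disappear under the regularized weak limit (or argue that the counterterm itself must be modified, as happens in the Calogero--Moser--Sutherland case (\ref{C3.11}), where $\mathrm E$ does contain an ${:}\mathrm N^{2}{:}$ piece). As written, the final identification is asserted rather than proved.
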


\begin{proof}
Having taken into account that $\vartheta _{\varepsilon }(x-y)\delta
(x-y)=0=\vartheta _{\varepsilon }(x-y)\delta ^{\prime }(x-y)$ and $\vartheta
_{\varepsilon }^{\prime }(x-y)=\delta (x-y-\varepsilon )$ for any $x,y\in
\mathbb{R},$ one can calculate the operator expression (\ref{C2.7}) and
obtain:%
\begin{align}
\mathrm{H}^{(\varepsilon )}&=\int_{\mathbb{R}}\rd x\psi _{x}^{+}(x)\psi
_{x}(x)-\beta \int_{\mathbb{R}^{2}}\rd x\rd y\vartheta _{\varepsilon }(x-y)[\psi
_{x}^{+}(x)\psi (x)\rho (y)+\rho (y)\psi ^{+}(x)\psi _{x}(x)] \nonumber \\
&+\beta ^{2}\int_{\mathbb{R}^{3}}\rd x\rd y\rd z\rho
(x)\rho (y)\rho (z)\vartheta _{\varepsilon }(x-y)\vartheta _{\varepsilon
}(x-z) \nonumber\\
&=\int_{\mathbb{R}}\rd x\psi _{x}^{+}(x)\psi_{x} (x)-\beta \int_{\mathbb{R}%
^{2}}\rd x\rd y\vartheta _{\varepsilon }(x-y)[\psi _{x}^{+}(x)\psi (x)+\psi
^{+}(x)\psi _{x}(x)]\rho (y)\nonumber \\
&-\beta \int_{\mathbb{R}^{2}}\rd x\rd y\vartheta _{\varepsilon }(x-y)[\rho (y),\psi
^{+}(x)\psi _{x}(x)]+\beta ^{2}\int_{\mathbb{R}^{3}}\rd x\rd y\rd z\rho (x)\rho (y)\rho (z)\vartheta
_{\varepsilon }(x-y)\vartheta _{\varepsilon }(x-z)%
\nonumber\\
&=\int_{\mathbb{R}}\rd x\psi _{x}^{+}(x)\psi _{x}(x)+\beta \int_{\mathbb{R}%
^{2}}\rd x\rd y\vartheta _{\varepsilon }^{\prime }(x-y)\rho (x)\rho (y) \nonumber\\
&-\beta \int_{\mathbb{R}^{2}}\rd x\rd y[\vartheta _{\varepsilon }(x-y)\delta
(x-y)\psi ^{+}(x)\psi _{x}(x)-\rho (x)\delta ^{\prime }(x-y)] \nonumber\\
&+\beta ^{2}\int_{\mathbb{R}^{3}}\rd x\rd y\rd z\rho (x)\rho (y)\rho (z)\vartheta
_{\varepsilon }(x-y)\vartheta _{\varepsilon }(x-z) 
=\int_{\mathbb{R}}\rd x\psi _{x}^{+}(x)\psi (x)  \nonumber\\
&+\beta \int_{\mathbb{R}%
^{2}}\rd x\rd y\delta (x-y-\varepsilon )\rho (x)\rho (y) 
+\frac{2\beta ^{2}}{3!}\int_{\mathbb{R}^{3}}\rd x\rd y\rd z\rho (x)\rho (y)\rho
(z)[\vartheta _{\varepsilon }(x-y)\vartheta _{\varepsilon }(x-z)  \nonumber\\
&+\vartheta _{\varepsilon }(y-z)\vartheta _{\varepsilon }(y-x)+\vartheta
_{\varepsilon }(z-x)\vartheta _{\varepsilon }(z-y)] \nonumber\\
&=\int_{\mathbb{R}}\rd x\psi _{x}^{+}(x)\psi_{x} (x)+\beta \int_{\mathbb{R}%
^{2}}\rd x\rho (x)\rho (x-\varepsilon )+\beta ^{2}\mathrm{N}^{3}/3.%
\label{C2.9}
\end{align}
\end{proof}

Insomuch as from the latter expression (\ref{C2.9}) one easily ensues that
\begin{equation}
\lim_{\varepsilon \rightarrow 0}\left( \mathrm{H}^{(\varepsilon )}-\beta ^{2}%
\mathrm{N}^{3}/3\right) =\mathrm{H,}  \label{C2.10}
\end{equation}%
the weak operator relationship (\ref{C2.8}) in the Fock space $\Phi $ is
proved.

It is important to mention now the quasi-local operators $\mathrm{D}%
^{(\varepsilon ),+}(x)\rho (x)^{-1}\mathrm{D}^{(\varepsilon )}(x)$ $:\Phi
\rightarrow \Phi ,$ $x\in \mathbb{R},$ owing to their construction, 
commute to each other, that is
\begin{equation}
\lbrack \mathrm{D}^{(\varepsilon ),+}(x)\rho (x)^{-1}\mathrm{D}%
^{(\varepsilon )}(x),\mathrm{D}^{(\varepsilon ),+}(y)\rho (y)^{-1}\mathrm{D}%
^{(\varepsilon )}(y)]=0  \label{C2.11}
\end{equation}%
for any $x,y\in \mathbb{R}.$ The latter makes it possible to construct a
countable hierarchy of operators
\begin{equation}
\mathrm{\hat{H}}^{(\varepsilon ,p)}:=\int_{\mathbb{R}}\rd x\left( \mathrm{D}%
^{(\varepsilon ),+}(x)\rho (x)^{-1}\mathrm{D}^{(\varepsilon )}(x)\right) ^{p}
\label{C2.12}
\end{equation}%
for $p\in \mathbb{Z}_{+},$ commuting to each other, that is
\begin{equation}
\lbrack \mathrm{\hat{H}}^{(\varepsilon ,p)},\mathrm{\hat{H}}^{(\varepsilon
,q)}]=0  \label{C2.13}
\end{equation}%
for any $p,q\in \mathbb{Z}_{+}.$ Applying to the hierarchy of operators (%
\ref{C2.12}) the standard weak regularization procedure as $\varepsilon
\rightarrow 0,$ one can construct, respectively, a countable hierarchy of
quantum operators
\begin{equation}
\mathrm{\hat{H}}^{(p)}:=\reg\lim_{\varepsilon \rightarrow 0}\mathrm{\hat{H}}%
^{(\varepsilon ,p)}  \label{C2.14}
\end{equation}%
on the Fock space $\Phi $ for $p\in \mathbb{Z}_{+},$ $\ $also commuting to
each other, that is%
\begin{equation}
\lbrack \mathrm{\hat{H}}^{(p)},\mathrm{\hat{H}}^{(q)}]=0  \label{C2.15}
\end{equation}%
for any $p,q\in \mathbb{Z}_{+},$ as this naturally follows from the
commutator relationship (\ref{C2.13}). The latter then means that the
quantum Schr\"{o}dinger type dynamical system (\ref{C2.4}) is integrable, the
other way confirming the classical result of \cite{SkTaFa}.

\begin{remark}
It is worth to mention that the following generalized quantum many-particle
Hamiltonian bose system
\begin{align}
H_{N} &:=-\sum_{j=\overline{1,N}}\frac{\partial ^{2}}{\partial x_{j}^{2}}%
+\alpha \sum_{j\neq k=\overline{1,N}}\delta (x_{j}-x_{k})  \notag\\
&+\ri\beta \sum_{j\neq k=\overline{1,N}}\left( \frac{\partial }{\partial x_{j}%
}\circ \delta (x_{j}-x_{k})\ +\delta (x_{j}-x_{k})\circ \frac{\partial }{%
\partial x_{k}}\right) ,  \label{C2.16}
\end{align}%
where $\alpha ,\beta \in \mathbb{R}$ are interaction constants, and acting
on the symmetric Hilbert space $L_{2}^{(s)}(\mathbb{R}^{N};\mathbb{C}),$ $%
N\in \mathbb{Z}_{+},\ $ (that is with $(\alpha \delta +\beta \delta ^{\prime
})$-interaction potential) generates the corresponding secondly quantized
quantum Hamiltonian system%
\begin{align}
\rd\psi /\rd t &:=\frac{1}{\ri}[\mathrm{H},\psi ]=-\ri\psi _{xx}+2\alpha \psi
^{+}\psi \psi +2\beta \psi ^{+}\psi \psi _{x}\,,   \notag\\
\rd\psi ^{+}/\rd t &:=\frac{1}{\ri}[\mathrm{H},\psi ^{+}]=\ri\psi _{xx}^{+}-2\ri\alpha
(\psi ^{+})^{2}\psi +2\beta \psi _{x}^{+}\psi \psi   \label{C2.16a}
\end{align}%
with the quantum Hamiltonian operator
\begin{equation}
\mathrm{H}=\int_{\mathbb{R}}\rd x[\psi _{x}^{+}\psi _{x}+\alpha \psi ^{+}\psi
^{+}\psi \psi +\ri\beta (\psi ^{+}\psi ^{+}\psi _{x}\psi -\psi _{x}^{+}\psi
^{+}\psi \psi )]  \label{C2.17}
\end{equation}%
on the Fock space ${\Phi }_\text{F},$ which is completely $\ $integrable, as it
was proved\ before in \cite{BlPrSa,MiBoPrSa,PrMy} by means of the quantum
inverse scattering transform. This fact, eventually, allows us to speculate
that there exists a suitable local current algebra cyclic representation
space $\Phi ,$ allowing one to construct a related structural operator $\mathrm{D%
}(x):{\Phi }\rightarrow {\Phi },x\in \mathbb{R},$ factorizing the quantum
Hamiltonian operator $\mathrm{\hat{H}}=\int_{\mathbb{R}}$ $\mathrm{D}%
^{+}(x)\rho (x)^{-1}\mathrm{D}(x),$ $\ $\ and reducing, up to some
renormalizing constant operator, to (\ref{C2.17}) on the corresponding Fock
space~$\Phi _\text{F}.$
\end{remark}

\section{The density functional representation of the local current algebra
symmetry and the factorized structure of quantum integrable many-particle
Hamiltonian systems}

\subsection{The density functional representation of the local current
algebra: the canonical representation reduction}

We are now interested in constructing the density functional representation
of the current algebra~(\ref{FQeq2.37b}) in the Hilbert space $\ \Phi
\simeq \Phi _{\rho }$\ with the cyclic vector $|\Omega )=1\in \Phi .$\ To do
this, let us consider first the ``\textit{creation}'' $a^{+}(x)$ and
``annihilation'' operators $a(x),x\in \mathbb{R}^{m},\ $defined via (\ref%
{FQeq2.37}) in the canonical Fock space $\Phi$, which can be formally
represented as%
\begin{equation}
a^{+}(x)=\sqrt{\rho (x)}\exp [-\ri\vartheta (x)], \qquad a(x)=\exp [\ri\vartheta (x)]%
\sqrt{\rho (x)},  \label{FQeq3.1a}
\end{equation}%
where $\rho (x): \Phi \rightarrow \Phi $ is our density operator and
$\vartheta (x): \Phi \rightarrow \Phi , x\in \mathbb{R}^{m}$, is
some self-adjoint operator. It is important that the operators $\rho (x)$ and $%
\vartheta (x): \Phi \rightarrow \Phi $ realize the canonical
commutation relationships
\begin{align}
\lbrack \rho (x),\rho (y)] &=0=[\vartheta (x),\vartheta (y)],
 \notag\\
\lbrack \rho (y),\vartheta (x)] &=\ri\delta (x-y)  \label{FQeq3.2}
\end{align}%
for any $x,y\in \mathbb{R}^{m}.$ Concerning the current operator $J(x): %
\Phi \rightarrow \Phi , x\in \mathbb{R}^{m},$ one can easily obtain its
equivalent expression
\begin{equation}
J(x)=\rho (x)\nabla \vartheta (x).  \label{FQeq3.3}
\end{equation}%
Based on the canonical relationships (\ref{FQeq3.2}), one can easily
obtain, following \cite{Aref}, that
\begin{equation}
\vartheta (x)=\frac{1}{\ri}\frac{\delta }{\delta \rho (x)}+\ri\sigma \lbrack
\rho (x)],  \label{FQeq3.4}
\end{equation}%
where $\sigma \lbrack \rho (x)]:\Phi \rightarrow \Phi $ is some
function of the density operator $\rho (x):\Phi \rightarrow \Phi ,%
x\in \mathbb{R}^{m}.$ Then, respectively, the current operator (\ref%
{FQeq3.3}) is representable in ${\Phi }$ as follows:
\begin{equation}
J(x)=-\ri\rho (x)\nabla \frac{\delta }{\delta \rho (x)}+\rho (x)\nabla \sigma
\lbrack \rho (x)].  \label{FQeq3.5}
\end{equation}%
The functional-operator expression (\ref{FQeq3.5}) proves to make sense
\cite{Arak,Aref,GoGrPoSh} as operators in the Hilbert space $\Phi _{\rho
}\simeq \Phi $ of $\ $functional valued complex-functions on the manifold $%
\mathcal{M},$ coordinated by the density parameter $\rho :\Phi _{\rho
}\rightarrow \Phi _{\rho }$ and endowed with the scalar product $%
(a|b)_{_{\Phi _{\rho }}}:=\int_{\mathcal{M}}\overline{a(\rho )}b(\rho )\rd\mu
(\rho )\ $subject to some measure $\mu $ on $\mathcal{M}.$ To calculate this
measure $\mu $ on $\mathcal{M},$ we present an explicit isomorphism
between this Hilbert space $\Phi _{\rho }$ and the corresponding Fock space {%
$\Phi _\text{F}$} of spinless bosonic particles in $\mathbb{R}^{m}.$ First, we
determine the support $\supp\mu \subset \mathcal{M}$ of the measure $\mu ,$
having assumed that the manifold
\begin{equation}
\mathcal{M=\cup }_{n\in \mathbb{Z}_{+}}\mathcal{M}_{n}\,,  \label{FQeq3.5a}
\end{equation}%
where $\mathcal{M}_{n}:=\{a(\rho ):\rho (x):=\sum_{j=1}^{n}\delta
(x-c_{j}):a\in C^{\infty }(F^{\prime };\End \Phi _{\rho })\},$ where $%
c_{j}\in \mathbb{R}^{m},j=\overline{1,n},$ $n\in \mathbb{N},$ are arbitrary
vector parameters. The restriction $\rd\mu _{n}$ of the measure $\mu $ on the
submanifold $\mathcal{M}_{n}$ can be presented \cite%
{AlDaKoLy,BeSh,Gold,GoGrPoSh,BoPr} as
\begin{equation}
\rd\mu _{n}=\gamma _{n}(c_{1},c_{2},\ldots ,c_{n})\prod\limits_{j=\overline{1,n}%
}\rd c_{j}\,,  \label{FQeq3.6}
\end{equation}%
where functions $\gamma _{n}:\ \mathbb{R}^{m\times n}\rightarrow \mathbb{R}%
_{+},n\in \mathbb{N},$ should be determined from the condition (\ref%
{FQeq3.5}). In accordance with the manifold structure (\ref{FQeq3.5a}), we
can decompose the Hilbert space $\Phi _{\rho }$ as follows:
\begin{equation}
\Phi _{\rho }=\oplus _{n\in \mathbb{N}}\Phi _{n}\,,  \label{FQeq3.7}
\end{equation}%
where the space $\Phi _{n}$ depends on the mapping $\sigma :\mathcal{%
M\rightarrow }\End\mathcal{(}\Phi _{\rho })$ and consists of functionals that
are bounded on $\mathcal{M}_{n},$ in particular, for any $a(\rho )\in
\mathcal{M}$, the restrictions $a(\rho )|_{\Phi _{n}},n\in \mathbb{N},$
consist of functions of vectors $(c_{1},c_{2},\ldots ,c_{n})\in \mathbb{R}%
^{m\times n},n\in \mathbb{N},$ respectively. The scalar product in $\Phi
_{n},n\in \mathbb{N},$ is suitably defined by means of the expressions (%
\ref{FQeq3.6}). Now, we can construct the isomorphism between the Hilbert
spaces $\Phi _{n},n\in \mathbb{N},$ and the corresponding components $\ {%
\Phi }_{n}^{(s)},n\in \mathbb{N},$ of the related Fock space $ {\Phi }_\text{F}{,%
}$ representing spinless bosonic particles in $\mathbb{R}^{m}.$ In the
Hilbert space $\Phi _{n}:=\Phi _{n}^{(\sigma )},n\in \mathbb{N},$ one can
easily calculate the eigenfunctions $\varphi
_{p_{1},p_{2},\ldots \,,p_{n}}^{(\sigma )}(\rho )\in \Phi _{n}^{(\sigma )}$ of the
free Hamiltonian
\begin{equation}
\mathrm{H}_{0}^{(\sigma )}:=\frac{1}{2}\int_{\mathbb{R}^{m}}\rd x\langle K^{+}(x)|\rho
^{-1}(x)K(x)\rangle   \label{FQeq3.7a}
\end{equation}%
with structural
\begin{equation}
K(x):=\frac{1}{2}\nabla \rho (x)+\ri J^{(\sigma )}(x),\qquad K^{+}(x):=\frac{%
1}{2}\nabla \rho (x)-\ri J^{(\sigma )}(x)  \label{FQeq3.7b}
\end{equation}%
and the momentum 
\begin{equation}
\mathrm{P}^{(\sigma )}:=\int_{\mathbb{R}^{m}}\rd xJ^{(\sigma )}(x)
\label{FQeq3.7c}
\end{equation}%
operators:%
\begin{align}
\mathrm{H}_{0}^{(\sigma )}\varphi _{p_{1},p_{2},\ldots \,,p_{n}}^{(\sigma )}(\rho
) &=\Bigg(\sum_{j=\overline{1,n}}E_{j}\Bigg)\varphi _{p_{1},p_{2},\ldots \,,p_{n}}^{(\sigma
)}(\rho ),  \notag \\
\mathrm{P}^{(\sigma )}\varphi _{p_{1},p_{2},\ldots \,,p_{n}}^{(\sigma )}(\rho )
&=\Bigg(\sum_{j=\overline{1,n}}p_{j}\Bigg)\varphi _{p_{1},p_{2},\ldots \,,p_{n}}^{(\sigma
)}(\rho ),  \label{FQeq3.8}
\end{align}%
where $p_{j}\in \mathbb{R}^{m},j=\overline{1,n}$, are momentums of
bose-particles in $\mathbb{R}^{m},$ \ the operator $\mathrm{H}%
_{0}^{(\sigma )}:\Phi _{\rho }{\rightarrow }$ $\Phi _{\rho }$ is given by
the expressions (\ref{FQeq3.7a}),\ (\ref{FQeq2.36}) and (\ref{FQeq3.5})
and the operator $\mathrm{P}^{(\sigma )}:\Phi _{\rho }{\rightarrow }$ $\Phi
_{\rho }$ is given by the expressions (\ref{FQeq3.7b}) and (\ref{FQeq3.5}),
respectively, within which the current operator $J^{(\sigma )}(x):\Phi
_{\rho }{\rightarrow }$ $\Phi _{\rho }$ is realized under the condition $%
\nabla \sigma \lbrack \rho (x)]:=$ $\sigma \rho (x)^{-1}\nabla \rho (x)$ as follows:
\begin{equation}
J^{(\sigma )}(x)=-\ri\rho (x)\nabla \frac{\delta }{\delta \rho (x)}+\ri\sigma
\nabla \rho (x),  \label{FQeq3.9}
\end{equation}%
where $\sigma \in \mathbb{R}$ is a fixed real-valued parameter. In this case
the eigenfunctions $\varphi _{p_{1},p_{2},\ldots \,,p_{n}}^{(\sigma )}(\rho )\in
\Phi _{n}^{(\sigma )},n\in \mathbb{N},$ can be expressed \cite{Aref,PaScWr}
as
\begin{equation}
\varphi _{p_{1},p_{2},\ldots \,,p_{n}}^{(\sigma )}(\rho )\ =\frac{1}{n!}\bar{%
\varphi}_{0}^{(\sigma )}(\rho )\left(\, \prod\limits_{j=\overline{1,n}%
}B_{p_{j}}(\rho )\cdot 1\right) ,  \label{FQeq3.10}
\end{equation}%
where
\begin{align}
\bar{\varphi}_{0}^{(\sigma )}(\rho ) &:=\exp \left[ (\sigma -1/2)\int_{%
\mathbb{R}^{m}}\rd x\rho (x)\ln \rho (x)\right] ,  \notag\\
B_{p_{j}}(\rho ) &:=\int_{\mathbb{R}^{m}}\rd x\exp (\ri\langle p|x\rangle)\rho (x)\exp \left[-%
\frac{\delta }{\delta \rho (x)}\right].  \label{FQeq3.11} 
\end{align}%
The corresponding $n$-particle Fock subspaces $\ {\Phi }_{n}^{(\sigma
)},n\in \mathbb{N},$ can be naturally represented by means of the vectors
\begin{equation}
|\varphi _{n}^{(\sigma )}):=\frac{1}{\sqrt{n!}}\int_{\mathbb{R}^{m\times
n}}\prod\limits_{j=\overline{1,n}}\rd p_{j}\text{ }f_{n}^{(\sigma
)}(p_{1},p_{2},\ldots ,p_{n})a^{+}(p_{1})a^{+}(p_{2})\ldots a^{+}(p_{n})|0)
\label{FQeq3.12}
\end{equation}%
with functions $f_{n}^{(\sigma )}\in L_{2}^{(s)}(\mathbb{R}^{m\times n};%
\mathbb{C}),n\in \mathbb{N},$ where
\begin{equation}
a^{+}(p):=\frac{1}{(2\piup )^{m/2}}\int_{\mathbb{R}^{m}}\rd x\exp (\ri\langle x|p\rangle)a^{+}(x)
\label{FGeq3.12a}
\end{equation}%
denotes the momentum creation operator for any $\ p\in \mathbb{R}^{m}. $

Moreover, any functional $\varphi _{n}^{(\sigma )}(\rho )\in \Phi
_{n}^{(\sigma )},n\in \mathbb{N},$ can be uniquely represented as
\begin{equation}
\varphi _{n}^{(\sigma )}(\rho ):=\int_{\mathbb{R}^{m\times
n}}\prod\limits_{j=\overline{1,n}}\rd p_{j}f_{n}^{(\sigma
)}(p_{1},p_{2},\ldots ,p_{n})\varphi _{p_{1},p_{2},\ldots \,,p_{n}}^{(\sigma )}(\rho
)\   \label{FQeq3.13}
\end{equation}%
for $f_{n}^{(\sigma )}\in L_{2}^{(s)}(\mathbb{R}^{m\times n};\mathbb{C}),$
since the following condition
\begin{equation}
\left. \left( B_{p_{n+1}}(\rho )\prod\limits_{j=\overline{1,n}%
}B_{p_{j}}(\rho )\cdot 1\right) \right\vert _{\rho =a^{+}(x)a(x)}|\varphi
_{n}^{(\sigma )})=0\   \label{FQeq3.14}
\end{equation}%
holds identically for all $p_{j}\in \mathbb{R}^{m},j=\overline{1,n+1},$ and
arbitrary state $|\varphi _{n}^{(\sigma )})\in \Phi _\text{F},n\in \mathbb{N}.$

\begin{remark}
The condition (\ref{FQeq3.14}) jointly with the constraint $\int_{\mathbb{R%
}^{m}}\rho (x)\rd x=n$  in $ {\Phi }_{n}^{(\sigma )},n\in \mathbb{N},$
should be, in general, naturally satisfied for any current algebra
representation space ${\Phi } $ if and only if  $\rho (x)=\sum_{j=%
\overline{1,n}}\delta (x-c_{j})\in \mathcal{M}_{n}$ for arbitrary $n\in
\mathbb{N}.$
\end{remark}

As a result of the above construction we can state that the Hilbert spaces $%
\Phi _{n}^{(\sigma )},n\in \mathbb{N},$ and Fock subspaces $\Phi
_{n}^{(\sigma )},n\in \mathbb{N},$ are, respectively, isomorphic. As a
consequence, we derive that the Hilbert space $\Phi _{\rho }$ and the Fock
space $\ {\Phi }_\text{F}$ are isomorphic too.

Consider now, following \cite{Aref,GoGrPoSh}, the action of the current
operator (\ref{FQeq3.9}) on the basic vectors $\varphi _{n}^{(\sigma
)}(\rho )\in \Phi _{n}^{(\sigma )},n\in \mathbb{N}$:%
\begin{equation}
J^{(\sigma )}(x)\varphi _{n}^{(\sigma )}(\rho )=\bar{\varphi}_{0}^{\,(\sigma
)}(\rho )\left[-\ri\rho (x)\nabla \frac{\delta }{\delta \rho (x)}+\ri\sigma \nabla
\rho (x)\right]\varphi _{n}^{(\sigma )}(\rho ),  \label{FQeq3.15}
\end{equation}%
from which one ensues easily at $\sigma =1/2$ its $n$-particle
representation on the functional manifold $\mathcal{M}_{n}$:
\begin{align}
\left. J^{(1/2)}(x)\varphi _{n}^{(1/2)}(\rho )\right\vert _{\rho (y)=\sum\limits_{j=%
\overline{1,n}}\delta (y-c_{j})} 
=\sum_{j=\overline{1,n}}\frac{1}{2}[-\ri\delta (x-c_{j})\nabla
_{c_{j}}+\ri\nabla _{c_{j}}\circ \delta (x-c_{j})]\tilde{f}%
_{n}^{\,(1/2)}(c_{1},c_{2},\ldots ,c_{n}),%
\label{FQeq3.16}
\end{align}%
where we took into account that $\bar{\varphi}_{0}^{\,(1/2)}(\rho )=1$ for all
densities $\rho :\Phi _{\rho }\rightarrow \Phi _{\rho }$ and have put, by
definition, the Fourier transform
\begin{equation}
\tilde{f}_{n}^{(1/2)}(c_{1},c_{2},\ldots ,c_{n}):=\int_{\mathbb{R}^{m\times
n}}\prod\limits_{j=\overline{1,n}%
}\rd p_{j}f_{n}^{(1/2)}(p_{1},p_{2},\ldots ,p_{n})\exp \Bigg(\ri\sum_{j=\overline{1,n}%
}\langle p_{j}|c_{j}\rangle\Bigg)   \label{FQeq3.17}
\end{equation}%
for any fixed particle position vectors $c_{j}\in \mathbb{R}^{n},j=\overline{%
1,n},$ and for arbitrary $n\in \mathbb{N}.$ The expression (\ref{FQeq3.16}%
), in particular, means that the current operator $J^{(1/2)}(x):\Phi _{\rho
}\rightarrow \Phi _{\rho }$ is symmetric with respect to the measure $\rd\mu
_{n}^{(1/2)}:=\beta _{n}\prod\nolimits_{j=\overline{1,n}}\rd c_{j}$ on each
functional submanifold $\mathcal{M}^{n}$ for all $n\in \mathbb{N},$ where
the constants $\beta _{n}\in \mathbb{R}_{+},n\in \mathbb{N},$ can be
determined from the normalization condition $||\varphi _{n}^{(1/2)}(\rho
)||_{\Phi _{n}^{(1/2)}}=(\varphi _{n}^{(1/2)}|\varphi _{n}^{(1/2)})_{\Phi
_{n}^{(1/2)}}^{1/2},$ $n\in \mathbb{N}.$ The latter gives rise \cite%
{AlDaKoLy,BeSh,BoPr,Gold} to the following symbolic measure expression
\begin{equation}
\rd\mu _{n}^{(1/2)}:=\ \prod\limits_{x\in \mathbb{R}^{m}}\delta \left( \rho
(x)-\sum_{j=\overline{1,n}}\delta (x-c_{j})\right) \prod\limits_{j=%
\overline{1,n}}\frac{\rd c_{j}}{(2\piup )^{m}}  \label{FQeq3.18}
\end{equation}%
for all $c_{j}\in \mathbb{R}^{n},j=\overline{1,n},\ $and arbitrary $n\in
\mathbb{N}.$

\begin{remark}
As was aptly observed in \cite{Aref}, the choice $\sigma =1/2$ makes it
possible to realize the current algebra representation in the space $%
\mathcal{M}$ of analytic functions, which will be \textit{a priori} assumed
 further, meaning that the corresponding measure can be symbolically expressed
as follows:
\begin{equation}
\rd\mu _{n}^{\ }:=\ \prod\limits_{x\in \mathbb{R}^{m}}\delta \left( \rho
(x)-\sum_{j=\overline{1,n}}\delta (x-c_{j})\right) \prod\limits_{j=%
\overline{1,n}}\frac{\rd c_{j}}{(2\piup )^{n}}  \label{FQeq3.19}
\end{equation}%
on the subspace $\mathcal{M}_{n\text{ }}$ for any $n\in \mathbb{N}.$
\end{remark}

\subsection{The quantum oscillatory model: the density functional current
algebra representation and the Hamiltonian reconstruction}

As a classical application of the above construction, one can consider a
current algebra representation of the quantum Hamiltonian operator \ \ \
\begin{equation}
\mathrm{H}^{(\omega )}=\frac{1}{2}\int_{\mathbb{R}^{m}}\langle K(x)^{+}|\rho
(x)^{-1}K(x)\rangle\rd x+\frac{1}{2}\int_{\mathbb{R}^{m}}\langle \omega x|\omega x\rangle\rho
(x)\rd x\   \label{FQeq4.5}
\end{equation}%
in the corresponding Fock space $\ \Phi $ of the generalized quantum $N$%
-particle oscillatory Hamiltonian
\begin{equation}
H_{N}^{(\omega )}=\frac{1}{2}\sum_{j=\overline{1,N}}\left( \langle \nabla
_{x_{j}}|\nabla _{x_{j}}\rangle+\langle \omega x_{j}|\omega x_{j}\rangle\right) \
\label{FQeq4.5a}
\end{equation}%
for $N\in \mathbb{Z}_{+}$ bose-particles in the $m$-dimensional space $%
\mathbb{R}^{m}$ under the external oscillatory potential, parameterized by
the positive definite frequency matrix $\omega \in \End \mathbb{R}^{m}.$ 

Having shifted the representation Hilbert space $\Phi _{\rho }$ by the
functional $\bar{\varphi}_{0}^{\,(1/2)}(\rho ):=\exp [-\frac{1}{2}\int_{%
\mathbb{R}^{m}}\left\langle x|\omega x\right\rangle $ \linebreak $ \times\rho (x)\rd x]\in \Phi
_{\rho },$ the corresponding current operator (\ref{FQeq3.9}) becomes
\begin{equation}
J^{(\omega )}(x) = -\ri\rho (x)\nabla \frac{\delta }{\delta \rho (x)}+\frac{\ri%
}{2}\nabla \rho (x)-\ri\omega x\rho (x),  \label{FQeq4.5b}
\end{equation}%
entailing simultaneously the related $K$-operator changing \
\begin{equation}
K(x)=\rho (x)\nabla \frac{\delta }{\delta \rho (x)}\rightarrow K^{(\omega
)}(x)=\rho (x)\nabla \frac{\delta }{\delta \rho (x)}+\omega x\rho (x)
\label{FQeq4.5c}
\end{equation}%
for any $x\in \mathbb{R}^{m}.$ The latter gives rise, respectively, to the
following equivalent current algebra functional representation of the
oscillatory Hamiltonian (\ref{FQeq4.5}):
\begin{equation}
\mathrm{\hat{H}}^{(\omega )}=\frac{1}{2}\int_{\mathbb{R}^{m}}\langle K^{(\omega
)}(x)^{+}|\rho (x)^{-1}K^{(\omega )}(x)\rangle\rd x+\frac{1}{2}\tr \omega
\int_{\mathbb{R}^{m}}\rho (x)\rd x,  \label{FQ4.5d}
\end{equation}%
found before in (\ref{G5}) for every $N\in \mathbb{N}.$ The shifted
current operator (\ref{FQeq4.5b}) makes it possible to construct the
suitably deformed free particle measure
\begin{equation}
\rd\mu _{1}^{(\omega )}(\rho ):=\exp \left( -\ \int_{\mathbb{R}^{m}}\rd x\rho
(x)\left\langle x|\omega x\right\rangle \right) \rd\mu _{1}^{(1/2)}(\rho )
\label{FQeq4.5e}
\end{equation}%
on the one-particle functional manifold $\mathcal{M}_{1},$  for which
the following expression
\begin{equation}
(\Omega |\mathrm{\hat{H}}^{(\omega )}\mathrm{|U}(f)|\Omega )=\int_{\mathcal{M%
}}\exp [\ri\rho (f)]\rd\mu _{1}^{(\omega )}(\rho )\   \label{FQeq4.5f}
\end{equation}%
holds for any test function $f\in F.$ \ The latter, jointly with the
related ground state condition $\ |\Omega )=1\in \Phi _{\rho },$ makes it
possible to easily calculate the scalar product elements
\begin{equation}
(\mathrm{U}(f_{1})\Omega |\mathrm{\hat{H}}^{(\omega )}\mathrm{|U}%
(f_{2})|\Omega )=\int_{\mathbb{R}^{m}}\exp [\ri f_{1}(c)+\ri f_{2}(c)]\exp \left(
-\left\langle c|\omega c\right\rangle \right) \frac{\rd c}{(2\piup )^{m}}
\label{FQeq4.6}
\end{equation}%
for any test functions $f_{1},f_{2}\in F.$ The expression (\ref{FQeq4.6})
makes it possible to successfully calculate the matrix elements $(\rho
(f_{p_{1}})\Omega |\mathrm{\hat{H}}^{(\omega )}\mathrm{|}\rho
(f_{p_{2}})|\Omega )\ $\ of the Hamiltonian $\mathrm{\hat{H}}^{(\omega
)}:\Phi _{\rho }\rightarrow \Phi _{\rho }$ on the corresponding eigenvectors
$\ \rho (f_{p})|\Omega )\in \Phi _{\rho }$ for arbitrary $p=p_{1},p_{2}\in
\mathbb{N}\ $and, therefore, to find its spectrum.

Consider now the operator (\ref{FQeq2.36}) taking into account the
analytical current representation (\ref{FQeq3.15}) at $\sigma =1/2$:%
\begin{align}
K(x)\varphi _{n}^{(1/2)}(\rho )&=\left[\rho (x)\nabla \frac{\delta }{\delta \rho
(x)}-1/2\nabla \rho (x)\right]\varphi _{n}^{(1/2)}(\rho )+1/2\nabla \rho (x)\varphi _{n}^{(1/2)}(\rho ) \nonumber \\
&=\rho (x)\nabla \frac{\delta }{%
\delta \rho (x)}\varphi _{n}^{(1/2)}(\rho )%
\label{FQeq4.9}
\end{align}%
for any $n\in \mathbb{N}.$ Having substituted instead of $\varphi
_{n}^{(1/2)}(\rho )\in \Phi _{\rho }^{(n)},n\in \mathbb{N},$ the ground
state eigenfunction $\ \Omega (\rho )=1\in \Phi _{\rho }$, we can easily
retrieve the earlier derived expression (\ref{FQeq2.42}). Moreover, based
on the representation (\ref{FQeq4.5c}) and the definition (\ref{FQeq2.38}%
), one can calculate that
\begin{align}
K^{(\omega )}(x)\bar{\varphi}_{0}^{\,(1/2)}(\rho ) =\left[ \rho (x)\nabla
\frac{\delta }{\delta \rho (x)}+\omega x\rho (x)\right] \bar{\varphi}%
_{0}^{\,(1/2)}(\rho )=0  
=A^{(\omega )}(x;\rho )\bar{\varphi}^{\,(1/2)}(\rho ),  \label{FQeq4.9a}
\end{align}%
where $\bar{\varphi}_{0}^{\,(1/2)}(\rho )=\exp [-\frac{1}{2}\int_{\mathbb{R}%
^{m}}\left\langle x|\omega x\right\rangle \rho (x)\rd x]\in \Phi _{\rho
}^{(1/2)}\simeq \Phi _{\rho }.\ $\ The latter means, in particular, that the
corresponding multiplication operator $A^{(\omega )}(x;\rho )=0,$ or,
respectively,
\begin{equation}
K(x)\bar{\varphi}_{0}^{\,(1/2)}(\rho ):=A(x;\rho )\bar{\varphi}%
_{0}^{\,(1/2)}(\rho )=-\omega x\rho (x)\bar{\varphi}_{0}^{\,(1/2)}(\rho ),
\label{FQeq4.9b}
\end{equation}%
where $\ \bar{\varphi}_{0}^{\,(1/2)}(\rho ):=|\Omega (\rho ))\in \Phi _{\rho }$
is the corresponding ground state vector in $\Phi _{\rho }\ $for the
oscillatory Hamiltonian operator \ (\ref{FQeq4.5a}). Making use of the
operator (\ref{FQeq4.5e}), based on expression (\ref{FQeq2.41a}), one
can present a special solution to the functional equation (\ref{FQeq2.41})
in the form
\begin{equation}
\mathcal{L}(f)=\exp \left( -\int_{\mathbb{R}^{m}}\rd x\left\langle \omega
x|x\right\rangle \frac{1}{2\ri}\frac{\delta }{\delta f(x)}\right) \exp \left(
\bar{\rho}\int_{\mathbb{R}^{m}}\{\exp [\ri f(x)]-1\}\rd x\right) ,
\label{FQeq4.10}
\end{equation}%
confirming similar statements from \cite{GoGrPoSh,GoMeSh-1,MeSh}.

\section{Conclusion}

In the work we succeeded in developing an effective algebraic scheme of
constructing density operator and density functional representations for
the local quantum current algebra and its application to quantum Hamiltonian
and symmetry operators reconstruction. We analyzed the corresponding
factorization structure for quantum Hamiltonian operators, governing
spatially many- and one-dimensional integrable dynamical systems. The
quantum generalized oscillatory, Calogero-Moser-Sutherland and nonlinear Schr\"{o}dinger models of spinless bose-particles were analyzed in detail. The
central vector of the density operator current algebra representation
proved to be the ground vector state of the corresponding completely
integrable factorized quantum Hamiltonian system in the classical Bethe
anzatz form. The latter makes it possible to classify quantum completely
integrable Hamiltonian systems a priori allowing the factorized form and
whose groundstate is of the Bethe anzatz from. These and related aspects of
the factorized and completely integrable quantum Hamiltonian systems are
planned to be studied in other place.

\section{Acknowledgements}

Authors would like to convey their warm thanks to Prof. Gerald A. Goldin for
many discussions of the work and instrumental help in editing a manuscript
during the XXVIII International Workshop on ``Geometry in Physics'', held on
30.06--07.07.2019 in Bia\l owie\.{z}a, Poland. They also are cordially
appreciated to Profs. Joel Lebowitz, Denis Blackmore and Nikolai N.
Bogolubov (Jr.) for instructive discussions, useful comments and remarks on
the work. A special authors' appreciation belongs to Prof.~Joel Lebowitz
for the invitation to take part in the 121-st Statistical Mechanics
Conference, held on May 12--14, 2019 in the Rutgers University, New
Brunswick, NJ, USA. Personal A.P.'s acknowledgement belongs to the
Department of Physics, Mathematics and Computer Science of the Cracow
University of Technology for a local research grant F-2/370/2018/DS.

\ukrainianpart

\title{Представлення алгебри струмів для квантових багаточастинкових гамільтонових систем типу Шредінгера, їх факторизована структура та інтегровність}
\author{Д. Пророк\refaddr{label1}, А.К. Прикарпатський\refaddr{label2}}
\addresses{
\addr{label1} Факультет фізики та прикладної інформатики, Університет науки та технологій,  Краків, Польща
\addr{label2} Факультет фізики, математики та інформатики, Краківський технологічний університет, \\Краків, Польща}

\makeukrtitle

\begin{abstract}
\tolerance=3000%
Розвинута ефективна  схема дослідження  представлень квантової  алгебри струмів та
алгебраїчної реконструкції  факторизованих  квантових операторів  Гамільтона  та їх симетрій у просторі типу Фока.  Її застосовано  до квантових  операторів Гамільтона та їх симетрій  у випадку квантових  інтегровних  просторово одно- та багатовимірних  динамічних систем. В якості прикладів нами  детально вивчено факторизовану структуру  гамільтонових операторів, що описують такі квантові інтегровні просторово багатовимірні та одновимірні моделі як узагальнені осциляторні,   модель Калоджеро-Мозера-Сазерленда, типу Кулона та  нелінійна динамічна система  Шредінгера для системи безспінових бозе-частинок.%
\keywords простір Фока, представлення алгебри струмів, реконструкція гамільтоніана, породжуючий функціонал Боголюбова, модель Калоджеро-Мозера-Сазерленда, квантова інтегровність, квантові симетрії

\end{abstract}


\begin{thebibliography}{99}

\bibitem{Takh} Takhtajan L.A., Quantum Mechanics for Mathematicians,
American Mathematical Society, Providence, USA, 2008.  

\bibitem{Bere} Berezin F.A., The Method of Second Quantization, Academic
Press, New York, London, 1966. 

\bibitem{BoBo} Bogolubov N.N., Bogolubov N.N.~(Jr.), Introduction to
Quantum Statistical Mechanics, World Scientific, New Jersey, 1986. 

\bibitem{Aref} Aref'eva I.Ya., Theor. Math. Phys., 1972, \textbf{10}, 146--155, \doi{10.1007/BF01090726}.

\bibitem{Gold} Goldin G.A., J. Math. Phys., 1971, \textbf{12}, 462--487, \doi{10.1063/1.1665610}.

\bibitem{Gold-1} Goldin G.A., In: Contemporary Problems in Mathematical Physics: Proceedings
of the Third International Workshop (Cotonou, Republic of Benin, 2003), Govaerts~J., Hounkonnou~M.N., Msezane~A.Z.~(Eds.), World Scientific, New Jersey, 2004, 3--93, \doi{10.1142/9789812702487_0001}.

\bibitem{GoGrPoSh} Goldin G.A., Grodnik J., Powers R.T., Sharp D.H., J. Math. Phys., 1974, \textbf{15}, 88--100, \doi{10.1063/1.1666513}.

\bibitem{GoMeSh-1} Goldin G.A., Menikoff R., Sharp D.H., Phys. Rev. Lett., 1983, \textbf{51}, 2246--2249,\\ \doi{10.1103/PhysRevLett.51.2246}.

\bibitem{GoMeSh-2} Goldin G.A., Menikoff R., Sharp D.H., J. Math. Phys., 1981, \textbf{22}, 1664--1668, \doi{10.1063/1.525110}.

\bibitem{PaScWr} Pardee W.J., Schlessinger L., Wright J., Phys. Rev., 1968, \textbf{175}, 2140, \doi{10.1103/PhysRev.175.2140}.   

\bibitem{Bere-1} Berezanskii Yu.M., Expansions in Eigenfunctions of
Selfadjoint Operators, American Mathematical Society, Providence, USA, 1968.   

\bibitem{BeSh} Berezin F.A., Shubin M.A., The Schr\"odinger Equation, Springer
Science+Business Media, Dordrecht, 2012.  

\bibitem{GeVi} Gelfand I., Vilenkin N., Generalized Functions,  Academic
Press, New York, 1964.

\bibitem{BeKo} Berezansky Y.M., Kondratiev Y.G., Spectral Methods in
Infinite-Dimensional Analysis, Springer Science+Business Media, Dordrecht, 1995.  

\bibitem{BoPr} Bogolubov N.N.~(Jr.), Prykarpatsky A.K., Phys.
Elem. Part. At. Nucl., 1986, \textbf{17}, 791--827 (in Russian).

\bibitem{PrBoGoTa} Prykarpatsky A., Bogoliubov N.~(Jr.), Golenia J., Taneri~U., Int. J. Theor. Phys., 2008, \textbf{47}, 2882--2897, \doi{10.1007/s10773-008-9721-2}. 

\bibitem{ReSi-2} Reed M., Simon B., Theory of Operators, Academic Press, San Diego, 1978.  

\bibitem{BlPrSa} Blackmore D., Prykarpatsky A.K., Samoylenko V.Hr.,
Nonlinear Dynamical Systems of Mathematical Physics: Spectral and
Differential-Geometrical Integrability Analysis, World Scientific Publishing, New Jersey,
USA, 2011.

\bibitem{FaYa} Faddeev L.D., Yakubovskii O.A., Lectures on Quantum
Mechanics for Mathematics Students, American Mathematical Society, Providence, USA, 2009.  

\bibitem{PrMy} Prykarpatsky A., Mykytiuk I., Algebraic Integrability of
Nonlinear Dynamical Systems on Manifolds: Classical and Quantum Aspects,
Springer Science+Business Media, Dordrecht, 1998.  

\bibitem{Dira} Dirac P.A.M., The Principles of Quantum Mechanics, Clarendon Press, Oxford, 1935.

\bibitem{Fock} Fock V., Z. Phys., 1932, \textbf{75}, 622--647, \doi{10.1007/BF01344458}.

\bibitem{PrTaBo} Prykarpatsky A.K., Taneri U., Bogolubov N.N.~(Jr.),
Quantum Field Theory and Application to Quantum Nonlinear Optics, World
Scientific, New York, 2002.

\bibitem{AlKoSt} Albeverio S., Kondratiev Y.G., Streit L., How to
Generalize White Noice Analysis to Non-Gaussian Measures, Preprint Bi-Bo-S,
Bielefeld, 1992.  

\bibitem{AlDaKoLy} Albeverio S., Daletskii A., Kondratiev Yu., Lytvynov E., J. Geom. Phys., 2003, \textbf{47}, 259--302,\\ \doi{10.1016/S0393-0440(02)00221-8}.

\bibitem{Arak} Araki H., J. Math. Phys., 1960, \textbf{1}, 492--504, \doi{10.1063/1.1703685}.

\bibitem{Meni} Menikoff R., J. Math. Phys., 1974, \textbf{15}, 1394--1408, \doi{10.1063/1.1666822}.

\bibitem{GoSh-3} Goldin G.A., Sharp D.H., In: Group Representations in Mathematics and Physics, Bargmann~V.~(Ed.), Springer-Verlag, Berlin, 1970, 300--311.    

\bibitem{Feen} Feenberg E., Ann. Phys., 1974, \textbf{84},  128--137, \doi{10.1016/0003-4916(74)90296-6}.

\bibitem{MeSh} Menikoff R., Sharp D., J. Math. Phys., 1975, \textbf{16}, 2341--2352, \doi{10.1063/1.522495}.

\bibitem{GoSh-2} Goldin G.A., Sharp D.H., In: Magic Without Magic: John Archibald Wheeler, Klauder J.R.~(Ed.),
Freeman, San Francisco, 1972, 171--185.  

\bibitem{Bere-2} Berezansky Y.M., Rep. Math. Phys., 1996, \textbf{38}, 289--300, \doi{10.1016/S0034-4877(97)84882-3}.

\bibitem{BeBe} Beckenbach E.F., Bellman R., Inequalities, Springer-Verlag,
Berlin, 1961.

\bibitem{Frie1} Friedrichs K.,  Math. Ann., 1934, \textbf{109}, 465--487, \doi{10.1007/BF01449150}. 
\bibitem{Frie2} Friedrichs K.,  Math. Ann., 1934, \textbf{109}, 685--713, \doi{10.1007/BF01449164}. 
\bibitem{Frie3} Friedrichs K.,  Math. Ann., 1935, \textbf{110}, 777--779, \doi{10.1007/BF01448058}.

\bibitem{Kato} Kato T., Perturbations Theory of Linear Operators, Springer-Verlag, Berlin, 1966.

\bibitem{ReSi-1} Reed M., Simon B., Functional Analysis, Academic Press, San Diego, 1987.   

\bibitem{LaVi} Lapointe L., Vinet L., Commun. Math. Phys., 1996, \textbf{178}, 425--152, \doi{10.1007/BF02099456}.

\bibitem{AnAvJe} Aniceto I., Avan J., Jevicki A., J. Phys. A: Math. Theor., 2010, \textbf{43}, 185201,\\ \doi{10.1088/1751-8113/43/18/185201}.

\bibitem{Dunk} Dunkl C.F., Trans. Am. Math. Soc., 1989, \textbf{311}, 167--183, \doi{10.1090/S0002-9947-1989-0951883-8}.

\bibitem{MiBoPrSa} Mitropolsky Yu.A., Bogolubov N.N., Prykarpatsky A.K.,
Samoylenko V.Hr., Integrable Dynamical Systems, Naukova Dumka, Kyiv, 1987, (in Russian).   

\bibitem{Skla} Sklyanin E.K., J. Sov. Math., 1982, \textbf{19}, 1546--1596, \doi{10.1007/BF01091462}.

\bibitem{SkTaFa} Sklyanin E.K., Takhtadzhyan L.A., Faddeev L.D., Theor. Math. Phys., 1979, \textbf{40}, 688--706,\\ \doi{10.1007/BF01018718}.












%
%
%
%
%
%
%
%
%
\end{thebibliography}
\end{document}